\newtheorem{theorem}{Theorem}
\newtheorem{proposition}{Proposition}
\newtheorem{corollary}{Corollary}
\newtheorem{lemma}{Lemma}
\theoremstyle{definition}
\theoremstyle{definition}
\newtheorem{definition}{Definition}
\theoremstyle{remark}
\newtheorem{remark}{Remark}
\title{Resource Leveling: Complexity of a UET two-processor scheduling variant and related problems}
\date{}
\author[1,2]{Bendotti, Pascale}
\author[2]{Brunod Indrigo, Luca}
\author[2]{Chrétienne, Philippe}
\author[2, 3]{Escoffier, Bruno}
\affil[1]{\small EDF R\&D, 7 boulevard Gaspard Monge, 91120 Palaiseau, France}
\affil[2]{\small Sorbonne Université, CNRS, LIP6 UMR 7606, 4 place Jussieu, 75005 Paris, France}
\affil[3]{\small Institut Universitaire de France}
\begin{document}

\maketitle

\abstract{
    This paper mainly focuses on a resource leveling variant of a two-processor scheduling problem.
    The latter problem is to schedule a set of dependent UET jobs on two identical processors with minimum makespan.
    It is known to be polynomial-time solvable.

    In the variant we consider, the resource constraint on processors is relaxed and the objective is no longer to minimize makespan.
    Instead, a deadline is imposed on the makespan and the objective is to minimize the total resource use exceeding a threshold resource level of two.
    This resource leveling criterion is known as the \emph{total overload cost}.
    Sophisticated matching arguments allow us to provide a polynomial algorithm computing the optimal solution as a function of the makespan deadline.
    It extends a solving method from the literature for the two-processor scheduling problem.

    Moreover, the complexity of related resource leveling problems sharing the same objective is studied.
    These results lead to polynomial or pseudo-polynomial algorithms or $NP$-hardness proofs, allowing for an interesting comparison with classical machine scheduling problems.
}

   \textbf{Keywords:} Scheduling, resource leveling, complexity, matchings

\maketitle

\section{Introduction}


Most project scheduling applications involve renewable resources such as machines or workers.
A natural assumption is to consider that the amount of such resources is limited, as in the widely studied Resource Constrained Project Scheduling Problem (RCPSP).
In many cases however, the resource capacity can be exceeded if needed, yet at a significant cost, by hiring additional workforce for instance.
The field of scheduling known as resource leveling aims at modelling such costs:
while resource capacities are not a hard constraint, the objective function is chosen to penalize resource overspending or other irregularities in resource use.

\paragraph{Related works}
Resource leveling is a well studied topic in recent literature, as a variant of the RCPSP -- see \cite{Hartmann2022} for a survey.
Among the various leveling objective functions that are proposed, a very natural one is the total overload cost, used in \cite{Rieck2012, Rieck2015, Bianco2016, Atan2018, Verbeeck2017}.
The total overload cost, that is the resource use exceeding a given level, is suitable for modelling the cost of mobilizing supplementary resource capacity, the level representing a base capacity that should ideally not be exceeded.
Considering a single resource of level $L$ and denoting $r_\tau$ the resource request at time step $\tau$, the function writes $\sum_{\tau} \max(0, r_\tau - L)$ (see the hatched part in Figure~\ref{fig:objective_function}).
Other notable examples of leveling objective functions introduced in the literature are weighted square capacities \citep{Christodoulou2015, Ponz2017a, Ponz2017b, Rieck2012, Rieck2015, Qiao2018, Li2018a, Li2018b}, squared changes in resource request \citep{Qiao2018, Ponz2017a}, absolute changes in resource request \citep{Bianco2017, Ponz2017a, Rieck2015}, resource availability cost \citep{Ponz2017a, Rodrigues2015, Van2013, Zhu2017} and squared deviation from a threshold \citep{Qiao2018}.
\begin{figure}[h!]
    \centering
    \scalebox{0.9}{\includegraphics{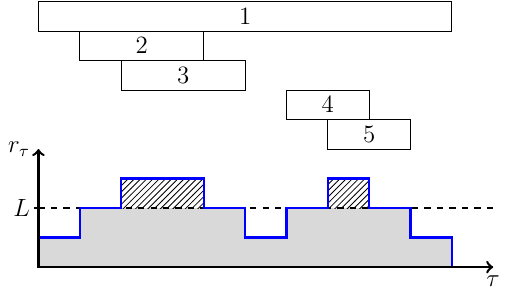}}
    \caption{Illustration of function $F$ for an instance with five jobs and resource level $L = 2$.}
    \label{fig:objective_function}
\end{figure}
\\
In terms of solving methods, the literature mainly provides heuristics \citep{Woodworth1975, Christodoulou2015,Zhu2017,Atan2018, Drotos2011}, metaheuristics \citep{Ponz2017a,Qiao2018,Li2018a,Van2013,Verbeeck2017} and exact methods  \citep{Easa1989,Demeulemeester1995, Ponz2017b, Rieck2012, Rieck2015,Bianco2017,Rodrigues2015, Bianco2016,Atan2018, Drotos2011}, but lacks theoretical results on the computational complexity of these problems. 
This work aims at providing such complexity results.

\paragraph{Contributions}
In \cite{Neumann1999}, resource leveling problems under generalized precedence constraints and with various objectives, including total overload cost, are shown to be strongly $NP$-hard using reductions from $NP$-hard machine scheduling problems of the literature.
The present work provides $NP$-hardness proofs based on the same idea, yet it tackles a large panel of more specific problems, including fixed-parameter special cases, in order to refine the boundary between $NP$-hardness and polynomial-time tractability. 
To some extent, resource leveling with total overload cost minimization shares similarities with late work minimization \citep{Sterna2011} in that portions of jobs exceeding a limit (in resource use or in time respectively) are penalized.
In that respect, a special case of resource leveling problem is proven equivalent to a two-machine late work minimization problem (see Section \ref{section:L2_Cmax_F}).
The latter late work minimization problem is studied in \cite{Chen2016} where it is shown to be $NP$-hard and solvable in pseudo-polynomial time.
Further approximation results on the early work maximization version of the same problem include a PTAS \citep{Sterna2017} and an FPTAS \citep{Chen2020}.
A similar equivalence between a resource leveling problem and a late work minimization problem is shown in \cite{Gyorgyi2020} by exchanging the roles of time and resource use (each time step becoming a machine and conversely).
A common approximation framework for both problems is proposed as well as an $NP$-hardness result that is included as such in this work to complement the range of investigated problems (see Section \ref{section:table}).

The core result in this work is a polynomial-time algorithm to minimize the total overload cost with resource level $L = 2$ in the case of unit processing times and precedence constraints.
This problem can be seen as the leveling counterpart of the well-known machine scheduling problem denoted $P2 | prec, p_i = 1 | C_{max}$ in the standard three-field Graham notation \citep{Graham1979}.
The algorithm relies on polynomial methods to solve $P2 | prec, p_i = 1 | C_{max}$ proposed in \cite{Fujii1969} and \cite{Coffman1971}.
Other resource leveling problems with various classical scheduling constraints are investigated: bounded makespan, precedence constraints (arbitrary or restricted to in-tree precedence graphs), release and due dates (with or without preemption).
Polynomial algorithms are provided for some cases, notably using flow reformulations or inspired from machine scheduling algorithms.

\paragraph{Outline}
This paper is organized as follows.
Section \ref{section:problem_description_and_notations} gives a general description of resource leveling problems and defines notations.
Section \ref{section:L2_prec} describes the polynomial-time algorithm for the main problem studied in this work: optimizing the total overload cost with level $L = 2$, unit processing times and precedence constraints.
Section \ref{section:other_polynomial_cases} provides complementary tractable cases among resource leveling problems with classical scheduling constraints.
Section \ref{section:NP_hardness_results} gives $NP$-hardness results.
Section \ref{section:table} summarizes the complexity results obtained in this work.
Section \ref{section:conclusion} gives elements of conclusion and perspectives.
The main notations used in this work are listed in Table \ref{tab:notations} at the end of the article.

\section{Problem description and notations} \label{section:problem_description_and_notations}

Even though this work deals with one main problem with specific constraints and parameters, this section gives a rather general description of resource leveling problems as well as convenient notations to designate them.
This will prove to be useful in differentiating the various other problems for which complexity results are also provided.

The scheduling problems considered in this work involve a single resource and a set of jobs $J$ with processing times $p \in \mathbb{N}^J$ and resource consumptions $c \in \mathbb{N}^J$.
A schedule is a vector of job starting times $x \in \mathbb{R}_+^J$.
The starting time of job $i \in J$ in schedule $x$ is denoted $x_i$.
Note that since parameters are integer, solutions with integer dates are dominant for the considered objective function in the non-preemptive case \citep{Neumann1999}.
In the sequel, only schedules with integer values will therefore be considered (with the exception of Section \ref{section:L2_ri_di_pmtn_F}).
For an integer date $\tau \in \mathbb{N}$, \emph{time step} $\tau$ will designate the time interval of size one starting at $\tau$, namely $[\tau, \tau+1)$.
\\
For the sake of readability, the usual three-field Graham notation $\alpha | \beta | \gamma$  introduced in \cite{Graham1979} for scheduling problems will now be extended with resource leveling parameters.

The $\alpha$ field is used to describe the machine environment of a scheduling problem.
In the context of resource leveling, the machine environment parameter is replaced with a resource level $L \in \mathbb{N}$.
In the sequel, the resource leveling problems with resource level $L$ are denoted $L | . | .$.
Following the case of machine scheduling, notations such as $L1 | . | .$ and $L2 | . | .$ will be used when $L$ is fixed.

The $\beta$ field contains information about constraints and instance specificities.
Since resource consumptions $c \in \mathbb{N}^J$ are introduced for resource leveling instances, the $\beta$ field should allow for restrictions on those values.
For example, $c_i = 1$ will denote the case of unit $c_i$ values.
In this work, the constraints of the $\beta$ field typically include a deadline on the makespan, denoted $C_{max} \leq M$.

The $\gamma$ field describes the objective function.
In this work, the amount of resource that fits under resource level $L$ (see the grey area in Figure~\ref{fig:objective_function}) is used as objective function, denoted $F$, instead of the total overload cost itself.
The two quantities being complementary, the complexity is equivalent and the minimization of the total overload is turned into a maximization problem.
This eases the interpretation of the criterion in terms of the structures used for solution methods (e.g., size of a matching or value of a flow).
Given an integer schedule $x$, denoting $r_\tau(x)$ the amount of resource required at time step $\tau$ in $x$ and $C_{max}(x)$ the makespan of $x$, function $F$ writes:
\[F(x) = \sum_{\tau = 0}^{C_{max}(x) - 1} \min(L, r_\tau(x))\]
Note that $F$ generalizes to non-integer schedules as follows:
\[F(x) = \int_{\tau = 0}^{C_{max}(x)} \min(L, r(\tau, x))d\tau\]
where $r(\tau, x)$ is the resource use at time $\tau$ in $x$.

\section{A polynomial algorithm for $L2 | prec, C_{max} \leq M, c_i = 1, p_i = 1 | F$} \label{section:L2_prec}

This section is dedicated to the core result of this work: solving Problem $L2 | prec, C_{max} \leq M, c_i = 1, p_i = 1 | F$ in polynomial time.
Given a set of jobs $J$ with unit processing times and unit resource consumption and a precedence graph $G = (J, \mathcal{A})$ with set of arcs $\mathcal{A}$, the problem is to find a feasible schedule $x$ with makespan at most $M$ such that $F(x)$ is maximized for a resource level $L = 2$.
Figure \ref{fig:L2_UET_precedence_graph_and_schedule} shows an example of precedence graph with optimal schedules for $M = 5$, $M = 6$ and $M = 7$.

\begin{figure}[h!]
    \centering
    \begin{subfigure}[c]{0.45\textwidth}
        \centering
        \scalebox{0.9}{\includegraphics{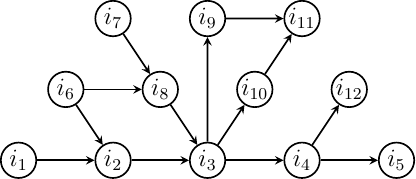}}
        \caption{Precedence graph $G$}
        \label{fig:L2_UET_precedence_graph}
    \end{subfigure}
    \begin{subfigure}[c]{0.45\textwidth}
        \centering
        \scalebox{0.9}{\includegraphics{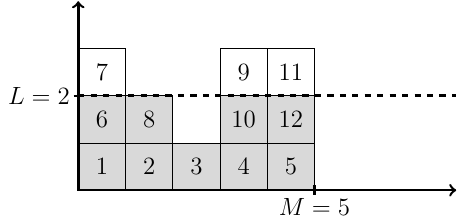}}
        \caption{Optimal schedule for $M = 5$}
        \label{fig:L2_UET_schedule_0}
    \end{subfigure}
    \begin{subfigure}[c]{0.45\textwidth}
        \centering
        \scalebox{0.9}{\includegraphics{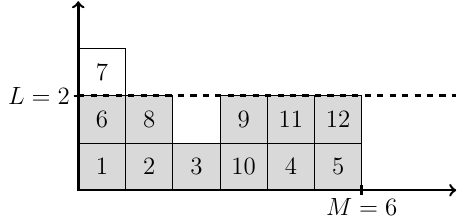}}
        \caption{Optimal schedule for $M = 6$}
        \label{fig:L2_UET_schedule_1}
    \end{subfigure}
    \begin{subfigure}[c]{0.45\textwidth}
        \centering
        \scalebox{0.9}{\includegraphics{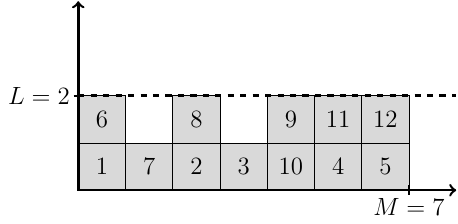}}
        \caption{Optimal schedule for $M = 7$}
        \label{fig:L2_UET_schedule_2}
    \end{subfigure}
    \caption{Example of precedence graph and optimal schedule for $L2 | prec, C_{max} \leq M, c_i = 1, p_i = 1 | F$}
    \label{fig:L2_UET_precedence_graph_and_schedule}
\end{figure}

For convenience, it will be assumed w.l.o.g. that there are no idling time steps, i.e., time steps where no job is processed, before the makespan of considered schedules.
With the above assumption, a schedule can be seen as a sequence of columns, a column being a set of jobs scheduled at the same time step.
Note that if two jobs $i$ and $j$ belong to the same column, there can be no path from $i$ to $j$ or from $j$ to $i$ in the precedence graph, in other words, $i$ and $j$ must be independent.

Problem $L2 | prec, C_{max} \leq M, c_i = 1, p_i = 1 | F$ is closely related to its classical counterpart in makespan minimization: $P2 | prec, p_i=1 | C_{max}$.
Among different approaches of the literature to solve $P2 | prec, p_i=1 | C_{max}$ in polynomial time, a method relying on maximum matchings between independent jobs is proposed in \cite{Fujii1969}.
Since independence between jobs is required for them to be scheduled at the same time step, matching structures naturally appear in a two-machine environment.
The results of this section will show that this is still true in the context of resource leveling.
Another approach is proposed in \cite{Coffman1971} with a better time complexity.
The idea of this second method is to build a priority list of the jobs based on the precedence graph and then apply the corresponding list-scheduling algorithm.
The optimality is proved using a block decomposition of the resulting schedule.
In broad outline, the convenient matching structure of the method from \cite{Fujii1969} is used to prove the existence of schedules that satisfy certain objective values while the more efficient method from \cite{Coffman1971} is used to actually build those schedules.

Two graphs will be considered as support for matchings, both of them being deduced from the precedence graph $G$, as illustrated in Figure \ref{fig:L2_UET_graphs}.
First, let $\widetilde{G} = (J, E)$ be the \emph{independence graph} where $E = \{\{i, j\} \in J^2 \mid i \text{ independent from } j\}$ -- see Figure \ref{fig:L2_UET_independence_graph}.
A critical path in a longest path in the precedence graph.
It is a well known result that the sum of processing times along a critical path is the minimum feasible makespan with respect to precedence constraints.
Let then $P$ be a critical path in $G$.
Consider the bipartite graph $\widetilde{G}_P = (P \cup (J \setminus P), E_P)$ where $E_P = \{\{i, j\} \in E| i \in P, j \in J \setminus P\}$, as shown in Figure \ref{fig:L2_UET_bipartite_independence_graph} where the jobs of $P$ are in gray.
Note that in Figure \ref{fig:L2_UET_graphs}, independence relations are represented with dashed lines as opposed to solid lines for precedence relations.
\begin{figure}[h!]
    \centering
    \begin{subfigure}[c]{0.45\textwidth}
        \centering
        \scalebox{0.9}{\includegraphics{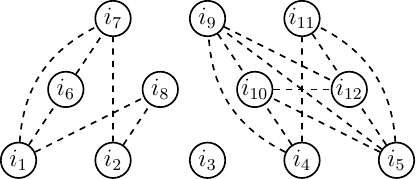}}
        \caption{Independence graph $\widetilde{G}$}
        \label{fig:L2_UET_independence_graph}
    \end{subfigure}
    \begin{subfigure}[c]{0.45\textwidth}
        \centering
        \scalebox{0.9}{\includegraphics{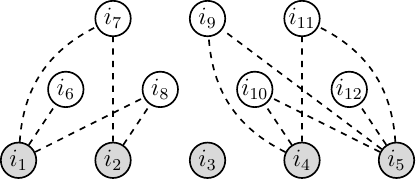}}
        \caption{Bipartite independence graph $\widetilde{G}_P$}
        \label{fig:L2_UET_bipartite_independence_graph}
    \end{subfigure}
    \caption{Examples of graphs for $L2 | prec, C_{max} \leq M, c_i = 1, p_i = 1 | F$}
    \label{fig:L2_UET_graphs}
\end{figure}

Let $m^*_P$ and $m^*$ denote the sizes of a maximum matching in $\widetilde{G}_P$ and $\widetilde{G}$ respectively.
The inequality $m^*_P \leq m^*$ is always verified since $E_P$ is a subset of $E$.
In the example of Figure \ref{fig:L2_UET_graphs}, $m^*_P = 4$ and $m^* = 5$.
\\
The aim is to show that the optimal value of objective function $F$ can be computed for any $M$ using only $m^*_P$, $m^*$, $|J|$ and $|P|$ -- and that an associated optimal schedule can be computed in polynomial time.
More precisely, it will be shown that the optimal objective value as a function of $M$, denoted $F^*$, is piecewise linear with three distinct segments as shown in Figure \ref{fig:L2_prec_objective_function_of_M}.
Note that if $M < |P|$ there is no feasible schedule.

\begin{figure}[h!]
    \centering
    \scalebox{0.7}{\includegraphics{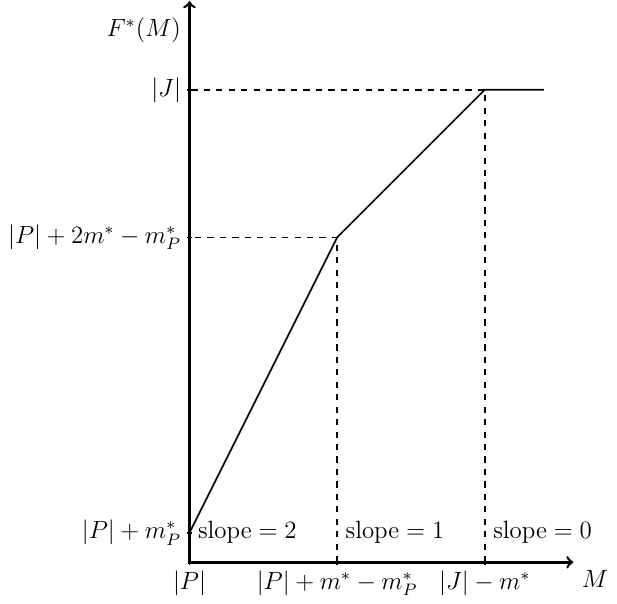}}
    \caption{Optimal objective value as a function of $M$}
    \label{fig:L2_prec_objective_function_of_M}
\end{figure}

The constant part of the function on the right coincides with the result of \cite{Fujii1969} stating that the minimum makespan to schedule the project on two machines is $|J| - m^*$.
For $M \geq |J| - m^*$, $F^*$ equals its upper bound $|J|$ -- all jobs fit under resource level $L = 2$.
The contribution of this work is therefore to provide the values of $F^*(M)$ for $|P|\leq M \leq |J| - m^*$.

Note that the schedule of Figures \ref{fig:L2_UET_schedule_0} corresponds to an optimal solution for $M = |P| = 5$.
Its value is $9$, which is indeed equal to $|P| + m^*_P = 5 + 4$.
If the deadline is increased by one, i.e., $M = 6$, then two more jobs can be added under the resource level two, i.e., the optimal value is $11$, as shown in Figure \ref{fig:L2_UET_schedule_1}.
It corresponds to the first breaking point in Figure \ref{fig:L2_prec_objective_function_of_M}, as $M = 6 = |P| + m^*-m^*_P$ and $11=|P|+2m^*-m^*_P$.
As it can be seen in Figure \ref{fig:L2_UET_schedule_2}, when $M = 7$, all jobs can be scheduled under the resource level.
It corresponds to the second breaking point in Figure \ref{fig:L2_prec_objective_function_of_M}, as $M = 7 = |J| - m^*$.

The proposed approach is in two steps:
\begin{enumerate}
    \item Find an optimal solution for $M = |P|$;
    \item Find recursively an optimal solution for $M + 1$ from an optimal solution for $M$.
\end{enumerate}

The following definition introduces the \emph{augmenting sequence of a schedule} in the independence graphs $\widetilde{G}$ and in the bipartite independence graph $\widetilde{G}_P$.

\begin{definition}[augmenting sequence] \label{def:augmenting_sequence}
    A sequence $(i_0, j_1, i_1, j_2, i_2, \dots, j_r, i_r, j_{r+1})$ of distinct jobs is an \emph{augmenting sequence} for $(\widetilde{G}, x)$ (resp. for $(\widetilde{G}_P, x)$) if:
    \begin{itemize}
        \item[-] $(i_0, j_1, i_1, j_2, i_2, \dots, j_r, i_r, j_{r+1})$ is a path in $\widetilde{G}$ (resp. in $\widetilde{G}_P$);
        \item[-] the number of jobs scheduled at $x_{i_0}$ is not two;
        \item[-] the number of jobs scheduled at $x_{j_{r+1}}$ is not two;
        \item[-] for every $q \in \{1, \dots, r\}$, $x_{i_q} = x_{j_q}$.
    \end{itemize}
\end{definition}

The name augmenting sequence deliberately recalls the augmenting path used in the context of matching problems.
It will be seen in the sequel that both notions actually coincide to some extent since an augmenting sequence defines an augmenting path for a certain matching of jobs.

An example of augmenting sequence is illustrated in Figure \ref{fig:augmenting_sequence}.
The jobs of the sequence are surrounded by circles and connected with dashed arrows.

\begin{figure*}[h!]
    \centering
    \includegraphics{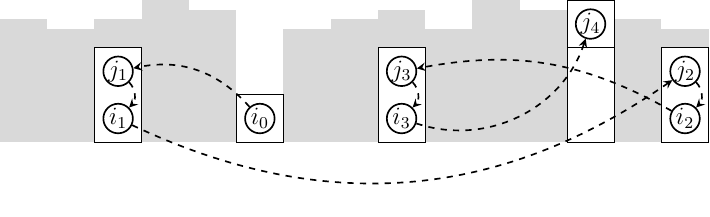}
    \caption{Example of augmenting sequence}
    \label{fig:augmenting_sequence}
\end{figure*}

\subsection{Solution for $M = |P|$} \label{section:resolution_Mmin}

The first step is to solve the problem for the minimum relevant value for $M$, that is to say when $M$ equals the length of a critical path in the precedence graph.
\\
The following definition describes an elementary operation, illustrated in Figure \ref{fig:elementary_operation}, that will be used in the sequel to improve the value of objective function $F$ incrementally.
\begin{definition}[elementary operation] \label{def:elementary_operation}
    Let $x$ be a feasible schedule for instance $I = (G, M)$ with $M = |P|$, where $P$ is a critical path of $G$.
    Let $a, b \in J$ be two independent jobs such that $x_a < x_b$ and such that $a$ is the only job scheduled at $x_a$.
    \\
    \underline{Decomposition:} Let us denote $\tau_1 < \tau_2 < \dots < \tau_K$ the time steps in $\{x_a + 1, \dots, x_b - 1\}$ where at least one predecessor of $b$ is scheduled, as well as $\tau_0 = x_a$ and $\tau_{K+1} = x_b$ for convenience.
    For each $k \in \{1, \dots, K\}$, the jobs scheduled at $\tau_k$ are partitioned into $\alpha_k$ and $\beta_k$, where $\beta_k$ are the predecessors of $b$ and $\alpha_k$ are the remaining jobs.
    Let us also denote, for each $k \in \{1, \dots, K+1\}$,  $A_k = \{i \in J \mid \tau_{k-1} < x_i < \tau_k\}$, the sets of jobs scheduled in between the $\tau_k$. 
    \\
    \underline{Translation:} The jobs of $\beta_k$ are translated from $\tau_k$ to $\tau_{k-1}$ for each $k \in \{1, \dots, K\}$ and job $b$ from $\tau_{K+1}$ to $\tau_K$.
    \begin{figure*}[h!]
        \centering
        \begin{subfigure}[c]{\textwidth}
            \centering
            \scalebox{0.9}{\includegraphics{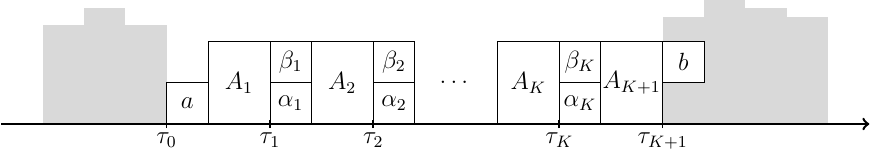}}
            \caption{Decomposition}
            \label{fig:elementary_operation_decomposition}
        \end{subfigure}
        \begin{subfigure}[c]{\textwidth}
            \centering
            \scalebox{0.9}{\includegraphics{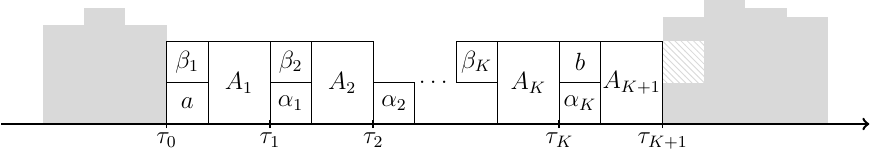}}
            \caption{Translation}
            \label{fig:elementary_operation_translation}
        \end{subfigure}
        \caption{Elementary operation}
        \label{fig:elementary_operation}
    \end{figure*}
\end{definition}
The decomposition and translation steps of Definition \ref{def:elementary_operation} are illustrated in Figure \ref{fig:elementary_operation_decomposition} and Figure \ref{fig:elementary_operation_translation} respectively.

The elementary operation of Definition \ref{def:elementary_operation} will now be shown to yield a feasible schedule and its impact on objective function $F$ will be quantified.
\begin{proposition} \label{prop:elementary operation}
    Let $x$ be a feasible schedule for instance $I = (G, M)$ with $M = |P|$, where $P$ is a critical path of $G$.
    Let $a, b \in J$ be two independent jobs such that $x_a < x_b$ and such that $a$ is the only job scheduled at $x_a$.
    \\
    The schedule $x'$ resulting from the elementary operation on $a$ and $b$ is feasible and it verifies:
    \[F(x')=
    \begin{cases}
        F(x) &\text{if exactly two jobs are}\\
             &\text{scheduled at } x_b \text{ in } x\\
        F(x) + 1 &\text{otherwise} 
    \end{cases}
    \]
\end{proposition}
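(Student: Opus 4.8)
The plan is to exploit the rigidity forced by $M=|P|$, then verify feasibility, then count the variation of $F$ time step by time step. Write $P=(p_1\to\cdots\to p_{|P|})$. From $C_{max}(x)\le M=|P|$ and the precedence chain along $P$, job $p_\ell$ must start at time step $\ell-1$, so each of the time steps $0,\dots,|P|-1$ contains exactly one job of $P$. Since $a$ is the only job at $x_a$, this forces $a=p_{x_a+1}$. I would record three consequences for repeated use: (i) no ancestor of $b$ is a descendant of $a$ — otherwise $a\to\cdots\to b$, contradicting the independence of $a$ and $b$ — and in particular $a$ is not an ancestor of $b$; (ii) the critical-path job $p_{\tau_k+1}$ sitting at $\tau_k$ comes after $a$ on $P$, hence is not an ancestor of $b$ by (i), hence lies in $\alpha_k$, so $|\alpha_k|\ge 1$ for every $k\in\{1,\dots,K\}$ (while $|\beta_k|\ge 1$ by definition); (iii) likewise $p_{x_b+1}\ne b$, so at least two jobs are scheduled at $x_b$ in $x$.

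For feasibility, no job moves to a later time step, so $C_{max}(x')\le C_{max}(x)\le M$; moreover each previously non-empty time step stays non-empty ($\tau_0$ gains $\beta_1$; $\tau_k$ with $k<K$ trades $\beta_k$ for the non-empty $\beta_{k+1}$; $\tau_K$ trades $\beta_K$ for $\{b\}$; $x_b$ keeps a job by (iii)), so no idling is introduced. For the precedence constraints it is enough to check, for each job $w$ that is moved — the members of $\beta_1,\dots,\beta_K$, shifted from $\tau_k$ to $\tau_{k-1}$, and $b$, shifted from $\tau_{K+1}$ to $\tau_K$ — that all ancestors of $w$ stay strictly before it and all descendants strictly after it in $x'$. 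The crucial point is that every ancestor of such a $w$ is an ancestor of $b$: if it lies strictly between $x_a$ and $x_b$ it therefore occupies one of the time steps $\tau_{k'}$ (never an $A_k$), necessarily one strictly earlier than $w$'s original step, so after the shift it still precedes $w$'s new step; if it lies at time $\le x_a$ it cannot be $a$ by (i), so it sits strictly before $x_a$ and does not move. A symmetric argument handles descendants, using in addition that $b$ lands at $\tau_K\ge\tau_k$. Column-independence of $x'$ then follows automatically.

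To evaluate $F$, note that only the loads of $\tau_0,\dots,\tau_{K+1}$ change. Set $\phi(n)=\min(2,n)$ and let $n_b$ be the number of jobs at $x_b$ in $x$. By (ii) every old load $|\alpha_k|+|\beta_k|$ and every new load $|\alpha_k|+|\beta_{k+1}|$ (and $|\alpha_K|+1$ at $\tau_K$) is at least $2$, so $\phi$ equals $2$ on all of them; at $\tau_0$ the load rises from $1$ to $1+|\beta_1|\ge 2$; at $x_b$ it drops from $n_b$ to $n_b-1$. Summing the per-step differences,
\[F(x')-F(x)=(2-1)+\underbrace{0+\cdots+0}_{\tau_1,\dots,\tau_K}+\bigl(\phi(n_b-1)-\phi(n_b)\bigr),\]
which equals $1+(1-2)=0$ when $n_b=2$ and $1+(2-2)=1$ when $n_b\ge 3$. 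Since $n_b\ge 2$ always by (iii), "exactly two jobs at $x_b$" is the case $n_b=2$ and "otherwise" is the case $n_b\ge 3$, which gives the stated formula.

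The main obstacle is the feasibility check for the left-shifted jobs: one has to recognize that an ancestor of a shifted job lying in the window $(x_a,x_b)$ is itself an ancestor of $b$, hence sits at one of the $\tau_k$'s rather than in an $A_k$ — this is exactly what makes the index comparison produce a strict inequality — together with fact (i) that $a$ is never such an ancestor. For the $F$-count the only substantive input is observation (ii): each $\tau_k$ carries a critical-path job that cannot be an ancestor of $b$, forcing $|\alpha_k|\ge 1$, and it is precisely this that keeps $F$ from decreasing.
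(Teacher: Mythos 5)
Your proof is correct and follows essentially the same route as the paper's: the makespan bound $M=|P|$ forces one critical-path job per time step, so $a\in P$, the $P$-job at each $\tau_k$ is a successor of $a$ and hence lies in $\alpha_k$ (giving $|\alpha_k|\ge 1$ and ruling out $b$ being alone at $x_b$), after which the change in $F$ reduces to the case analysis on the load at $x_b$. The only difference is that you spell out the precedence-feasibility check for the shifted jobs in more detail than the paper, which is a welcome but not substantively different elaboration.
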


\begin{proof}
    Let us first show that schedule $x'$ is feasible.
    It is clear that the elementary operation does not increase the makespan of the schedule, thus respecting the makespan bound $M = |P|$.
    By construction, all predecessors of job $b$ scheduled between $x_a$ and $x_b$ are in $\bigcup_{k = 1}^{K} \beta_k$.
    As a consequence, there is no precedence arc going from a job of $\{a\} \cup \bigcup_{k = 1}^K \alpha_k \cup \bigcup_{k = 1}^{K + 1} A_k$ to a job of $\{b\} \cup \bigcup_{k = 1}^{K} \beta_k$.
    The translation of the elementary operation does no violate any precedence constraint.
    \\
    Let us now quantify the impact of the elementary operation on the objective function.
    Since the makespan bound is $M = |P|$, there is exactly one job of critical path $P$ at each time step in $\{0, 1, \dots, M-1\}$.
    In particular, since $a$ is the only job scheduled at $x_a$, it necessarily belongs to $P$.
    Furthermore, for every $k \in \{1, \dots, K\}$, a job of $P$ is scheduled at $\tau_k$.
    This job of $P$ is a successor of $a$ and therefore cannot be in $\beta_k$, otherwise $a$ and $b$ would not be independent, by transitivity.
    The job of $P$ must then be in $\alpha_k$, so $|\alpha_k| \geq 1$.
    Also, $|\beta_k| \geq 1$ for every $k \in \{1, \dots, K\}$ by construction.
    As a consequence:
    \begin{itemize}
        \item[-] For every $k \in \{1, \dots, K\}$, at least two jobs are scheduled at $\tau_k$ in $x'$, which was already the case in $x$
        \item[-] At least two jobs are scheduled at $x_a$ in $x'$ while there was only one in $x$
        \item[-] One less job is scheduled at $x_b$ in $x'$
        \item[-] Other time steps remain unchanged  
    \end{itemize}
    Job $b$ cannot be the only job scheduled at $x_b$, otherwise $b$ would belong to critical path $P$ and would not be independent from job $a$.
    The difference between $F(x)$ and $F(x')$ comes down to whether there are two or more jobs scheduled at $x_b$ in $x$:
    \begin{itemize}
        \item[-] If exactly two jobs are scheduled at $x_b$ in $x$, then the contribution to the objective function is increased by one in $x_a$ but decreased by one in $x_b$, so $F(x') = F(x)$
        \item[-] If at least three jobs are scheduled at $x_b$ in $x$, then the contribution to the objective function is increased by one in $x_a$ and remains unchanged in $x_b$, so $F(x') = F(x) + 1$
    \end{itemize}
\end{proof}

\begin{remark}
    Definition \ref{def:elementary_operation} and Proposition \ref{prop:elementary operation} assume that jobs $a$ and $b$ in the elementary operation are such that $x_a < x_b$.
    The case where $x_b < x_a$ can actually be handled exactly the same way, up to an inversion of precedence relations.
\end{remark}

The possibility to improve the objective value of a schedule with makespan $|P|$ using elementary operations will now be linked to the existence of a particular augmenting sequence.

\begin{lemma} \label{lemma:augmenting_sequence}
    Let $x$ be a feasible schedule for instance $I = (G, M)$ with $M = |P|$, where $P$ is a critical path of $G$.
    If there exists an augmenting sequence $\rho = (i_0, j_1, i_1, j_2, i_2, \dots, j_r, i_r, j_{r+1})$ for $(\widetilde{G}_P, x)$ such that:
    \begin{itemize}
        \item[-] $i_0$ is the only job scheduled at $x_{i_0}$;
        \item[-] at least three jobs are scheduled at $x_{j_{r+1}}$;
    \end{itemize}
    then there exists a schedule $x'$ with makespan $|P|$ such that $F(x') = F(x)+1$.
    \\
    Furthermore, such a schedule can be computed from $x$ in polynomial time.
\end{lemma}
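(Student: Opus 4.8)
The plan is to prove the statement by induction on $r$, after first replacing $\rho$ by an augmenting sequence of \emph{minimum length} among all augmenting sequences for $(\widetilde{G}_P, x)$ satisfying the two required conditions (one exists since $\rho$ does). By the Remark following Proposition~\ref{prop:elementary operation} it suffices to treat the case $x_{i_0} < x_{j_1}$, the other being symmetric. I would then apply the elementary operation of Definition~\ref{def:elementary_operation} to the pair $(i_0, j_1)$, which is legitimate because $i_0$ is the only job at $x_{i_0}$ and $i_0, j_1$ are independent; let $x'$ be the resulting schedule, feasible and still of makespan $|P|$ by Proposition~\ref{prop:elementary operation}. If at least three jobs are scheduled at $x_{j_1}$ in $x$, that same proposition gives $F(x') = F(x)+1$ and we are done (this also covers the base case $r=0$). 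So assume exactly two jobs sit at $x_{j_1}$, hence $F(x')=F(x)$; the goal is then to show that $\rho' = (i_1, j_2, i_2, \dots, j_r, i_r, j_{r+1})$ is an augmenting sequence for $(\widetilde{G}_P, x')$ still meeting the two required conditions, so that the induction hypothesis applied to $x'$ yields $x''$ with makespan $|P|$ and $F(x'') = F(x')+1 = F(x)+1$.

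The technical core is a careful account of what the operation on $(i_0, j_1)$ moves: only $j_1$ and certain predecessors of $j_1$ located at time steps strictly between $x_{i_0}$ and $x_{j_1}$, each moving strictly leftward and staying inside $[x_{i_0}, x_{j_1}]$; every time step outside $(x_{i_0},x_{j_1})$, as well as every time step in that range containing no predecessor of $j_1$, is left completely untouched. I would combine this with the basic fact that in any makespan-$|P|$ schedule the jobs of $P$ occupy steps $0,\dots,|P|-1$ in precedence order, so that for $u,v\in P$ one has $x_u<x_v \iff u\prec v$. With these two ingredients: (i) $i_1$ is alone at its step in $x'$, since the two jobs at $x_{j_1}=x_{i_1}$ in $x$ are exactly $i_1$ and $j_1$, and $i_1$ is not a predecessor of $j_1$ (otherwise $i_0\prec i_1\prec j_1$ contradicts $i_0\perp j_1$), so $i_1$ remains while $j_1$ leaves and nothing arrives; (ii) $x'_{i_m}=x'_{j_m}$ for $m\in\{2,\dots,r\}$, because neither job of the pair moves; (iii) the step $x_{j_{r+1}}$ still carries at least three jobs in $x'$.

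Minimality of $\rho$ is what delivers (ii) and (iii). For (ii): $i_m$ cannot move, since that would make it a predecessor of $j_1$ strictly between $x_{i_0}$ and $x_{j_1}$, forcing $i_0\prec i_m\prec j_1$; and if $j_m$ moved then $j_m\prec j_1$ with $x_{i_0}<x_{j_m}<x_{j_1}$, which forces $i_0\perp j_m$ (neither $i_0\prec j_m$, giving $i_0\prec j_1$, nor $j_m\prec i_0$, excluded by time steps), so $(i_0, j_m, i_m, j_{m+1}, \dots, j_{r+1})$ would be a strictly shorter augmenting sequence meeting both conditions --- impossible. For (iii): if $x_{j_{r+1}}\notin(x_{i_0},x_{j_1})$ the step is untouched; otherwise $x_{i_0}<x_{j_{r+1}}<x_{j_1}$, and then $i_0\perp j_{r+1}$ would give the shorter valid sequence $(i_0,j_{r+1})$, so $i_0\prec j_{r+1}$, hence $j_{r+1}\perp j_1$ and $j_{r+1}$ does not move; furthermore any job $c$ at step $x_{j_{r+1}}$ with $c\prec j_1$ must be off $P$ (the $P$-job $p$ there satisfies $i_0\prec p$, hence $p\not\prec j_1$) and satisfies $i_0\perp c$ by the same dichotomy, so $(i_0,c)$ would be a shorter valid sequence --- impossible; thus no job at step $x_{j_{r+1}}$ is a predecessor of $j_1$, that step is untouched, and it keeps its at least three jobs. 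Finally, each elementary operation is a local rearrangement computable in polynomial time, the recursion is linear of depth at most $r\le|J|$, and extracting a minimum-length valid augmenting sequence (by repeated shortcutting, or directly by a search in $\widetilde{G}_P$) is polynomial, so $x'$ is obtained from $x$ in polynomial time.

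I expect step (iii) to be the main obstacle: controlling the side effects of the elementary operation on the far endpoint $j_{r+1}$. The way out is exactly the minimum-length choice of $\rho$ --- any collateral damage at $x_{j_{r+1}}$ hands us a strictly shorter augmenting sequence with the same endpoint properties --- together with getting the case analysis on the position of $x_{j_{r+1}}$ relative to $x_{i_0}$ and $x_{j_1}$ precisely right, using in particular that these three time steps are pairwise distinct once we are in the regime of exactly two jobs at $x_{j_1}$.
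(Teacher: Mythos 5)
Your proof is correct and rests on the same basic mechanism as the paper's: induction on $r$, the elementary operation of Definition~\ref{def:elementary_operation} applied to the pair $(i_0,j_1)$, and the precedence/independence dichotomy (using that $i_0$ and the later jobs of $P$ are linearly ordered, and that $i_0\prec c\prec j_1$ is impossible) to manufacture shorter augmenting sequences. The organizational difference is that you impose minimum length of the valid augmenting sequence up front and then show the elementary operation disturbs nothing along the sequence, whereas the paper proceeds reactively: it truncates when an intermediate time step carries three jobs, applies the operation, and, when a co-scheduled pair $\{i_q,j_q\}$ is separated (the crossing configuration), discards the modified schedule and instead shortcuts the sequence in the original one before recursing. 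These are essentially dual ways of exploiting the same shortcut edge $\{i_0,j_q\}$. What your version buys is an explicit verification of the endpoint conditions needed for the recursion, namely that $i_1$ is left alone at its time step and that $x_{j_{r+1}}$ still carries at least three jobs after the translation; the paper treats the separation of an intermediate pair as the only possible failure mode of the residual sequence and leaves the far endpoint implicit, which is precisely the scenario (a predecessor of $j_1$ sitting at $x_{j_{r+1}}$ strictly between $x_{i_0}$ and $x_{j_1}$) that your minimality argument rules out. The price is the extra step of producing a minimum-length valid sequence, which, as you note, is polynomial (or can be replaced by repeated shortcutting), so the overall complexity claim is unaffected.
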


\begin{proof}
    The result will be proven using an induction on $r$.
    
    (initialization) For $r = 0$, the augmenting sequence reduces to two jobs $i_0$ and $j_1$ that are independent and scheduled at different dates.
    The elementary operation can be applied with $i_0$ as $a$ and $j_1$ as $b$.
    Since there are at least three jobs scheduled at $x_{j_1}$, the resulting schedule $x'$ is such that $F(x') = F(x) + 1$ according to Proposition \ref{prop:elementary operation}.

    (induction) Suppose that the property is verified up to the value $r-1$ for some $r \geq 1$.
    If for some $q \in \{1, \dots, r\}$ there are at least three jobs scheduled at $x_{i_q}$, then the sequence $(i_0, j_1, i_1, j_2, i_2, \dots, j_{q-1}, i_{q-1}, j_{q})$ is an augmenting sequence with the required properties on which the induction hypothesis can be applied.
    \\
    One can now suppose that for every $q \in \{1, \dots, r\}$, the only jobs scheduled at $x_{i_q}$ are $i_q$ and $j_q$.
    The idea is now to use the elementary operation of Definition \ref{def:elementary_operation} to reduce the size of the augmenting sequence.
    Assume that the elementary operation is applied using $i_0$ as $a$ and $j_1$ as $b$ and let $x''$ be the resulting schedule.
    If $(i_1, j_2, i_2, j_3, i_3, \dots, j_r, i_r, j_{r+1})$ is an augmenting sequence for $(\widetilde{G}_P, x'')$, then the induction can be applied.
    \\
    If not, there exists $q \geq 2$ such that $\{i_q, j_q\}$ coincide with $\alpha_k \cup \beta_k$ in the decomposition of Definition \ref{def:elementary_operation}, implying that $x''_{i_q} \neq x''_{j_q}$.
    In that case, schedule $x''$ would be irrelevant for the induction.
    Such a crossing configuration is illustrated in Figure \ref{fig:augmenting_path_crossing}.
    Moreover, let us show that $\alpha_k = \{i_q\}$ and $\beta_k = \{j_q\}$.
    Since the augmenting sequence $\rho$ is a path in the bipartite graph $\widetilde{G}_P$, it alternates between jobs of the critical path $P$ and jobs of $J \setminus P$.
    In a schedule of makespan $|P|$, exactly one job belongs to $P$ at each time step, so $i_0$ that is the only job scheduled at $x_{i_0}$ belongs to $P$ and so do $i_1, i_2, \dots, i_r$.
    By construction, in the decomposition of Definition \ref{def:elementary_operation}, only block $\alpha_k$ can contain successors of $i_0 = a$.
    Therefore, $i_q$ that is a successor of $i_0$ since it is in the critical path must be in $\alpha_k$.
    The only other job scheduled at $x_{i_q}$ is $j_q$ and it must belong to $\beta_k$ that cannot be empty.
    \\
    Having $\beta_k = \{j_q\}$ guarantees that $j_q$ is independent from $a = i_0$, so $\{i_0, j_q\}$ is an edge in $\widetilde{G}_P$.
    A shortcut can then be operated from $i_0$ to $j_q$, as illustrated in Figure \ref{fig:augmenting_path_crossing_shortcut}, yielding sequence $(i_0, j_q, i_q, j_{q+1}, i_{q+1}, \dots, j_{r}, i_{r}, j_{r+1})$ that is an augmenting sequence on which the induction hypothesis applies.

    \begin{figure*}[h!]
        \centering
        \begin{subfigure}[c]{\textwidth}
            \centering
            \scalebox{0.9}{\includegraphics{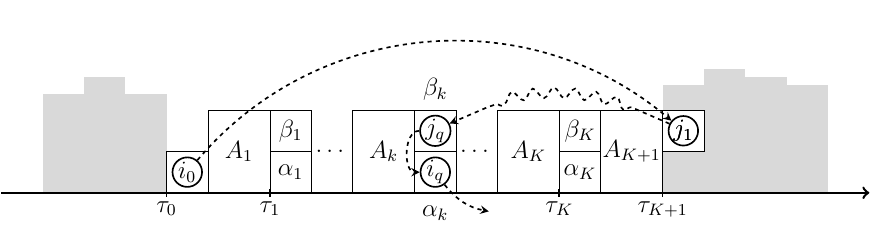}}
            \caption{Crossing configuration with $\{i_q\} = \alpha_k$ and $\{j_q\} = \beta_k$}
            \label{fig:augmenting_path_crossing}
        \end{subfigure}
        \begin{subfigure}[c]{\textwidth}
            \centering
            \scalebox{0.9}{\includegraphics{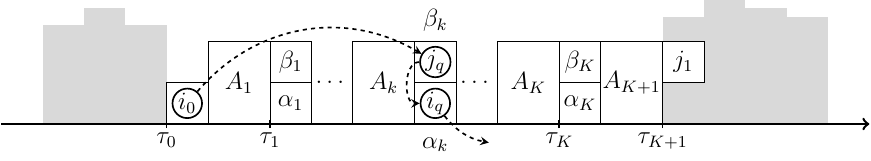}}
            \caption{Shortcut}
            \label{fig:augmenting_path_crossing_shortcut}
        \end{subfigure}
        \caption{Solving a crossing configuration in the augmenting sequence}
        \label{fig:solving_a_crossing_configuration_in_augmenting_path}
    \end{figure*}
\end{proof}

\begin{proposition} \label{prop:maximum_matching_Mmin}
    Any optimal solution $x^*$ of the instance $I = (G, M)$ with $M = |P|$ verifies $F(x^*) = |P| + m^*_P$.
    Furthermore, such an optimal schedule can be computed in polynomial time.
\end{proposition}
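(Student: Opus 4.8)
The plan is to prove the stated value in two halves: an upper bound $F(x)\le |P|+m^*_P$ valid for every feasible schedule of makespan $|P|$, and a matching lower bound that is attained by a schedule computable in polynomial time.

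For the upper bound I would first note that since $M=|P|$ and $P$ is a critical path, the makespan is forced to be exactly $|P|$ and the $|P|$ jobs of $P$ occupy the $|P|$ time steps $\{0,\dots,|P|-1\}$ one per step; hence $r_\tau(x)\ge 1$ for every such $\tau$ and
\[
F(x)\;=\;\sum_{\tau=0}^{|P|-1}\min(2,r_\tau(x))\;=\;|P|\;+\;\bigl|\{\tau:\ r_\tau(x)\ge 2\}\bigr|.
\]
To each column of size at least two I associate the edge of $\widetilde{G}_P$ joining its unique $P$-job to one arbitrarily chosen job of $J\setminus P$ in that column; jobs sharing a column are independent, so this edge lies in $E_P$, and since every job lies in exactly one column these edges form a matching $N_x$ of size $\bigl|\{\tau:\ r_\tau(x)\ge 2\}\bigr|$. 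Therefore that cardinality is at most $m^*_P$, giving $F(x)\le |P|+m^*_P$, in particular for an optimal $x^*$.

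For the lower bound I would start from the ``longest-path'' schedule $x^0$ with $x^0_i$ equal to the number of vertices on a longest path of $G$ ending at $i$, minus one; it is feasible and has makespan exactly $|P|$. Then I iterate the following step as long as $F(x)<|P|+m^*_P$: exhibit an augmenting sequence $\rho$ for $(\widetilde{G}_P,x)$ satisfying the two hypotheses of Lemma~\ref{lemma:augmenting_sequence} (its endpoint $i_0$ alone in its column, at least three jobs at $x_{j_{r+1}}$), and apply that lemma to get $x'$ with $F(x')=F(x)+1$. Since $F$ strictly increases and is capped by $|P|+m^*_P$, after at most $m^*_P\le |J|$ iterations we obtain a schedule with $F=|P|+m^*_P$, which together with the upper bound proves that every optimum has value $|P|+m^*_P$; and since each iteration is polynomial (searching for an augmenting path, plus the construction inside Lemma~\ref{lemma:augmenting_sequence}), this is a polynomial algorithm.

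The crux — and the step I expect to be the main obstacle — is producing the required augmenting sequence whenever $F(x)<|P|+m^*_P$. Here $N_x$ has size $F(x)-|P|<m^*_P$, so it is not a maximum matching of $\widetilde{G}_P$; by Berge's theorem there is an $N_x$-augmenting path $\pi$. Being of odd length in the bipartite graph $\widetilde{G}_P$, $\pi$ has one endpoint in $P$ and the other in $J\setminus P$, so after possibly reversing it I may write $\pi=(v_0,v_1,\dots,v_{2k+1})$ with $v_0\in P$. An $N_x$-exposed $P$-job is necessarily alone in its column (every column of size $\ge 2$ has its $P$-job matched), and an $N_x$-exposed job of $J\setminus P$ necessarily sits in a column of size at least three (a size-two column has its non-$P$-job matched). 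Setting $i_q:=v_{2q}$ and $j_q:=v_{2q-1}$, the non-matching edges of $\pi$ are edges of $\widetilde{G}_P$ and the matching edges join two jobs of one column, so all four items of Definition~\ref{def:augmenting_sequence} hold, as do the two extra hypotheses of Lemma~\ref{lemma:augmenting_sequence}. The delicate points to nail down are the orientation of $\pi$ (so the size-one end becomes $i_0$) and the exact dictionary between ``$N_x$-exposed vertex'' and the column-size conditions; once these are made precise the translation is immediate.
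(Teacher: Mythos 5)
Your proposal is correct and follows essentially the same route as the paper: the same upper bound obtained by reading off a matching in $\widetilde{G}_P$ from the columns of size at least two, and the same lower bound obtained by starting from the earliest schedule and repeatedly converting a Berge augmenting path of the column-induced matching into an augmenting sequence satisfying the hypotheses of Lemma~\ref{lemma:augmenting_sequence}. The "delicate points" you flag (orientation of the path, exposed $P$-job alone in its column, exposed non-$P$-job in a column of size at least three) are exactly the observations the paper makes, and your parenthetical justifications already settle them.
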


\begin{proof}
    Let us first show that the objective function value for a feasible schedule is at most $|P| + m^*_P$.
    Let $x$ be a feasible schedule with makespan $|P|$.
    Let $m$ be the number of time steps in $\{0, \dots, |P|-1\}$ with at least two jobs scheduled in $x$.
    It is clear that $F(x) = |P|+m$.
    By definition of a critical path, exactly one job of $P$ is scheduled at every time step of $\{0, \dots, |P|-1\}$.
    Given that two jobs scheduled at the same time step are necessarily independent, matching each job of the critical path with a job that is scheduled at the same date, when there is one, yields a matching in $\widetilde{G}_P$.
    In other words, a matching of size $m$ in $\widetilde{G}_P$ can be deduced from $x$, so $m \leq m^*_P$ and $F(x) \leq |P| + m^*_P$.
    \\
    It will now be shown that there exists a feasible schedule $x$ such that $F(x) = |P| + m^*_P$.
    Suppose that a given schedule $x$ is such that there are $m$ time steps in $\{0, \dots, |P|-1\}$ where at least two jobs are scheduled, with $m < m^*_P$.
    Again, $F(x) = |P|+m$ and it is possible to deduce from $x$ a matching $\mu$ of size $m$ in $\widetilde{G}_P$.
    This matching being not of maximum size, Berge's theorem \citep{Berge1957} states that there exists an augmenting path.
    Let $(i_0, j_1, i_1, j_2, i_2, \dots, j_r, i_r, j_{r+1})$ denote the sequence of jobs in the augmenting path, with $i_0 \in P$, $j_{r+1} \in J \setminus P$ and both being not matched in $\mu$.
    The following properties hold:
    \begin{itemize}
        \item[-] $i_0$ is a job of $P$ that is not matched in $\mu$, so it is the only one scheduled at $x_{i_0}$.
        \item[-] $j_{r+1}$ is a job of $J \setminus P$ that is not matched in $\mu$. 
        Since there is a job of $P$ at each time step, $j_{r+1}$ is scheduled at $x_{j_{r+1}}$ together with a job of $P$.
        This job of $P$ could be matched with $j_{r+1}$ in $\mu$ but it is not so, by construction of $\mu$, it must be matched with a third job.
        Thus, three jobs are scheduled at $x_{j_{r+1}}$.
        \item[-] $i_q$ and $j_q$ are matched together in $\mu$ so $x_{i_q} = x_{j_q}$ for every $q \in \{1, \dots, r\}$. 
    \end{itemize}
    Those properties imply that $(i_0, j_1, i_1, j_2, i_2, \dots, j_r, i_r, j_{r+1})$ is an augmenting sequence satisfying the requirements of Lemma \ref{lemma:augmenting_sequence}.
    Applying Lemma \ref{lemma:augmenting_sequence} gives that there exists a schedule $x'$ with $F(x') = F(x)+1$.
    As a conclusion, any feasible schedule of $I = (G, |P|)$ can be modified incrementally to reach an objective function value of $|P|+m^*_P$.
\end{proof}

\subsection{Solution for $M > |P|$}

The problem will now be solved for other values of $M$ based on the solution for $M = |P|$.
The idea is to apply transformations that increment the makespan of the initial schedule while keeping its optimality.
First, it will be shown that optimal solutions always reach makespan deadline $M$ when $M$ is in a certain interval.

\begin{lemma} \label{lemma:makespan_of_optimal_schedule}
    For any makespan deadline $M \in \{|P|, \dots, |J| - m^*\}$ any optimal schedule for $I = (G, M)$ has makespan exactly $M$.
\end{lemma}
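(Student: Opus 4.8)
The plan is to argue by contradiction: suppose $x$ is an optimal schedule for $I = (G, M)$ with $M \in \{|P|, \dots, |J| - m^*\}$ whose makespan $C_{max}(x) = M' < M$. I want to show that $x$ cannot be optimal, i.e., that one can squeeze an extra unit of objective value by using the ``slack'' between $M'$ and $M$. The intuition is that $F(x) \le |J|$ always, and $F(x) = |J|$ forces every time step to carry exactly two jobs (since there are no idle steps), which requires $M' = |J|/2 \ge$ something incompatible with $M' < M \le |J| - m^*$ unless $m^*$ is large; so in the interesting regime there is at least one column of size one, and I want to move one of its jobs to a fresh time step at the end.

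First I would handle the boundary cleanly: if $F(x) = |J|$ then all $|J|$ jobs sit in $M'$ columns of size exactly two, so $M' = |J|/2$; but the schedule could then be stretched to makespan $M$ trivially (any schedule extends to larger makespan by inserting idle-free reshuffling — actually here one must be slightly careful, but since $F$ only counts $\min(L, r_\tau)$ and all columns are full, simply splitting one column into two adjacent ones keeps $F = |J|$ and increases the makespan by one; iterate until the makespan is $M$). Wait — splitting a column of two independent jobs into two singletons is always precedence-feasible, and it decreases $F$ by... no, it keeps $\min(2, r_\tau)$ summed the same only if... splitting one column of size $2$ into two singletons changes the contribution $2 \to 1 + 1 = 2$, so $F$ is unchanged. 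Good: so when $F(x) = |J|$ we can reach makespan $M$ without loss, contradicting nothing but showing an optimal schedule of makespan $M$ exists.

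Next, the substantive case $F(x) < |J|$: then some column, say at time step $t$, has a single job $a$ (if every column had $\ge 2$ jobs then $F(x) = |J|$). Since $M' < M$, there is room to append a new time step $M'$ at the end. I would move $a$ to time step $M'$: this is precedence-feasible because all successors of $a$ were scheduled after $t$, hence at time steps $\le M' - 1 < M'$ — wait, that's backwards, successors must come after, and $M'$ is after everything, so this is fine; predecessors of $a$ are before $t$, a fortiori before $M'$. The makespan becomes $M' + 1 \le M$. The effect on $F$: time step $t$ loses its only job, so if $t$ is now empty we've removed a column contributing $1$; the new time step $M'$ contributes $1$. Net change zero — so $F$ is preserved and the makespan strictly increased. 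Iterating this move drives the makespan up to $M$ while keeping $F$ at its optimal value, which proves that an optimal schedule of makespan exactly $M$ exists; and since $M' < M$ was assumed for an arbitrary optimal $x$, in fact we must reexamine — the lemma claims *every* optimal schedule has makespan exactly $M$, which is stronger.

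For the stronger ``every optimal schedule'' claim, I would note that the above shows the optimal value $F^*(M)$ is nondecreasing in $M$, and combine it with a lower bound: for $M \ge |P|$ the value $F^*(M) \ge |P| + m^*_P$ (Proposition~\ref{prop:maximum_matching_Mmin}), and more relevantly, I'd show $F^*(M)$ is strictly increasing on $\{|P|, \dots, |J| - m^*\}$ until it hits $|J|$ — actually the cleanest route is: if an optimal $x$ had makespan $M' < M$, then $x$ is also feasible (and optimal) for the instance $(G, M')$, so $F^*(M') = F^*(M)$; but I claim $F^*$ is strictly increasing on the stated interval before reaching the plateau, which would give the contradiction $M' \ge M$. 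Establishing strict monotonicity is where the real work is, and it is essentially the content of the next subsection (incrementing the makespan by one while strictly increasing $F$ via an augmenting-sequence argument analogous to Lemma~\ref{lemma:augmenting_sequence}); the hard part is showing that as long as $M < |J| - m^*$ one genuinely gains, i.e. that $F^* < |J|$ strictly below the threshold and that each unit of makespan buys a unit of $F$. I would therefore prove the monotonicity lemma first (or cite it if it appears just below) and derive Lemma~\ref{lemma:makespan_of_optimal_schedule} as an immediate consequence: any optimal schedule with makespan $M' < M$ would certify $F^*(M') = F^*(M)$, contradicting $F^*(M') < F^*(M)$.

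The main obstacle I anticipate is precisely the interaction with the plateau: I must be sure that for every $M$ in the half-open sense up to $|J| - m^*$, the optimum has not yet saturated at $|J|$, so that the strict-increase step has something to bite on; this requires knowing that the minimum makespan to schedule all jobs under level $2$ is exactly $|J| - m^*$ (the Fujii result quoted in the text), so that for $M < |J| - m^*$ we have $F^*(M) < |J|$, while the ``stretch a full schedule'' argument handles $M = |J| - m^*$ itself. Assembling these pieces — Fujii's minimum-makespan identity, the column-splitting trick for the saturated case, and the augmenting-sequence strict-increase step for the unsaturated case — yields the lemma.
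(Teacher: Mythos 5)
Your proposal has two genuine problems. First, a logical slip that breaks the substantive case: from $F(x) < |J|$ you conclude that some column contains a single job, on the grounds that ``if every column had $\geq 2$ jobs then $F(x) = |J|$''. That implication is false: with unit jobs and unit consumptions, $F(x) = \sum_\tau \min(2, r_\tau(x))$ equals $|J| = \sum_\tau r_\tau(x)$ exactly when no time step carries more than two jobs, so $F(x) < |J|$ yields a column with at least \emph{three} jobs, not a singleton column. This matters, because the move you then perform (send a singleton job to a fresh final time step) only preserves $F$ --- you say so yourself --- and hence produces no contradiction with optimality; moreover it is not precedence-feasible in general, since a job moved to the very end would be scheduled after its own successors (your parenthetical ``successors must come after, and $M'$ is after everything, so this is fine'' has the direction of the constraint reversed). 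The correct deduction --- a time step with at least three jobs --- is exactly what the paper uses: since $C_{max}(x) < M \leq |J| - m^*$ and, by Fujii's result, no schedule with at most two jobs per time step can have makespan below $|J| - m^*$, such a step must exist; a one-job elongation (Figure \ref{fig:simple_elongation_single_job}) then increases $F$ by one while keeping the makespan at most $M$, contradicting the optimality of $x$ directly, and this applies to \emph{every} optimal $x$, not merely showing that one optimal schedule of makespan $M$ exists.

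Second, your route to the full ``every optimal schedule'' statement defers the key step: you reduce it to strict monotonicity of $F^*$ below the plateau and propose to ``prove the monotonicity lemma first (or cite it if it appears just below)''. But in the paper that strict increase (Lemma \ref{lemma:generalized_augmenting_path} and Theorem \ref{theorem:L2_UET_prec}) is established \emph{using} the present lemma, so citing it would be circular, and you do not supply an independent proof of it. As written, your argument establishes at best that \emph{some} optimal schedule has makespan $M$ (and even that only in the cases your moves are feasible), which is strictly weaker than the claim; the Fujii-plus-elongation contradiction above closes the gap in a few lines without any appeal to the later machinery.
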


\begin{proof}
    Let $M \in \{|P|, \dots, |J| - m^*\}$ and suppose that there exists an optimal schedule $x$ for $I$ of makespan $C_{max}(x) < M$.
    According to the result of \cite{Fujii1969}, the minimum makespan to schedule the project on two machines is $|J| - m^*$.
    This implies that there exists a time step $\tau$ where at least three jobs are scheduled in $x$.
    \\
    Consider the transformation illustrated in Figure \ref{fig:simple_elongation_single_job}.
    One of the jobs scheduled at $\tau$ can be scheduled at $\tau+1$ instead and the subsequent jobs can be scheduled one time step later.
    \\
    The resulting schedule $x'$ has makespan $C_{max}(x') = C_{max}(x) + 1 \leq M$ and objective value $F(x') = F(x) + 1$.
    This contradicts the optimality of $x$ since $x'$ is feasible for $I$ with a strictly higher objective value.
\end{proof}

\begin{figure*}[h!]
    \centering
    \scalebox{0.9}{\includegraphics{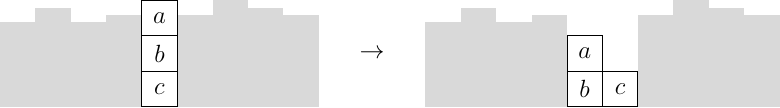}}
    \caption{Example of one-job elongation}
    \label{fig:simple_elongation_single_job}
\end{figure*}

\begin{lemma} \label{lemma:generalized_augmenting_path}
    Let $M \in \{|P|, \dots, |J| - m^* - 1\}$ and let $x$ be an optimal schedule for $I = (G, M)$. 
    Suppose that $x$ is such that, at each time step $\tau \in \{0, \dots, M-1\}$, either $1$, $2$ or $3$ jobs are scheduled.
    If there exists an augmenting sequence $\rho = (i_0, j_1, i_1, j_2, i_2, \dots, j_r, i_r, j_{r+1})$ for $(\widetilde{G}, x)$ satisfying $x_{i_0} \neq x_{j_{r+1}}$,
    then there exists a schedule $x'$ with makespan $M + 1$ such that $F(x') = F(x) + 2$.
    \\
    Furthermore, such a schedule can be computed from $x$ in polynomial time.
\end{lemma}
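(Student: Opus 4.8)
The plan is to pass from $x$ to $x'$ in two moves: a one-job elongation that creates a fresh time step and gains $1$, followed by a matching-augmentation that leaves the makespan untouched and gains a second $1$. Throughout I would use the fact, established by the same elementary-operation machinery as in the proof of Lemma~\ref{lemma:augmenting_sequence} but adapted to an arbitrary makespan bound and to $\widetilde G$ instead of $\widetilde G_P$, that an augmenting sequence running from a singleton column to a triple column yields a feasible schedule of the \emph{same} makespan with objective larger by $1$. Call this the extended version of Lemma~\ref{lemma:augmenting_sequence}.

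First I would argue that both $x_{i_0}$ and $x_{j_{r+1}}$ are triple columns. By the definition of an augmenting sequence each of these two columns holds one or three jobs; if $x_{i_0}$ held a single job, the extended version of Lemma~\ref{lemma:augmenting_sequence} applied to $\rho$ (reversing $\rho$ if it is $x_{j_{r+1}}$ that is a singleton, and invoking the matching/Berge reading of $\rho$ if both endpoints are singletons) would give a feasible schedule of makespan at most $M$ and objective $F(x)+1$, contradicting the optimality of $x$. Since $x_{i_0}\neq x_{j_{r+1}}$, it follows that $x$ has at least two triple columns — which is exactly what is needed, because a makespan-$(M{+}1)$ schedule beating $F(x)$ by $2$ must have two fewer triple columns than $x$.

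Now Step~1, the one-job elongation, extracting $i_0$ itself: insert a new step at position $x_{i_0}$, place $i_0$ alone there, place the two other jobs of the old column at the following step, and shift every later column by one. This is feasible, has makespan $M+1$, splits the triple column at $x_{i_0}$ into a singleton and a pair, so the new schedule $x''$ satisfies $F(x'')=F(x)+1$ and still has all columns of size $1$, $2$ or $3$. Crucially, $\rho$ is still an augmenting sequence for $(\widetilde G,x'')$: independence is unchanged, $i_0$ is now alone at its step, the column of $j_{r+1}$ is merely translated and still contains three jobs, and since the elongation translates a whole suffix of columns uniformly the equalities $x''_{i_q}=x''_{j_q}$ survive for all $q$. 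Then Step~2 applies the extended version of Lemma~\ref{lemma:augmenting_sequence} to $(\widetilde G,x'')$ and $\rho$ — whose hypotheses ($i_0$ alone, three jobs above $j_{r+1}$) now hold — producing $x'$ with the same makespan $M+1$ and $F(x')=F(x'')+1=F(x)+2$. All steps are polynomial.

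The main obstacle is the extended version of Lemma~\ref{lemma:augmenting_sequence}, used both in the reduction and in Step~2: the proof of Lemma~\ref{lemma:augmenting_sequence} leans on $M=|P|$ to put exactly one critical-path job in every column, which is what forces $|\alpha_k|\geq 1$ in the decomposition of the elementary operation, keeps every touched column at size at least two, and lets the crossing/shortcut analysis charge the surplus job to a critical-path successor of $i_0$. For $M>|P|$ none of this is automatic, so I would instead exploit that $x$ is optimal and that all its columns have size at most three: this caps how a translation can reshuffle sizes (a column emptied by a translation is deleted at no cost in $F$, a size-two column keeps contributing two), and it lets the induction on $r$ be rerun with the same rule — shorten $\rho$ when an intermediate $x_{i_q}$ reaches size three, take the shortcut $\{i_0,j_q\}$ at a crossing configuration. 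Verifying that this induction never lowers the contribution of a touched column — the single place where the critical-path job was previously indispensable — is the delicate point, and I expect it to need optimality of $x$ (not merely feasibility) together with a judicious choice, among the three jobs of the endpoint triple column, of the one that plays the role of $i_0$.
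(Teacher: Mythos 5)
Your plan (one-job elongation at $x_{i_0}$, then a same-makespan augmenting improvement) hinges entirely on the ``extended version of Lemma~\ref{lemma:augmenting_sequence}'', and that is exactly where the argument has a genuine gap: you do not prove it, and it does not follow by rerunning the induction. The proof of Lemma~\ref{lemma:augmenting_sequence} uses $M=|P|$ essentially: a critical-path job sits in \emph{every} column, which (i) forces $|\alpha_k|\geq 1$ in each block of the elementary operation, so no intermediate column loses its contribution to $F$ when the $\beta_k$ are pulled back, and (ii) identifies $i_0,i_1,\dots,i_r$ as critical-path jobs, which is what pins down $\alpha_k=\{i_q\}$, $\beta_k=\{j_q\}$ in the crossing configuration and legitimizes the shortcut $\{i_0,j_q\}$. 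For $M>|P|$ a column may contain only predecessors of $b$ (so $\alpha_k=\emptyset$), the translation can empty it and strictly decrease $F$, and the crossing analysis loses its anchor. You acknowledge this and hope optimality of $x$ repairs the induction, but in your Step~2 the lemma must be applied to $x''$, the elongated schedule, which is \emph{not} known to be optimal for deadline $M+1$; so the very hypothesis you expect to need is unavailable where you need it. Your Step~0 (both endpoints are triple columns) inherits the same unproved lemma, and the both-singletons case is only gestured at via a ``matching/Berge reading''.

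The paper's proof takes a different route that sidesteps the elementary operation altogether for $M>|P|$. After trimming $\rho$ so that every intermediate column has exactly two jobs and no arc of $\rho$ jumps over a triple column, it restricts attention to the jobs $J'$ scheduled between $\tau_{min}$ and $\tau_{max}$ (excluding $i_0$ and $j_{r+1}$), reads off the matching $\mu$ given by the two-job columns, augments it along $\rho$ to a matching of size $|\mu|+1$ on $J'\cup\{i_0,j_{r+1}\}$, and then invokes \cite{Fujii1969} to reschedule this whole segment on two machines within $|J'|-|\mu|+1$ time steps. Optimality of $x$, combined with Lemma~\ref{lemma:makespan_of_optimal_schedule}, rules out the cases where $i_0$ or $j_{r+1}$ is alone on its time step, and in the remaining case (both endpoints in triple columns) the global reschedule of the segment produces in one stroke a schedule of makespan $M+1$ with $F(x)+2$. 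To make your two-step plan rigorous you would essentially have to reprove this segment-rescheduling argument, at which point you would have reconstructed the paper's proof rather than an alternative to it.
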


\begin{proof}
    Some useful properties of the augmenting sequence are first derived from basic transformations.
    \\
    An augmenting sequence can indeed be obtained in which $i_q$ and $j_q$ are the only jobs scheduled at $x_{i_q}$ for every $q \in \{1, \dots, r\}$.
    Suppose that three jobs are scheduled at $x_{i_q}$ for some $q \in \{1, \dots, r\}$, two cases are possible:
    \begin{itemize}
        \item[-] if $i_0$ is scheduled at $x_{i_q}$, then the sequence $(i_q, j_{q+1}, i_{q+1}, j_{q+2}, i_{q+2}, \dots, j_r, i_r, j_{r+1})$ is an augmenting sequence and satisfies $x_{i_q} \neq x_{j_{r+1}}$;
        \item[-] otherwise the sequence $(i_0, j_1, i_1, j_2, i_2, \dots, j_{q-1}, i_{q-1}, j_{q})$ is an augmenting sequence satisfying $x_{i_0} \neq x_{j_q}$.
    \end{itemize}
    In both cases, an augmenting sequence with the required properties can be obtained by truncating the initial one.
    Such truncations can be operated on the augmenting sequence until all intermediary time steps have exactly two jobs scheduled.

    It can also be ensured that no arc of the augmenting sequence jumps over a time step different from $x_{i_0}$ and $x_{j_{r+1}}$ with exactly three jobs scheduled.
    Indeed, suppose that an arc $(a, b)$ of the augmenting sequence is such that there exists a job $c$ with exactly three jobs scheduled at $x_c$ and w.l.o.g. $x_a < x_c < x_b$, as shown in Figure \ref{fig:generalized_augmenting_path_jump_cut}.
    Since $a$ and $b$ are independent, $c$ is either independent from $a$ or from $b$ -- or both.
    If $c$ is independent from $a$, then the augmenting sequence can be truncated after $a$ and ended with arc $(a, c)$.
    If $c$ is independent from $b$, then the augmenting sequence can be truncated before $b$ and started with arc $(c, b)$.
    \begin{figure*}[h!]
        \centering
        \scalebox{0.9}{\includegraphics{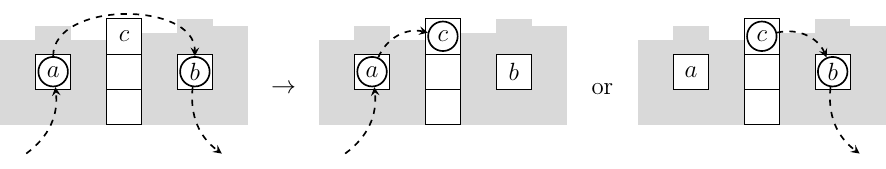}}
        \caption{Shortcut preventing a jump}
        \label{fig:generalized_augmenting_path_jump_cut}
    \end{figure*}
    \\
    One can now suppose that the augmenting sequence $\rho$ is such that:
    \begin{itemize}
        \item[-] $i_q$ and $j_q$ are the only jobs scheduled at $x_{i_q}$ for every $q \in \{1, \dots, r\}$;
        \item[-] for every arc $(i_q, j_{q+1})$ and every time step $\tau \in \{x_{i_q}, \dots, x_{j_{q+1}}\} \setminus \{x_{i_0}, x_{j_{r+1}}\}$, at most two jobs are scheduled at $\tau$.
    \end{itemize}
    Let $\tau_{min} = \min_{i \in \rho} x_i$ and $\tau_{max} = \max_{i \in \rho} x_i$.
    For every time step $\tau \in \{\tau_{min}, \dots, \tau_{max}\} \setminus \{x_{i_0}, x_{j_{r+1}}\}$, either $1$ or $2$ jobs are scheduled at $\tau$.
    Consider the subset of jobs $J' = \{i \in J \mid \tau_{min} \leq x_i \leq \tau_{max}\} \setminus \{i_0, j_{r+1}\}$.
    \\
    Let $\mu$ denote the matching on $J'$ obtained by pairing the jobs scheduled at the same time step.
    It is clear that $\mu \cup \{\{i_q, j_{q+1}\} \mid q \in \{0, \dots, r\}\} \setminus \{\{j_q, i_q\} \mid q \in \{1, \dots, r\}\}$ is a matching between independent jobs of $J' \cup \{i_0, j_{r+1}\}$ of size $|\mu| + 1$.
    Using the result of \cite{Fujii1969}, it is possible to schedule the jobs of $J' \cup \{i_0, j_{r+1}\}$ on two machines using at most $|J'| + 2 - (|\mu| + 1) = |J'| - |\mu| + 1$ time steps.
    Three cases can be distinguished, only one of which is actually possible:
    \begin{enumerate}
        \item If $i_0$ and $j_{r+1}$ are both the only jobs scheduled on their time steps in $x$, the jobs of $J' \cup \{i_0, j_{r+1}\}$ are scheduled on $|J'| - |\mu| + 2$ time steps in $x$.
        This contradicts the optimality of $x$, since, using \cite{Fujii1969}, there exists a schedule of makespan $M-1$ with the same objective value as $x$, which is impossible according to Lemma \ref{lemma:makespan_of_optimal_schedule}.
        \item If either $i_0$ or $j_{r+1}$ is scheduled alone on its time step -- the other being on a time step with exactly three jobs, the jobs of $J' \cup \{i_0, j_{r+1}\}$ are scheduled on $|J'| - |\mu| + 1$ time steps in $x$.
        This contradicts the optimality of $x$ since, using \cite{Fujii1969}, all jobs of $J' \cup \{i_0, j_{r+1}\}$ can be scheduled on the same number of time steps without exceeding the resource level.
        \item If both $i_0$ and $j_{r+1}$ are scheduled on time steps with exactly three jobs, the jobs of $J' \cup \{i_0, j_{r+1}\}$ are scheduled on $|J'| - |\mu|$ time steps in $x$.
        Using \cite{Fujii1969}, these jobs can be scheduled on $|J'| - |\mu| + 1$ time steps without exceeding the resource level, so the objective function can be increased by $2$.
        This is the expected result.
    \end{enumerate}
\end{proof}

\begin{lemma} \label{lemma:upper_bound_optimal_objective_function_1}
    For any $M \in \mathbb{N}$, $M \geq |P|$, the optimal objective function verifies:
    $$F^*(M) \leq 2(M - |P|) + |P| + m^*_P$$
\end{lemma}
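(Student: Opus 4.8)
The plan is to prove the inequality directly for an arbitrary feasible schedule $x$ of $I=(G,M)$ and then pass to the optimum; this needs none of the augmenting-sequence machinery of Lemmas~\ref{lemma:augmenting_sequence}--\ref{lemma:generalized_augmenting_path}, only a counting/matching estimate that generalizes the upper-bound half of Proposition~\ref{prop:maximum_matching_Mmin}. Note first that a feasible schedule of makespan at most $|P| \le M$ always exists (take a list schedule respecting precedence), so $F^*(M)$ is well defined.

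First I would rewrite the objective through column sizes. For $k \ge 0$, let $n_k$ be the number of time steps of $x$ (among $\{0,\dots,C_{max}(x)-1\}$) carrying exactly $k$ jobs. Since $L=2$, we have $F(x) = n_1 + 2\sum_{k\ge 2} n_k = 2\,C_{max}(x) - 2n_0 - n_1 \le 2\,C_{max}(x) - n_1 \le 2M - n_1$, using $C_{max}(x)\le M$. Hence it suffices to show $n_1 \ge |P| - m^*_P$, i.e.\ that $x$ has at least $|P|-m^*_P$ time steps occupied by a single job.

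The key step is that lower bound on $n_1$, obtained exactly as in the proof of Proposition~\ref{prop:maximum_matching_Mmin}. Since $P$ is a chain in $G$, its $|P|$ jobs occupy $|P|$ pairwise distinct time steps of $x$, each carrying exactly one job of $P$; call these the $P$-time-steps. For each $P$-time-step carrying at least two jobs, select one job other than the $P$-job sitting there: it lies in $J\setminus P$ (only one $P$-job per time step) and is independent from that $P$-job (same time step), so the selected pairs form a matching in $\widetilde{G}_P$. Consequently at most $m^*_P$ of the $P$-time-steps carry $\ge 2$ jobs, and so at least $|P|-m^*_P$ of them carry exactly one job, giving $n_1 \ge |P|-m^*_P$. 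Combining with the previous paragraph, $F(x) \le 2M - (|P|-m^*_P) = 2(M-|P|) + |P| + m^*_P$, and since $x$ is an arbitrary feasible schedule this bounds $F^*(M)$.

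I do not expect a genuine obstacle here; the only point needing a little care is the matching construction establishing $\#\{P\text{-time-steps with}\ge 2\text{ jobs}\}\le m^*_P$, which is the same device used for the $M=|P|$ case, now applied to schedules of larger makespan. As a sanity check, for $M=|P|$ the bound reads $F^*(|P|)\le |P|+m^*_P$, which is tight by Proposition~\ref{prop:maximum_matching_Mmin}, so the inequality is best possible at the left endpoint of the range.
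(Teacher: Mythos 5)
Your argument is correct and is essentially the paper's own proof: the same matching construction in $\widetilde{G}_P$ (pairing each $P$-job with a co-scheduled job of $J\setminus P$) shows at least $|P|-m^*_P$ time steps carry a single job, and the bound $F(x)\le 2M-(|P|-m^*_P)$ follows by the same counting, which you merely make explicit via the $n_k$'s. No gap; the extra care about $n_0$ and the existence of a feasible schedule is fine but not needed beyond what the paper already assumes.
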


\begin{proof}
    Using Proposition \ref{prop:maximum_matching_Mmin} and by definition of $m^*_P$, in any feasible schedule, at most $m^*_P$ jobs of $P$ are scheduled on the same time step as a job of $J \setminus P$.
    Equivalently, at least $|P| - m^*_P$ jobs of $P$ are alone on their time step in any feasible schedule.
    \\
    This implies that $F^*(M) \leq 2M - (|P| - m^*_P)$ for any $M \in \mathbb{N}$, $M \geq |P|$, which can be rewritten as $F^*(M) \leq 2(M - |P|) + |P| + m^*_P$.
\end{proof}

\begin{lemma} \label{lemma:upper_bound_optimal_objective_function_2}
    For any $M \in \mathbb{N}$, $M \geq |P|$, the optimal objective function verifies:
    $$F^*(M) \leq M + m^*$$
\end{lemma}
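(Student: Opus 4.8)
The plan is to argue directly on an arbitrary feasible schedule, mimicking the matching construction already used in the proof of Proposition \ref{prop:maximum_matching_Mmin}, but now applied to the full independence graph $\widetilde{G}$ rather than to $\widetilde{G}_P$. So let $x$ be any feasible schedule for $I = (G, M)$. By the standing assumption that there are no idling time steps before the makespan, the time steps $0, 1, \dots, C_{max}(x) - 1$ are all nonempty, and $C_{max}(x) \leq M$.

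Next I would split the objective according to column sizes. Call a time step \emph{simple} if exactly one job is scheduled there and \emph{multiple} if at least two jobs are scheduled there; let $a$ and $b$ be the numbers of simple and multiple time steps respectively. Since $\min(L, r_\tau(x))$ equals $1$ on a simple time step and $2$ on a multiple time step (recall $L = 2$), we get $F(x) = a + 2b = (a + b) + b$. As every time step before the makespan is nonempty, $a + b = C_{max}(x) \leq M$, hence $F(x) \leq M + b$.

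It then remains to show $b \leq m^*$. For each multiple time step $\tau$, pick any two jobs $i_\tau, j_\tau$ scheduled at $\tau$; as noted in Section~\ref{section:L2_prec}, two jobs sharing a column are independent, so $\{i_\tau, j_\tau\} \in E$. Because distinct time steps host disjoint sets of jobs, the edges $\{i_\tau, j_\tau\}$ taken over all multiple time steps $\tau$ are pairwise vertex-disjoint, i.e.\ they form a matching of size $b$ in $\widetilde{G}$. Consequently $b \leq m^*$, and combining with the previous paragraph yields $F(x) \leq M + m^*$. Since $x$ was an arbitrary feasible schedule, $F^*(M) \leq M + m^*$.

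There is no real obstacle here: the only point that needs care is the disjointness of the chosen pairs across time steps, which is immediate from the column structure of schedules; everything else is the same elementary counting-plus-matching argument underlying Proposition~\ref{prop:maximum_matching_Mmin}, now unconstrained by the critical path $P$.
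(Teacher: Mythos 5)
Your proof is correct and follows essentially the same route as the paper: bound the number of time steps with at least two jobs by extracting a matching in $\widetilde{G}$ from the columns, then combine with the bound $C_{max}(x) \leq M$ on the number of time steps. Your write-up is just a more explicit version of the paper's counting argument.
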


\begin{proof}
    Let $M \in \mathbb{N}$, $M \geq |P|$, and suppose that there exists a feasible schedule $x$ for $I = (G, M)$ such that $F(x) = M + m$.
    At least $m$ time steps have two or more jobs scheduled in $x$.
    A matching between independent jobs of size at least $m$ can then be deduced from $x$.
    By definition of $m^*$ as the maximum size for a matching between independent jobs, $m \leq m^*$ and $F(x) \leq m^* + M$.
    Inequality $F(x) \leq m^* + M$ is then verified for any feasible schedule so $F^*(M) \leq M + m^*$.
\end{proof}

The following theorem allows for the computation of $F^*$ as a function of $M$ as illustrated in Figure \ref{fig:L2_prec_objective_function_of_M}.

\begin{theorem} \label{theorem:L2_UET_prec}
    The optimal objective function value $F^*$ as a function of makespan deadline $M$ is piecewise linear and defined as follows:
    \begin{itemize}
        \item[-] For $M \in \{|P|, \dots, |P| + m^* - m^*_P\}$: $F^*(M) = 2(M - |P|) + |P| + m^*_P$;
        \item[-] For $M \in \{|P| + m^* - m^*_P, \dots, |J| - m^*\}$: $F^*(M) = M + m^*$;
        \item[-] For $M \in \mathbb{N}$, $M \geq |J| - m^*$: $F^*(M) = |J|$.
    \end{itemize}
    Furthermore, an optimal schedule can be computed for any $M \geq |P|$ in polynomial time.
\end{theorem}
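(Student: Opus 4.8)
The plan is to establish the three upper bounds from the lemmas already proved and then, by induction on $M$, to build feasible schedules attaining them. For the upper bounds: on the first interval, Lemma~\ref{lemma:upper_bound_optimal_objective_function_1} gives $F^*(M)\le 2(M-|P|)+|P|+m^*_P$; on the second, Lemma~\ref{lemma:upper_bound_optimal_objective_function_2} gives $F^*(M)\le M+m^*$, which for $M\ge |P|+m^*-m^*_P$ is at most $2(M-|P|)+|P|+m^*_P$ and hence the binding bound there; and for $M\ge |J|-m^*$ one has trivially $F^*(M)\le |J|$ since $F$ never exceeds the number of jobs. I would also record the easy inequality $|P|+m^*-m^*_P\le |J|-m^*$ (count jobs in a two-machine schedule of makespan $|J|-m^*$: at most $m^*_P$ of the $|P|$ critical-path time steps carry a second job, and the remaining at most $|J|-m^*-|P|$ time steps carry at most two jobs each), so that the three intervals are correctly ordered and Lemma~\ref{lemma:generalized_augmenting_path} is applicable throughout the first interval.

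For the first interval I would induct on $M$, the base case $M=|P|$ being exactly Proposition~\ref{prop:maximum_matching_Mmin}. For the step, take an optimal $x$ for $I=(G,M)$ with $F(x)=2(M-|P|)+|P|+m^*_P$ and $|P|\le M<|P|+m^*-m^*_P$; by Lemma~\ref{lemma:makespan_of_optimal_schedule} it has makespan exactly $M$ and, w.l.o.g., no idle step. If some time step carries at least four jobs, insert a new time step right after it and move two of its jobs there: these two are mutually independent and their predecessors already lie before the new step, so the schedule stays feasible, its makespan becomes $M+1$, the old step still contributes $2$ and the new one contributes $2$, hence $F$ rises by $2$. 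Otherwise every step carries $1$, $2$ or $3$ jobs; pairing the two jobs of each non-singleton step gives a matching of $\widetilde G$ of size $F(x)-M=M-|P|+m^*_P<m^*$, so by Berge's theorem~\cite{Berge1957} it admits an augmenting path. Since no step carries four jobs, an unmatched endpoint cannot sit on a two-job step and two unmatched endpoints cannot share a step, so this path is an augmenting sequence for $(\widetilde G,x)$ with $x_{i_0}\ne x_{j_{r+1}}$; Lemma~\ref{lemma:generalized_augmenting_path} then yields a schedule of makespan $M+1$ with $F$ increased by $2$. Either way $F^*(M+1)\ge F(x)+2=2(M+1-|P|)+|P|+m^*_P$, which with the upper bound gives equality.

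For the second interval I would again induct on $M$, the value at the left endpoint $M=|P|+m^*-m^*_P$ being $|P|+2m^*-m^*_P=M+m^*$, already known from the first interval. For the step, take an optimal $x$ for $I=(G,M)$ with $F(x)=M+m^*$ and $M<|J|-m^*$; it has makespan exactly $M$ by Lemma~\ref{lemma:makespan_of_optimal_schedule}. If every step carried at most two jobs then $F(x)$ would equal $|J|>M+m^*$, impossible, so some step carries at least three jobs; inserting a new step after it and moving one job there keeps the schedule feasible, makes the makespan $M+1$, and raises $F$ by $1$, giving $F^*(M+1)\ge M+1+m^*$ and hence equality with Lemma~\ref{lemma:upper_bound_optimal_objective_function_2}. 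Finally, for $M\ge |J|-m^*$ the two-machine schedule of makespan $|J|-m^*$ produced by the algorithm of~\cite{Fujii1969} (or \cite{Coffman1971}) is feasible and has every step of size at most two, so $F=|J|$, which is optimal. All the operations used -- maximum matchings, augmenting paths, elementary operations (Definition~\ref{def:elementary_operation}), elongations, and the base schedules of Proposition~\ref{prop:maximum_matching_Mmin} and of~\cite{Fujii1969} -- are polynomial, and at most $|J|$ induction steps occur, so an optimal schedule is computed in polynomial time for every $M\ge |P|$.

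I expect the main obstacle to be the inductive step of the first interval: one must show that whenever no step carries four or more jobs (so that no cheap $+2$ elongation is available), the matching read off the optimal schedule is strictly below $m^*$ and therefore, by Berge's theorem, yields an augmenting path whose endpoints genuinely lie on distinct non-two time steps -- precisely the hypothesis feeding Lemma~\ref{lemma:generalized_augmenting_path}. Checking that this augmenting path satisfies every clause of Definition~\ref{def:augmenting_sequence}, together with the endpoint condition $x_{i_0}\ne x_{j_{r+1}}$, is the delicate part.
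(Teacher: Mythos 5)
Your proposal is correct and follows essentially the same route as the paper: upper bounds from Lemmas~\ref{lemma:upper_bound_optimal_objective_function_1} and \ref{lemma:upper_bound_optimal_objective_function_2}, then induction on $M$ using two-job elongation or Berge's theorem plus Lemma~\ref{lemma:generalized_augmenting_path} on the first interval, one-job elongation on the second, and the Fujii/Coffman schedule beyond $|J|-m^*$. Your explicit check that $|P|+m^*-m^*_P\le |J|-m^*$ (so the intervals are ordered and Lemma~\ref{lemma:generalized_augmenting_path} applies throughout the first interval) is a small detail the paper leaves implicit, but it does not change the argument.
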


\begin{proof}
    Let us prove by induction on $M$ that there exists a schedule $x^M$ of makespan $M$ such that $F(x^M) = 2(M - |P|) + |P| + m^*_P$ for every $M \in \{|P|, \dots, |P| + m^* - m^*_P\}$.
    \\
    The case of $M = |P|$ was solved in Section \ref{section:resolution_Mmin}.
    \\
    Suppose that the property holds for some $M \in \{|P|, \dots, |P| + m^* - m^*_P - 1\}$ and let $x^M$ be the associated schedule.
    If there exists a time step with at least $4$ jobs scheduled in $x^M$, two of those jobs can be scheduled on the next time step while all jobs scheduled after are postponed by one time unit, as illustrated in Figure \ref{fig:simple_elongation}.
    This yields an optimal schedule $x^{M+1}$.
    \begin{figure*}[h!]
        \centering
        \scalebox{0.9}{\includegraphics{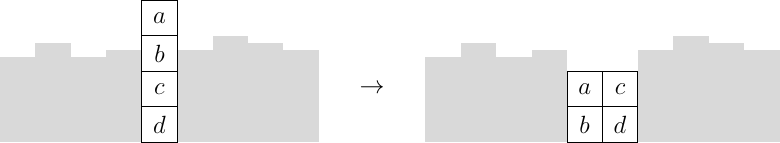}}
        \caption{Example of two-job elongation}
        \label{fig:simple_elongation}
    \end{figure*}
    If not, $x^M$ satisfies the requirement of Lemma \ref{lemma:generalized_augmenting_path} since either $1$, $2$ or $3$ jobs are scheduled at each time step $\tau \in \{0, \dots, M-1\}$.
    Let $\mu$ denote a matching obtained by selecting a pair of jobs for each time step where at least two jobs are scheduled in $x^M$.
    Since $F(x^M) = 2(M - |P|) + |P| + m^*_P$ according to the induction hypothesis, there are $M - |P| + m^*_P$ such time steps.
    It is then clear that $\mu$ is a matching between independent jobs of size $M - |P| + m^*_P$.
    This matching is not maximum because $M - |P| + m^*_P \leq |P| + m^* - m^*_P - 1 - |P| + m^*_P < m^*$.
    By Berge's Theorem, there exists an augmenting path $\rho = (i_0, j_1, i_1, j_2, i_2, \dots, j_r, i_r, j_{r+1})$ in $\widetilde{G}$ such that $\{i_q, j_q\} \in \mu$ for each $q \in \{1, \dots, r\}$ and such that neither $i_0$ nor $j_{r+1}$ are matched in $\mu$.
    Notice the following:
    \begin{itemize}
        \item[-] If exactly two jobs were scheduled at $x^M_{i_0}$, $i_0$ would be in $\mu$. Either $1$ or $3$ jobs are scheduled at $x^M_{i_0}$.
        \item[-] If exactly two jobs were scheduled at $x^M_{j_{r+1}}$, $j_{r+1}$ would be in $\mu$. Either $1$ or $3$ jobs are scheduled at $x^M_{j_{r+1}}$.
        \item[-] By construction of $\mu$, $x^M_{i_q} = x^M_{j_q}$ for every $q \in \{1, \dots, r\}$.
    \end{itemize}
    Furthermore, if $x_{i_0}$ and $x_{j_{r+1}}$ were equal, three jobs would be scheduled at $x_{i_0}$ and at least one job in $\{i_0, j_{r+1}\}$ would be matched in $\mu$.
    As a consequence, $x_{i_0} \neq x_{j_{r+1}}$.
    All the requirements of Lemma \ref{lemma:generalized_augmenting_path} are now satisfied by $x^M$ and $\rho$.
    Applying Lemma \ref{lemma:generalized_augmenting_path} gives that there exists a schedule $x^{M+1}$ of makespan $M+1$ such that $F(x^{M+1}) = 2(M + 1 - |P|) + |P| + m^*_P$.
    This value reaches the upper bound of Lemma \ref{lemma:upper_bound_optimal_objective_function_1} so $x^{M+1}$ is optimal.

    Let us prove by induction on $M$ that there exists a schedule $x^M$ of makespan $M$ such that $F(x^M) = M + m^*$ for every $M \in \{|P| + m^* - m^*_P, \dots, |J| - m^*\}$.
    The case of $M = |P| + m^* - m^*_P$ follows from the previous case.
    \\
    Suppose that the property holds for some $M \in \{|P| + m^* - m^*_P, \dots, |J| - m^* - 1\}$ and let $x^M$ be the associated schedule.
    Using the induction hypothesis, $F(x^M) = M + m^*$ and, since $M < |J| - m^*$, $F(x^M) < |J|$.
    There must exist a time step with at least three jobs scheduled in $x^M$.
    A schedule $x^{M+1}$ with makespan $M+1$ and such that $F(x^{M+1}) = M + 1 + m^*$ can be obtained using one-job elongation as shown in Figure \ref{fig:simple_elongation_single_job}.
    This value reaches the upper bound of Lemma \ref{lemma:upper_bound_optimal_objective_function_2} so $x^{M+1}$ is optimal.

    For $M \geq |J| - m^*$, the schedule obtained through the method of Fujii \textit{et al.} is always optimal since it reaches the natural upper bound $|J|$ of the objective function.
\end{proof}

\subsection{Algorithm}

The main steps in the resolution of $L2 | prec, C_{max} \leq M, c_i = 1, p_i = 1 | F$ are summarized in Algorithm \ref{alg:L2_prec_UET}.
Schedule computation steps are not detailed but rely in particular on the proofs of Lemma \ref{lemma:augmenting_sequence} and Lemma \ref{lemma:generalized_augmenting_path}.

\begin{algorithm}
    \caption{Sketch of algorithm for $L2 | prec, C_{max} \leq M, c_i = 1, p_i = 1 | F$}
    \label{alg:L2_prec_UET}
    \begin{algorithmic}[1]
        \State Choose a critical path $P$ in $G$
        \State Compute $m^*_P$ and $m^*$ the size of a maximum matching in $\widetilde{G}_P$ and $\widetilde{G}$ respectively
        \If{$M \geq |J| - m^*$}
            \State \textbf{Return} the schedule computed with the method of \cite{Coffman1971}
        \EndIf
        \State Compute $x$ the earliest schedule of $G$
        \While{$F(x) < |P| + m^*_P$}
            \State Compute a schedule $x'$ of makespan $|P|$ with $F(x') = F(x)+1$ using Lemma \ref{lemma:augmenting_sequence}
            \State $x = x'$
        \EndWhile
        \If{$M = |P|$}
            \State \textbf{Return} $x$
        \EndIf
        \While{$C_{max}(x) < M$}
            \If{At least four jobs are scheduled at a time step in $x$}
                \State Compute a schedule $x'$ of makespan $C_{max}(x)+1$ with $F(x') = F(x)+2$ using a two-job elongation
                \State $x = x'$
            \ElsIf{$C_{max}(x) < |P| + m^* - m^*_P$}
                \State Compute a schedule $x'$ of makespan $C_{max}(x)+1$ with $F(x') = F(x)+2$ using Lemma \ref{lemma:generalized_augmenting_path}
                \State $x = x'$
            \Else
                \State Compute a schedule $x'$ of makespan $C_{max}(x)+1$ with $F(x') = F(x)+1$ using a one-job elongation
                \State $x = x'$
            \EndIf
        \EndWhile
        \State \textbf{Return} $x$
    \end{algorithmic}
\end{algorithm}

    Let us discuss the time complexity of Algorithm \ref{alg:L2_prec_UET}.
    An important point to note is that, although the method of \cite{Fujii1969} has been used, due to its convenient matching structure, to prove that $L2 | prec, C_{max} \leq M, c_i = 1, p_i = 1 | F$ is polynomial, it is not the most computationally efficient.
    Therefore, solving problem $P2 | prec, p_i = 1 | C_{max}$ when applying Lemma \ref{lemma:generalized_augmenting_path} is done using the more efficient method of \cite{Coffman1971}.
    The time required to perform the different computation steps is as follows:
    \begin{itemize}
        \item[-] Computing the transitive closure of precedence graph $G$: $O(|J|^3)$
        \item[-] Solving the problem for $M = |P|$:
        \begin{itemize}
            \item[-] Finding an augmenting sequence (alternating path in bipartite independence graph $\widetilde{G}_P$): $O(|J| + |E_P|)$
            \item[-] Elementary operation: $O(|J|)$
            \item[-] Size of an augmenting path: $O(|P|)$
            \item[-] Maximum number of augmenting sequences to find: $|P|$
        \end{itemize}
        Total to solve the $M = |P|$ case: $O(|P|^2|J|)$
        \item[-] Improving the schedule up to $M$:
        \begin{itemize}
            \item[-] Finding an augmenting sequence in an arbitrary graph: $O(|E|)$ \citep{Micali1980}
            \item[-] Rescheduling on two machines: $O(|J|^2)$ \citep{Coffman1971}
            \item[-] Maximum number of augmenting sequences to find: $|J|$
        \end{itemize}
        Total to improve the schedule up to $M$: $O(|J|^3)$
    \end{itemize}
    Total computation time of Algorithm \ref{alg:L2_prec_UET} is therefore $O(|J|^3)$.


\section{Further tractable special cases} \label{section:other_polynomial_cases}

In this section, five resource leveling problems are shown to be solvable in polynomial or pseudo-polynomial time.
Section \ref{section:NP_hardness_results} will show that further generalization of those special cases leads to strongly $NP$-hard problems.

Precedence constraints, as considered in the core problem of this work, are classical scheduling constraints and are present in many practical applications.
However, $L = 2$ is very specific and solving problems with other values of $L$ can be interesting.
Section \ref{section:L_intree} shows that, when precedence graphs are restricted to in-trees, the problem is solved to optimality for any $L \in \mathbb{N}$ by adapting Hu's algorithm \citep{Hu1961}.
Unit processing times is also a strong assumption and results for more general $p_i$ values can be of use.
Section \ref{section:L1_Cmax_cigeq1} shows that restricting the resource level to $L = 1$ yields another polynomial special case for any processing times.

Release and due dates are other classical scheduling constraints that can be studied together with a resource leveling objective.
Polynomial solving methods are given for two problems with release dates and due dates in which preemption is allowed.
Section \ref{section:L_ri_di_ci1_pmtn} deals with $L | r_i, d_i, c_i = 1, pmtn | F$ by translating it as a flow problem.
Section \ref{section:L2_ri_di_pmtn_F} solves $L2 | r_i, d_i, pmtn | F$ using linear programming.

Finally, Section \ref{section:L2_Cmax_F} shows that, when there are no precedence constraints, a tractable method exists for non-unit processing times and resource consumptions.
A pseudo-polynomial algorithm is provided for $L2 | C_{max} \leq M | F$ based on dynamic programming.

\subsection{A polynomial algorithm for $L | in \text{-} tree, C_{max} \leq M, p_i=1, c_i=1 | F$} \label{section:L_intree}

The classical scheduling problem $P | in \text{-} tree, C_{max} \leq M, p_i=1 | C_{max}$ can be solved in polynomial time using Hu's algorithm \citep{Hu1961}.
In this section, its leveling counterpart is considered, namely $L | in \text{-} tree, C_{max} \leq M, p_i=1, c_i=1 | F$.
The idea is therefore to adapt the algorithm proposed in \cite{Hu1961} in order to obtain a polynomial algorithm to solve $L | in \text{-} tree, C_{max} \leq M, p_i=1, c_i=1 | F$.
\\
Hu's algorithm can be applied in the case of Unit Execution Times (UET) when the precedence graph is an in-tree.
It is a list algorithm in which jobs are sorted in increasing order of latest starting time.
\\
The following algorithm uses the same priority list: jobs with small latest starting time $\bar{\tau}_i$ are scheduled first.
However, since the jobs are constrained by the deadline $M$, the algorithm must execute at a given time $\tau$ the jobs whose latest starting time is $\tau$ and that have not been scheduled yet.
This can cause the resource level $L$ to be exceeded.

\begin{algorithm}
    \caption{UETInTreeLeveling$(J, G, L, M)$}
    \label{alg:adapted_hu_algorithm}
    \begin{algorithmic}[1]
        \State Compute $\bar{\tau}_i$, the latest starting time of job $i$, for every $i \in J$
        \State $G_0 \gets G$
        \For{$\tau = 0, \dots, M-1$} \label{alg_line:adapted_hu_algorithm_main_loop}
            \State Select up to $L$ leaves of $G_\tau$ with the smallest values of $\bar{\tau}_i$, let this subset be $S_\tau$ \label{alg_line:adapted_hu_algorithm_job_selection}
            \Comment{Note that the selected leaves may have different $\bar{\tau}_i$ values}
            \State $S_\tau = S_\tau \cup \{i \text{ node of } G_\tau \mid \bar{\tau}_i = \tau \}$ \label{alg_line:adapted_hu_algorithm_special_condition}
            \ForAll{$i \in S_\tau$}
                \State $x_i \gets \tau$
            \EndFor
            \State Build $G_{\tau +1}$ from $G_\tau$ by removing the jobs of $S_\tau$
        \EndFor
        \State \textbf{Return} $x$
    \end{algorithmic}
\end{algorithm}

In order for a feasible schedule to exist, the deadline $M$ chosen as input for UETInTreeLeveling should be at least the length of the critical path in the precedence graph -- which is the depth of the in-tree in the present case.
Furthermore, it can be assumed that $M \leq |J|$, which ensures that UETInTreeLeveling is executed in polynomial time.
\\
UETInTreeLeveling returns schedules with a very specific structure.
The following lemma gives some of those specificities that will prove to be useful in showing that UETInTreeLeveling is actually optimal.

\begin{lemma} \label{lemma:structural_specificities_adapted_hu_algorithm}
    Let $x$ be the schedule returned by UETInTreeLeveling.
    \\
    Suppose that at a given time step $\tau \in \{0, \dots, M-1\}$ the set $S_\tau$ of jobs scheduled at $\tau$ in $x$ is such that $|S_\tau| > L$.
    The following assertions are true:
    \begin{itemize}
        \item[] (i) Every $i \in S_\tau$ has latest starting time $\bar{\tau}_i = \tau$;
        \item[] (ii) For every $\tau' \leq \tau$, $|S_{\tau'}| \geq L$;
        \item[] (iii) For every $\tau' \leq \tau$, every $i \in S_{\tau'}$ has latest starting time $\bar{\tau}_i \leq \tau$.
    \end{itemize}
\end{lemma}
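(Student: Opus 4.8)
The plan is to first record two elementary facts about how UETInTreeLeveling behaves, then prove (i), and finally derive (ii) and (iii) together by a downward induction on the time step. The preliminary I would set up is that no job is ever \emph{overdue}: for every step $\tau'$ and every job $i$ still present in $G_{\tau'}$ one has $\bar{\tau}_i \geq \tau'$. Indeed, if $\bar{\tau}_i < \tau'$, then $i$ also lies in $G_{\bar{\tau}_i}$ (the graphs only shrink), so the special-condition step of iteration $\bar{\tau}_i$ would add $i$ to $S_{\bar{\tau}_i}$, contradicting $i \in G_{\tau'}$. A consequence, via the in-tree identity $\bar{\tau}_j = \bar{\tau}_i - 1$ for any immediate predecessor $j$ of $i$, is that a job $i \in G_{\tau'}$ with $\bar{\tau}_i = \tau'$ has all its immediate predecessors outside $G_{\tau'}$, hence is a leaf of $G_{\tau'}$. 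In particular, the jobs added by the special-condition step are leaves of $G_{\tau'}$, and since the leaf-selection step also picks only leaves, every job of $S_{\tau'}$ is a leaf of $G_{\tau'}$.

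For (i), assume $|S_\tau| > L$ and set $D_\tau = \{i \in G_\tau \mid \bar{\tau}_i = \tau\}$; by the preliminary, $D_\tau$ consists of leaves of $G_\tau$. If $|D_\tau| < L$, the leaf-selection step picks all of $D_\tau$ plus possibly further leaves, for a total of at most $L$ jobs, and the special-condition step adds nothing new, so $|S_\tau| \leq L$, a contradiction. Hence $|D_\tau| \geq L$; since every leaf of $G_\tau$ has latest start at least $\tau$ by the no-overdue property, the $L$ leaves of smallest latest start selected by the leaf-selection step all belong to $D_\tau$, and together with the special-condition step this forces $S_\tau = D_\tau$. So every job of $S_\tau$ has $\bar{\tau}_i = \tau$, which is (i), and we additionally record that $S_\tau = D_\tau$ has more than $L$ elements and consists of leaves of $G_\tau$.

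For (ii) and (iii), for a step $s$ let $\Lambda_s$ denote the set of leaves of $G_s$ with latest start at most $\tau$. From (i) we have $S_\tau = D_\tau \subseteq \Lambda_\tau$, so $|\Lambda_\tau| > L$. The heart of the argument is the claim that $|\Lambda_s| \geq |\Lambda_{s+1}|$ for every $s < \tau$, proved by an injection $\phi : \Lambda_{s+1} \hookrightarrow \Lambda_s$: a job $v \in \Lambda_{s+1}$ that is already a leaf of $G_s$ is sent to itself; otherwise $v$ has an immediate predecessor $u$ in $G_s$ which, being absent from $G_{s+1} = G_s \setminus S_s$, lies in $S_s$, hence is a leaf of $G_s$ with $\bar{\tau}_u = \bar{\tau}_v - 1 \leq \tau$, so $u \in \Lambda_s$, and one sets $\phi(v) = u$. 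Injectivity holds because a self-image lies in $G_{s+1}$ while a predecessor-image does not, and because in an in-tree a job has a unique successor, so a given $u$ is the chosen predecessor of at most one $v$. Iterating gives $|\Lambda_s| \geq |\Lambda_\tau| > L$ for all $s \leq \tau$. Finally, at any step $s < \tau$ the graph $G_s$ has at least $L$ leaves of latest start $\leq \tau$, so the leaf-selection step fills all $L$ of its slots with such leaves, giving $|S_s| \geq L$, while the special-condition step only adds jobs with latest start $s < \tau$; hence every job of $S_s$ has latest start $\leq \tau$. Combined with the case $s = \tau$ already settled by (i), this proves (ii) and (iii).

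The main obstacle will be the injection $\phi$ in the downward step: it is the only place where the in-tree hypothesis (uniqueness of successors) is genuinely used, and the case split together with the injectivity verification is the delicate point. Everything else is bookkeeping resting on the two preliminary observations about which jobs can end up in $S_{\tau'}$.
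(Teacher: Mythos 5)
Your proof is correct, and its skeleton matches the paper's: the same preliminary observations (line \ref{alg_line:adapted_hu_algorithm_special_condition} guarantees no job of $G_s$ has $\bar{\tau}_i < s$, hence jobs with $\bar{\tau}_i = s$ are leaves and every $S_s$ consists of leaves), essentially the same argument for (i), and the same key counting fact — every $G_{\tau'}$ with $\tau' \leq \tau$ contains at least $L$ leaves of latest starting time at most $\tau$ — from which (ii) and (iii) follow via the priority rule of line \ref{alg_line:adapted_hu_algorithm_job_selection}. Where you genuinely differ is in how that counting fact is established. The paper argues in one shot: $G_\tau$ is a subtree of $G_{\tau'}$, so the at least $L$ jobs with $\bar{\tau}_i = \tau$ are still present in $G_{\tau'}$ and "correspond to" at least $L$ leaves of $G_{\tau'}$ with latest starting time at most $\tau$ — implicitly, each such job either is itself a leaf of $G_{\tau'}$ or yields one inside its predecessor subtree, and these subtrees are pairwise disjoint in an in-tree. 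You instead run a downward induction on the time step with an explicit injection $\Lambda_{s+1} \hookrightarrow \Lambda_s$, sending a leaf either to itself or to a removed immediate predecessor, with injectivity coming from uniqueness of successors. Both arguments invoke the in-tree hypothesis at the same point (unique successors / disjoint predecessor subtrees); yours is a little longer but makes explicit something the paper glosses over, namely that the jobs with $\bar{\tau}_i = \tau$, though leaves of $G_\tau$, need not be leaves of $G_{\tau'}$, so they cannot be counted directly and one must exhibit distinct substitute leaves — exactly what your map $\phi$ does.
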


\begin{proof}
    First, it is clear that when time step $\tau$ is reached in UETInTreeLeveling, all jobs with latest starting time at most $\tau - 1$ have already been scheduled.
    This is true thanks to the instruction of line \ref{alg_line:adapted_hu_algorithm_special_condition} that ensures that all jobs reaching their latest starting time are scheduled.
    \\
    As a consequence, all jobs with latest starting time $\tau$ that remain in $G_\tau$ are leaves and they have the lowest $\bar{\tau}_i$ value.
    There are at least $L + 1$ such jobs otherwise it would be impossible to have $|S_{\tau}| > L$.
    The instruction of line \ref{alg_line:adapted_hu_algorithm_job_selection} therefore selects $L$ jobs with latest starting time $\tau$ and instruction of line \ref{alg_line:adapted_hu_algorithm_special_condition} adds the remaining ones to $S_\tau$. 
    This proves assertion (i).
    \\
    Due to assertion (i) and since $|S_\tau| > L$, there are at least $L$ jobs with latest starting time $\tau$.
    For any $\tau' \leq \tau$, $G_\tau$ is a subtree of $G_{\tau'}$ so $G_{\tau'}$ has at least $L$ leaves corresponding to jobs with latest starting time at most $\tau$.
    In particular, $G_{\tau'}$ has at least $L$ leaves, which ensures that $|S_{\tau'}| \geq L$ and proves assertion (ii).
    \\
    Since UETInTreeLeveling prioritizes jobs with lower $\bar{\tau}_i$ values and since for any $\tau' \leq \tau$, $G_{\tau'}$ has at least $L$ leaves corresponding to jobs with latest starting time $\tau$ or less, no job with latest starting time strictly higher than $\tau$ can be scheduled at $\tau$ or before.
    This proves assertion (iii).
\end{proof}

\begin{proposition} \label{prop:optimality_adapted_hu_algorithm}
    $L | in \text{-} tree, C_{max} \leq M, p_i=1, c_i=1 | F$ is solvable in polynomial time using UETInTreeLeveling.
\end{proposition}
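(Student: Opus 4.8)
The plan is to treat the claim in two parts: a quick check that Algorithm~\ref{alg:adapted_hu_algorithm} is polynomial and returns a feasible schedule, and then an optimality proof built around a prefix‑counting lower bound on the total overload that the greedy schedule is shown to attain, using Lemma~\ref{lemma:structural_specificities_adapted_hu_algorithm}. For polynomiality: since $M \leq |J|$, the main loop runs at most $|J|$ times, and each iteration only computes the leaves of $G_\tau$, orders them by $\bar{\tau}_i$ (the $\bar{\tau}_i$ being precomputed by one pass over the in‑tree) and updates the graph, which is polynomial in $|J|$. Feasibility holds because line~\ref{alg_line:adapted_hu_algorithm_job_selection} selects only leaves of $G_\tau$, and the jobs forced in by line~\ref{alg_line:adapted_hu_algorithm_special_condition} are leaves as well: as already observed in the proof of Lemma~\ref{lemma:structural_specificities_adapted_hu_algorithm}, when step $\tau$ is reached every job with latest starting time at most $\tau-1$ is already scheduled, in particular the in‑tree children of any job with $\bar{\tau}_i = \tau$; hence no precedence arc is violated, no job starts after its latest starting time, and $C_{max} \leq M$.

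Next I would reformulate the objective. Since $c_i = p_i = 1$, every schedule has $\sum_\tau r_\tau = |J|$, so $F(x) = |J| - \mathrm{ov}(x)$ where $\mathrm{ov}(x) = \sum_{\tau=0}^{M-1}\max(0, r_\tau(x) - L)$ is the total overload; maximizing $F$ is the same as minimizing $\mathrm{ov}$. For $\tau \in \{0,\dots,M-1\}$ set $N_\tau = \{\, i \in J : \bar{\tau}_i \leq \tau \,\}$. In any feasible schedule $y$ every job of $N_\tau$ starts within $\{0,\dots,\tau\}$, so those $\tau+1$ time steps carry at least $|N_\tau|$ jobs, whence $\mathrm{ov}(y) \geq \sum_{\tau'=0}^{\tau}(r_{\tau'}(y) - L) \geq |N_\tau| - L(\tau+1)$. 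Thus $\mathrm{ov}(y) \geq \max_{\tau}\bigl(|N_\tau| - L(\tau+1)\bigr)$ for every feasible $y$.

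It then remains to show the schedule $x$ returned by the algorithm attains this bound. If no time step is overloaded, $\mathrm{ov}(x) = 0$ and $x$ is trivially optimal. Otherwise let $\tau^*$ be the largest overloaded step and $S_\tau$ the set of jobs placed at $\tau$. By Lemma~\ref{lemma:structural_specificities_adapted_hu_algorithm}(ii), $|S_{\tau'}| \geq L$ for all $\tau' \leq \tau^*$, and by part (iii) every job placed in $\{0,\dots,\tau^*\}$ belongs to $N_{\tau^*}$; combined with the feasibility fact that every job of $N_{\tau^*}$ is placed within $\{0,\dots,\tau^*\}$, this shows the jobs on the prefix $\{0,\dots,\tau^*\}$ are \emph{exactly} $N_{\tau^*}$. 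Hence $\sum_{\tau'=0}^{\tau^*}|S_{\tau'}| = |N_{\tau^*}|$, and since each term is $\geq L$ and nothing is overloaded after $\tau^*$,
\[
\mathrm{ov}(x) = \sum_{\tau'=0}^{\tau^*}\bigl(|S_{\tau'}| - L\bigr) = |N_{\tau^*}| - L(\tau^*+1),
\]
which by the lower bound (taken at $\tau = \tau^*$) is $\leq \mathrm{ov}(y)$ for every feasible $y$. Therefore $F(x) \geq F(y)$ for all feasible $y$, so $x$ is optimal.

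The main obstacle I anticipate is exactly the identity ``the prefix $\{0,\dots,\tau^*\}$ carries precisely $N_{\tau^*}$'': it upgrades the one‑sided prefix inequality into an equality, which is what makes the greedy overload collapse onto the single universal lower bound. This is where all three assertions of Lemma~\ref{lemma:structural_specificities_adapted_hu_algorithm} are genuinely used, and one must be careful to keep the preliminary case split on whether an overloaded time step exists at all, as well as to argue cleanly that UETInTreeLeveling never lets a job miss its latest starting time.
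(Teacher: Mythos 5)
Your proposal is correct and follows essentially the same route as the paper's proof: both split on whether an overloaded time step exists, take the last such step $\tau^*$, and use assertions (ii) and (iii) of Lemma~\ref{lemma:structural_specificities_adapted_hu_algorithm} together with feasibility to show the greedy schedule meets the prefix bound determined by the jobs with $\bar{\tau}_i \leq \tau^*$. The only differences are cosmetic — you phrase the bound as a lower bound on total overload rather than an upper bound on $F$, and you spell out more explicitly why the jobs forced in at line~\ref{alg_line:adapted_hu_algorithm_special_condition} are leaves of $G_\tau$.
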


\begin{proof}
    Let $x$ be the schedule returned by UETInTreeLeveling.
    Figure \ref{fig:UETInTreeLeveling_schedule} gives an example of such a schedule and illustrates the decomposition of the objective function that follows.
    For any time step $\tau \in \{0, \dots, M-1\}$, let $S_{\tau}$ denote the set of jobs scheduled at $\tau$ in $x$.
    \\
    First, $x$ is a feasible schedule.
    Indeed, by selecting leaves of the subtree $G_\tau$, the instruction of line \ref{alg_line:adapted_hu_algorithm_job_selection} ensures that the predecessors of the selected jobs have already been scheduled, thus satisfying precedence constraints.
    As for the deadline constraint, it is satisfied thanks to the instruction of line \ref{alg_line:adapted_hu_algorithm_special_condition} that ensures that each job is scheduled no later than its latest starting time.
    \\
    If for every $\tau \in \{0, \dots, M-1\}$, $|S_\tau| \leq L$, then $x$ is optimal since $F(x) = |J|$, which reaches a natural upper bound on $F$.
    \\
    If not, let $\tau$ be the largest time step such that $|S_\tau| > L$.
    In any feasible schedule, all jobs $i$ with latest starting time $\bar{\tau}_i \leq \tau$ must be scheduled at $\tau$ or before.
    It implies that only jobs $i$ with latest starting time $\bar{\tau}_i > \tau$ can be scheduled on time steps $\tau+1, \tau+2, \dots, M-1$.
    An upper bound on $F$ is therefore given by $(\tau+1)L + |\{i \in J \mid \bar{\tau}_i > \tau\}|$.
    \\
    Assertion (ii) of Lemma \ref{lemma:structural_specificities_adapted_hu_algorithm} gives that the contribution to $F(x)$ of interval $[0, \tau+1]$ is $\sum_{\tau' = 0}^{\tau} \min(L, |S_{\tau'}|) = (\tau+1)L$.
    Since $\tau$ is the last time step such that $|S_\tau| > L$ and given assertion (iii) of Lemma \ref{lemma:structural_specificities_adapted_hu_algorithm}, all jobs of $\{i \in J \mid \bar{\tau}_i > \tau\}$ are scheduled on time steps $\tau+1, \tau+2, \dots, M-1$ without exceeding resource level $L$.
    It then holds that the contribution to $F(x)$ of interval $[\tau+1, M]$ is $|\{i \in J \mid \bar{\tau}_i > \tau\}|$.
    Finally, the value of $F(x)$ writes:
    $$F(x) = (\tau+1)L + |\{i \in J \mid \bar{\tau}_i > \tau\}|$$
    which reaches the previously given upper bound and is therefore optimal.
\end{proof}

\begin{figure*}[h!]
    \centering
    \scalebox{0.9}{\includegraphics{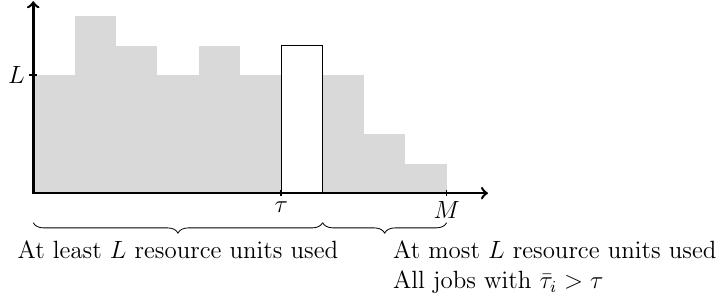}}
    \caption{Example of schedule resulting from UETInTreeLeveling}
    \label{fig:UETInTreeLeveling_schedule}
\end{figure*}

\subsection{A polynomial algorithm for $L1 | prec, C_{max} \leq M, c_i > 0 | F$} \label{section:L1_Cmax_cigeq1}

Suppose that $L = 1$ and that $c_i > 0$ for every $i \in J$.
Let us also assume that the makespan deadline $M$ is larger than the longest path in the precedence graph -- otherwise no feasible schedule exists.
For any $i \in J$, let us denote $\bar{\tau}_i$ the latest starting time of job $i$, i.e., the makespan deadline $M$ minus the longest path from job $i$ in the precedence graph $G$.
\\
Algorithm UnitResourceLeveling below starts by scheduling jobs sequentially in a topological order and then schedules the remaining ones at their latest starting time.

\begin{algorithm}
    \caption{UnitResourceLeveling$(J, G, p)$}
    \label{alg:unit_resource_level}
    \begin{algorithmic}[1]
        \State Compute $\bar{\tau}_i$, the latest starting time of job $i$, for every $i \in J$
        \State Choose a topological order $i_1, i_2, \dots, i_n$ on $J$
        \State $x_{i_1} = 0$
        \ForAll{$k \in \{2, \dots, n\}$} \label{alg_line:unit_resource_level_main_loop}
            \State $x_{i_k} \gets \min(\sum_{k' = 1}^{k - 1}p_{i_{k'}}, \bar{\tau}_{i_k})$ \label{alg_line:unit_resource_level_value_choice}
        \EndFor
        \State \textbf{Return} $x$
    \end{algorithmic}
\end{algorithm}

\begin{proposition} \label{prop:unit_resource_leveling_correctness}
    UnitResourceLeveling solves $L1 | prec, C_{max} \leq M, c_i > 0 | F$ in polynomial time.
\end{proposition}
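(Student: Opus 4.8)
The plan is to show that the schedule $x$ produced by UnitResourceLeveling is feasible, that $F(x)=\min\!\bigl(M,\sum_{i\in J}p_i\bigr)$, and that this value is an upper bound on $F$ over all feasible schedules, so that $x$ is optimal. The upper bound is elementary: for any feasible $y$ one has $F(y)=\sum_{\tau=0}^{C_{max}(y)-1}\min(1,r_\tau(y))\le C_{max}(y)\le M$; and since $L=1$ and every $c_i>0$, $\min(1,r_\tau(y))$ equals $1$ exactly at the time steps where some job runs, so $F(y)$ is the number of time steps covered by at least one job, whence $F(y)\le\sum_{i\in J}p_i$. Polynomiality is immediate: writing $\ell_i$ for the length of a longest path of $G$ starting at $i$, the values $\bar{\tau}_i=M-\ell_i$ are obtained by one reverse topological sweep of the DAG $G$, and the loop of UnitResourceLeveling is a single pass over $J$ doing arithmetic on the input numbers — nothing iterates over time steps.

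For feasibility, index $J$ along the chosen topological order as $i_1,\dots,i_n$ and set $\sigma_a=\sum_{k=1}^{a}p_{i_k}$ (with $\sigma_0=0$), so that line~\ref{alg_line:unit_resource_level_value_choice} reads $x_{i_a}=\min(\sigma_{a-1},\bar{\tau}_{i_a})$. First, $x_{i_a}\le\bar{\tau}_{i_a}=M-\ell_{i_a}\le M-p_{i_a}$, which gives $C_{max}(x)\le M$. Next, if there is an arc from $i_a$ to $i_b$ (so $a<b$), then $\ell_{i_a}\ge p_{i_a}+\ell_{i_b}$, hence $\bar{\tau}_{i_b}\ge\bar{\tau}_{i_a}+p_{i_a}\ge x_{i_a}+p_{i_a}$; and $\sigma_{b-1}\ge\sigma_a=\sigma_{a-1}+p_{i_a}\ge x_{i_a}+p_{i_a}$; therefore $x_{i_b}=\min(\sigma_{b-1},\bar{\tau}_{i_b})\ge x_{i_a}+p_{i_a}$, so precedence constraints hold.

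To prove optimality I distinguish whether some job is placed strictly before the current cursor, i.e.\ whether there is an index $k$ with $x_{i_k}=\bar{\tau}_{i_k}<\sigma_{k-1}$. If not, every job is placed at $x_{i_k}=\sigma_{k-1}$, the job intervals tile $[0,\sigma_n)$ without overlap, so $F(x)=C_{max}(x)=\sigma_n=\sum_{i\in J}p_i$; feasibility then forces $\sum_{i\in J}p_i\le M$, so $F(x)=\min(M,\sum_i p_i)$. Otherwise let $k^{\ast}$ be the smallest such index; note $k^{\ast}\ge2$ since $i_1$ is always placed at $0=\sigma_0$. By minimality of $k^{\ast}$, the jobs $i_1,\dots,i_{k^{\ast}-1}$ are all placed sequentially and thus tile $[0,\sigma_{k^{\ast}-1})$, with $\sigma_{k^{\ast}-1}\le M$, and $\bar{\tau}_{i_{k^{\ast}}}<\sigma_{k^{\ast}-1}$.

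The heart of the argument, which I expect to be the main obstacle, is a \emph{chain propagation} claim: if $i_{k^{\ast}}=d_0\to d_1\to\dots\to d_t$ is a longest path of $G$ starting at $i_{k^{\ast}}$, then UnitResourceLeveling places each $d_s$ at its latest start $\bar{\tau}_{d_s}$. I would prove this by induction along the path. The base case $x_{d_0}=\bar{\tau}_{d_0}$ holds by the choice of $k^{\ast}$. For the step, assume $x_{d_s}=\bar{\tau}_{d_s}$; the cursor $\sigma$ is nondecreasing and reaches at least $x_{d_s}+p_{d_s}=\bar{\tau}_{d_s}+p_{d_s}$ just after $d_s$ is placed, so when the later job $d_{s+1}$ is placed the cursor is $\ge\bar{\tau}_{d_s}+p_{d_s}$; since a suffix of a longest path is itself a longest path, $\ell_{d_s}=\sum_{s'\ge s}p_{d_{s'}}$, hence $\bar{\tau}_{d_{s+1}}=\bar{\tau}_{d_s}+p_{d_s}$, and therefore the minimum in line~\ref{alg_line:unit_resource_level_value_choice} is attained by $\bar{\tau}_{d_{s+1}}$, i.e.\ $x_{d_{s+1}}=\bar{\tau}_{d_{s+1}}$. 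The identity $\bar{\tau}_{d_{s+1}}=\bar{\tau}_{d_s}+p_{d_s}$ also shows that the intervals of $d_0,\dots,d_t$ tile $[\bar{\tau}_{i_{k^{\ast}}},\ \bar{\tau}_{i_{k^{\ast}}}+\ell_{i_{k^{\ast}}})=[\bar{\tau}_{i_{k^{\ast}}},M)$ contiguously. Consequently the set of occupied time steps of $x$ contains $[0,\sigma_{k^{\ast}-1})\cup[\bar{\tau}_{i_{k^{\ast}}},M)=[0,M)$ (the two intervals overlap because $\bar{\tau}_{i_{k^{\ast}}}<\sigma_{k^{\ast}-1}$) and is contained in $[0,C_{max}(x))\subseteq[0,M)$; hence $F(x)=M$, and since always $F(x)\le\sum_i p_i$ this gives $M\le\sum_i p_i$ and again $F(x)=\min(M,\sum_i p_i)$. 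In both cases $F(x)$ attains the universal upper bound, so $x$ is optimal.
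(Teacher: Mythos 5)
Your proof is correct and follows essentially the same route as the paper: show feasibility from the two terms of the $\min$ in line~\ref{alg_line:unit_resource_level_value_choice}, then argue that either the jobs are scheduled consecutively (so $F=\sum_i p_i$) or, from the first index where the latest-start cap bites, the prefix tiles $[0,\sigma_{k^*-1})$ while a longest path from $i_{k^*}$ tiles $[\bar{\tau}_{i_{k^*}},M)$, so the resource is used throughout $[0,M]$. The only difference is that you spell out, via the chain-propagation induction, the step the paper asserts tersely ("by definition of $\bar{\tau}_{i_k}$, the jobs on a longest path starting from $i_k$ are executed without interruption on $[\bar{\tau}_{i_k},M]$"), which is a welcome added detail rather than a deviation.
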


\begin{proof}
    Let us prove that UnitResourceLeveling can be executed in $O(|J|+|\mathcal{A}|)$ time (recall that $\mathcal{A}$ is the set of arcs in the precedence graph).
    First, $\bar{\tau}_i$ values for every $i \in J$ as well as the topological order can be computed with a graph traversal in $O(|J|+|\mathcal{A}|)$ time.
    As for the main loop, it can be executed in $O(|J|)$ time.
    \\
    Let us now prove that UnitResourceLeveling provides a schedule that maximizes the time during which at least one unit of resource is used.
    \\
    In the particular case where $\sum_{i \in J} p_i \leq M$, it is easily seen that all the jobs can be scheduled consecutively in a topological order thus leading to a resource usage that is always at most $L = 1$.
    It is exactly what the algorithm does since, when $\sum_{i \in J} p_i \leq M$, the inequality $\sum_{k' = 1}^{k - 1}p_{i_k} \leq \bar{\tau}_{i_k}$ always holds on line \ref{alg_line:unit_resource_level_value_choice}.
    \\
    If $\sum_{i \in J} p_i > M$ then there exists $k>1$ such that $\sum_{k' = 1}^{k - 1}p_{i_{k'}} > \bar{\tau}_{i_k}$.
    Let $k$ be the first such index.
    In the schedule returned by UnitResourceLeveling, jobs $i_1, i_2, \dots, i_{k-1}$ are scheduled consecutively without interruption on time interval $[0, \sum_{k' = 1}^{k - 1}p_{i_{k'}}]$.
    By definition of $\bar{\tau}_{i_k}$, the jobs on a longest path starting from $i_k$ are executed without interruption on time interval $[\bar{\tau}_{i_k}, M]$.
    Given that $\sum_{k' = 1}^{k - 1}p_{i_{k'}} > \bar{\tau}_{i_k}$, the returned schedule uses at least one unit of resource constantly on time interval $[0, M]$.
    The amount of resource below the level $L = 1$ is therefore maximized.
    \\
    Also note that the vector $x$ returned by UnitResourceLeveling is a feasible schedule as it satisfies precedence constraints.
    Indeed, given an arc $(i, j)$ of the precedence graph it is clear that, by definition, $\bar{\tau}_i + p_i \leq \bar{\tau}_j$ and that $i$ comes before $j$ in any topological order.
    If $x_j = \bar{\tau}_j$, then $x_i \leq \bar{\tau}_i \leq \bar{\tau}_j - p_i = x_j - p_i$ and therefore $x_i + p_i \leq x_j$.
    Assuming that $i = i_k$ and $j = i_{k'}$ in the topological order, if $x_j = \sum_{k'' = 1}^{k' - 1}p_{i_{k''}}$, then $x_i \leq \sum_{k'' = 1}^{k - 1}p_{i_{k''}} \leq \sum_{k'' = 1}^{k' - 1}p_{i_{k''}} - p_i = x_j - p_i$ and therefore $x_i + p_i \leq x_j$.
\end{proof}

\subsection{A polynomial algorithm for $L | r_i, d_i, c_i = 1, pmtn | F$} \label{section:L_ri_di_ci1_pmtn}
The aim of this section is to show that Problem $L | r_i, d_i, c_i = 1, pmtn | F$ reduces to a linear program.
Let $\tau_0 < \tau_1 < \tau_2 < \dots < \tau_K$ be such that $\{\tau_0, \tau_1, \dots, \tau_K\} = \{r_i \mid i \in J\} \cup \{d_i \mid i \in J\}$.
For convenience, let also $\kappa_i = \{k \in \{1, \dots, K\}, [\tau_{k-1}, \tau_k] \subseteq [r_i, d_i]\}$.
The following variables are considered:
\begin{itemize}
    \item[-] $x_{ik}$ $\forall i \in J,\; \forall k \in \kappa_i$: the number of time units of job $i$ processed during interval $[\tau_{k-1}, \tau_k]$;
    \item[-] $y_k$ $\forall k \in \{1, \dots, K\}$: number of time units processed on interval $[\tau_{k-1}, \tau_k]$ not exceeding the resource level;
    \item[-] $z_k$ $\forall k \in \{1, \dots, K\}$: number of time units processed on interval $[\tau_{k-1}, \tau_k]$ over the resource level;
\end{itemize}
Note that the following program only provides an implicit solution of Problem $L | r_i, d_i, c_i = 1, pmtn | F$.
Variables $x_{ik}$ give the number of time units of a job $i$ in the interval $[\tau_{k-1}, \tau_k]$ but this does not tell when each individual time unit is executed.
It will be shown later that an algorithm can be used to deduce a fully-fledged solution of $L | r_i, d_i, c_i = 1, pmtn | F$ from $x_{ik}$ values.
\\
The program writes: 
\begin{align}
    \max_{x, y, z} \quad &\sum_{k = 1}^{K} y_k \nonumber\\
    s.t. \quad &
        \sum_{k \in \kappa_i} x_{ik} = p_i
        &\forall i \in J \label{constraint:first}
    \\
    &
        y_k + z_k = \sum_{\substack{i \in J \\ \kappa_i \ni k}} x_{ik}
        &\forall k \in \{1, \dots, K\}
    \\
    &
        0 \leq x_{ik} \leq \tau_k - \tau_{k-1}
        &\forall i \in J,\; \forall k \in \{1, \dots, K\}
    \\
    &
        0 \leq y_k \leq (\tau_k - \tau_{k-1})L
        &\forall k \in \{1, \dots, K\}
    \\
    &
        z_k \geq 0
        &\forall k \in \{1, \dots, K\} \label{constraint:last}
\end{align}
Note that in this program, maximizing $\sum_{k = 1}^{K} y_k$ is equivalent to minimizing $\sum_{k = 1}^{K} z_k$ since $\sum_{k = 1}^{K} y_k + z_k = \sum_{i \in J} p_i$.
The alternative program writes:
\begin{align*}
    \min_{x, y, z} \quad &\sum_{k = 1}^{K} z_k \nonumber\\
    s.t. \quad &(\ref{constraint:first}) \dots (\ref{constraint:last})
\end{align*}
It can be interpreted as a minimum cost flow problem.
Figure \ref{fig:flow_graph} shows the structure of the flow graph with flow values on arcs.
The costs are equal to one for arcs with $z_k$ and $0$ otherwise.
Arcs with $x_{ik}$ have capacity $\tau_{k} - \tau_{k-1}$, arcs with $y_{k}$ have capacity $L(\tau_{k} - \tau_{k-1})$ and arcs with $z_{k}$ have infinite capacity.

\begin{figure*}[h!]
    \centering
    \scalebox{0.9}{\includegraphics{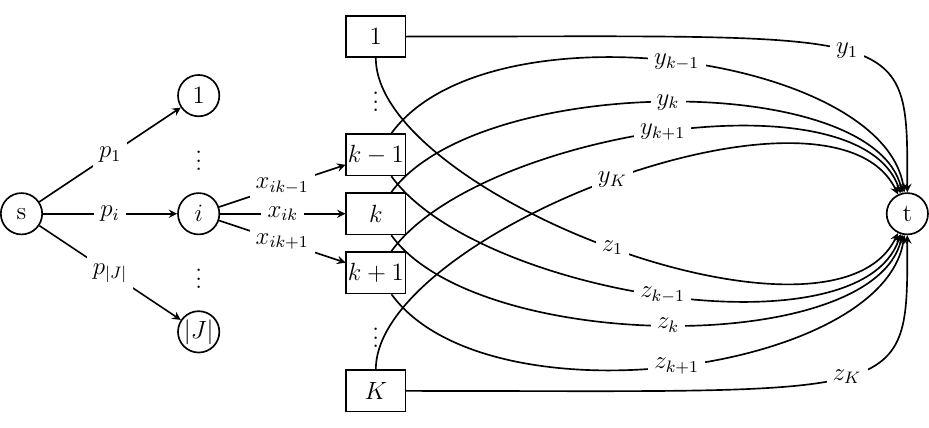}}
    \caption{Flow graph with source $s$ and sink $t$}
    \label{fig:flow_graph}
\end{figure*}

An explicit solution of Problem $L | r_i, d_i, c_i = 1, pmtn | F$ can be constructed with an additional post-processing step.
Consider the interval $[\tau_{k-1}, \tau_k]$ for some $k$ in $\{1, \dots, K\}$ and let $J_k = \{i \in J \mid [\tau_{k-1}, \tau_k] \subseteq [r_i, d_i]\}$.
\\
The following algorithm, illustrated in Figure \ref{fig:interval_scheduling}, computes the sub-intervals of $[\tau_{k-1}, \tau_k]$ on which each job of $J_k$ is executed.
Its idea is simple: the interval is filled line by line from left to right by placing one job after another and starting a new line every time the end of the interval is reached.
Those line breaks may cause a job to be split into two parts, one reaching the end of the interval and the other starting back from the beginning.

\begin{algorithm}
    \caption{IntervalScheduling$(J_k, x_{.k}, \tau_{k-1}, \tau_k)$}
    \label{alg:interval_job_scheduling}
    \begin{algorithmic}[1]
        \State $t \gets 0$
        \ForAll{$i \in J_k$} \label{alg_line:interval_job_scheduling_main_loop}
            \State $t' = t + x_{i k}$
            \If{$t' \leq \tau_k - \tau_{k-1}$}
                \State $I_i \gets \{[\tau_{k-1} + t, \tau_{k-1} + t']\}$
            \Else
                \State $t' = t' - (\tau_k - \tau_{k-1})$
                \State $I_i = \{[\tau_{k-1}, \tau_{k-1} + t'], [\tau_{k-1} + t, \tau_k]\}$
            \EndIf
            \State $t = t'$
        \EndFor
        \State \textbf{Return} $(I_i)_{i \in J_k}$
    \end{algorithmic}
\end{algorithm}

\begin{figure}[h!]
    \centering
    \includegraphics{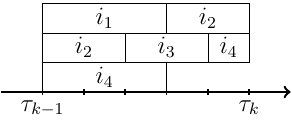}
    \caption{Example of interval job scheduling for $J_k = \{i_1, i_2, i_3, i_4\}$, $\tau_k - \tau_{k-1} = 5$, $x_{i_1 k} = 3$, $x_{i_2 k} = 4$, $x_{i_3 k} = 2$, $x_{i_4 k} = 4$}
    \label{fig:interval_scheduling}
\end{figure}

It is clear that IntervalScheduling is executed in $O(|J|)$ time.
Executing it for each $k \in \{1, \dots, K\}$ can therefore be done in $O(|J|^2)$.
Note that in the solution provided by IntervalScheduling, each job $i_l$ is executed on either one or two sub-intervals of $[\tau_{k-1}, \tau_k)$.
As a consequence, each job is executed on at most $O(|J|)$ disjoint intervals in the final solution. 

\begin{proposition} \label{prop:premptive_interval_job_scheduling}
    $L | r_i, d_i, c_i = 1, pmtn | F$ is solvable in polynomial time.
\end{proposition}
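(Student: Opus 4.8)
\emph{Proof plan.} The statement follows by assembling three facts: the linear program defined by constraints (\ref{constraint:first})–(\ref{constraint:last}) can be solved in polynomial time; its optimum equals the optimal value $F^*$ of the scheduling problem; and IntervalScheduling turns an optimal solution $(x_{ik})$ into an explicit feasible schedule attaining that value. I would organize the proof along these three steps.

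First, I would note that $K \le 2|J|$ since $\tau_0 < \dots < \tau_K$ are release and due dates, so the program has $O(|J|^2)$ variables and constraints; as observed above it is a minimum cost flow instance on a network with $O(|J|)$ nodes, $O(|J|^2)$ arcs and integer capacities, hence solvable in polynomial time, and the integrality of the capacities yields an optimal flow with all $x_{ik}, y_k, z_k$ integer (which keeps the construction below consistent with integer-valued schedules).

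Second, I would prove $\mathrm{OPT}(\mathrm{LP}) \ge F^*$. Given any feasible preemptive schedule $x$, set $x_{ik}$ to the total amount of job $i$ processed inside $[\tau_{k-1},\tau_k]$: then (\ref{constraint:first}) holds, $x_{ik}=0$ unless $k\in\kappa_i$ because $i$ may only run inside $[r_i,d_i]$, and $x_{ik}\le\tau_k-\tau_{k-1}$ since a job uses at most one unit of resource at a time. With $W_k=\sum_{i:\,\kappa_i\ni k}x_{ik}$, $y_k=\min(W_k,L(\tau_k-\tau_{k-1}))$ and $z_k=W_k-y_k$ all constraints are satisfied, and since $\int_{\tau_{k-1}}^{\tau_k}\min(L,r(\tau,x))\,d\tau \le \min\bigl(L(\tau_k-\tau_{k-1}),\,\int_{\tau_{k-1}}^{\tau_k} r(\tau,x)\,d\tau\bigr)=y_k$, summing over $k$ gives $F(x)\le\sum_k y_k\le\mathrm{OPT}(\mathrm{LP})$.

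Third — the step I expect to need the most care — I would show that applying IntervalScheduling on each interval to an optimal solution $(x_{ik})$ produces a feasible schedule $\bar x$ with $F(\bar x)=\mathrm{OPT}(\mathrm{LP})$. Feasibility: inside $[\tau_{k-1},\tau_k]$ each job is placed on one or two sub-intervals that are disjoint because $x_{ik}\le\tau_k-\tau_{k-1}$, so no job runs in parallel with itself, and each job receives its total demand $p_i$ within $\bigcup_{k\in\kappa_i}[\tau_{k-1},\tau_k]\subseteq[r_i,d_i]$. For the objective, fix an interval of length $\ell=\tau_k-\tau_{k-1}$: IntervalScheduling lays out the $W_k$ units of work left to right, wrapping every $\ell$ units, so at offset $t\in[0,\ell)$ the number of jobs running equals $\lceil (W_k-t)/\ell\rceil$ for $t<W_k$ and $0$ otherwise. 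A short case analysis on whether $W_k\le L\ell$ then gives $\int_0^\ell \min(L,r)\,dt=\min(W_k,L\ell)$; the main point to check is that when $W_k>L\ell$ \emph{every} point of the interval is covered by at least $L$ jobs, so that $\min(L,r(\tau))$ integrates to exactly $y_k$ rather than something smaller. Since the LP objective maximizes $\sum_k y_k$, an optimal solution chooses the split so that $y_k=\min(W_k,L\ell)$ for every $k$, whence $F(\bar x)=\sum_k\min(W_k,L\ell)=\sum_k y_k=\mathrm{OPT}(\mathrm{LP})$. Combining with the second step forces $F^*=\mathrm{OPT}(\mathrm{LP})=F(\bar x)$, so $\bar x$ is optimal; and since solving the flow and running IntervalScheduling $K$ times costs $O(|J|^2)$ on top of the flow computation, the whole procedure runs in polynomial time.
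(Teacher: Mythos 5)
Your proposal is correct and follows essentially the same route as the paper: reduce to the LP (\ref{constraint:first})--(\ref{constraint:last}), observe it is a polynomially solvable minimum-cost flow, and use IntervalScheduling to turn the $x_{ik}$ values into an explicit schedule. You merely spell out details the paper leaves implicit (that the LP optimum upper-bounds $F^*$ and that the wrap-around layout attains $\min(W_k,L(\tau_k-\tau_{k-1}))$ on each interval), and these verifications are accurate.
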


\begin{proposition} \label{prop:UET_interval_job_scheduling}
    $L | r_i, d_i, p_i = 1, c_i = 1 | F$ is solvable in polynomial time.
\end{proposition}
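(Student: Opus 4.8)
The plan is to reduce $L | r_i, d_i, p_i = 1, c_i = 1 | F$ to the preemptive problem $L | r_i, d_i, c_i = 1, pmtn | F$ handled in Proposition~\ref{prop:premptive_interval_job_scheduling}, and then to show that, in the unit-processing-time case, the value of the preemptive relaxation is already attained by a genuine non-preemptive integer schedule. Since any non-preemptive integer schedule is a feasible preemptive schedule, the optimal value of the UET problem is at most that of the preemptive problem; it therefore suffices to exhibit a non-preemptive UET schedule whose objective value equals the preemptive optimum, which I will build from the solution of the flow/LP formulation of Section~\ref{section:L_ri_di_ci1_pmtn}.

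First I would invoke the flow reformulation underlying Proposition~\ref{prop:premptive_interval_job_scheduling}. With $p_i = 1$ and integer release and due dates, the breakpoints $\tau_0 < \dots < \tau_K$ are integers, so every capacity in the flow graph of Figure~\ref{fig:flow_graph} (namely $\tau_k - \tau_{k-1}$, $L(\tau_k - \tau_{k-1})$, or $\infty$) and every supply (each $p_i = 1$) is integral. By the integrality theorem for minimum-cost flows, there is an optimal integral flow, hence an optimum with all $x_{ik} \in \mathbb{Z}_{\ge 0}$. The constraint $\sum_{k \in \kappa_i} x_{ik} = p_i = 1$ then forces, for each job $i$, exactly one index $k(i) \in \kappa_i$ with $x_{i,k(i)} = 1$ and $x_{ik} = 0$ otherwise: each unit job is assigned entirely to a single interval $[\tau_{k-1}, \tau_k]$, which is contained in its window $[r_i, d_i]$.

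It remains to turn this implicit solution into an explicit UET schedule, interval by interval. Fix $k$, write $\ell = \tau_k - \tau_{k-1}$ and let $n_k = \sum_i x_{ik}$ be the number of unit jobs assigned to this interval. Distributing these $n_k$ jobs as evenly as possible over the $\ell$ integer time steps of $[\tau_{k-1}, \tau_k)$ — for instance sending the $j$-th such job to time step $\tau_{k-1} + (j \bmod \ell)$, a non-preemptive analogue of IntervalScheduling — gives each time step either $\lfloor n_k/\ell \rfloor$ or $\lceil n_k/\ell \rceil$ jobs. A short case check, $n_k \le L\ell$ versus $n_k > L\ell$, shows the contribution of this interval to $F$ equals $\min(n_k, L\ell)$, which is exactly the optimal value $y_k = \min\!\big(\sum_i x_{ik},\, L(\tau_k - \tau_{k-1})\big)$ of the flow solution. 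Summing over $k$, the constructed schedule respects all release and due dates and has objective value $\sum_k y_k$, i.e.\ the preemptive optimum; by the first paragraph it is optimal for the UET problem. Solving the min-cost flow and performing the interval-by-interval rounding both run in polynomial time, which gives the claim.

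The only delicate point is the passage from the preemptive solution to a non-preemptive integral one: one must (i) choose the flow optimum integral so that no unit job is split between two intervals, and (ii) check that the even spreading inside a single interval never loses objective value relative to $y_k$. Both are settled by the integrality of the flow network and the elementary counting above; note that the flow optimum need not be unique, but any integral optimum works.
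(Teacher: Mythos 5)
Your proposal is correct and follows essentially the same route as the paper: reduce to the preemptive problem of Section~\ref{section:L_ri_di_ci1_pmtn}, invoke min-cost-flow integrality to get an integral assignment of the unit jobs to intervals, and observe that for $p_i=1$ such an integral preemptive optimum yields a genuine non-preemptive schedule of the same value. Your explicit even-spreading placement within each interval just spells out the step the paper delegates to its IntervalScheduling routine, so the two arguments coincide in substance.
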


\begin{proof}
    Since $L | r_i, d_i, c_i = 1, pmtn | F$ reduces to a minimum cost flow problem, standard algorithms can be used to solve it and yield integer solutions when $p_i, r_i, d_i \in \mathbb{N}$.
    In particular, optimal integer solutions can be found for $L | r_i, d_i, p_i = 1, c_i = 1, pmtn | F$, which is actually the same as finding optimal solutions for $L | r_i, d_i, p_i = 1, c_i = 1 | F$.
\end{proof}

\subsection{A polynomial algorithm for $L2 | r_i, d_i, pmtn | F$} \label{section:L2_ri_di_pmtn_F}

Similarly as in Section \ref{section:L_ri_di_ci1_pmtn}, the aim of this section is to show that problem $L2 | r_i, d_i, pmtn | F$ reduces to a linear program.
It will be assumed that all considered instances are feasible, namely that $p_i \leq d_i - r_i$ for every job $i \in J$.
\\
In Problem $L2 | r_i, d_i, pmtn | F$, the resource consumption $c_i$ of job $i \in J$ can take any value in $\mathbb{N}$, yet it is sufficient to consider the case where $c_i \in \{1, 2\}$.
Indeed, since objective function $F$ only takes into consideration the resource consumption that fits under resource level $L = 2$, any $c_i > 2$ can be handled as $c_i = 2$ without changing the optimum.
As for potential jobs $c_i = 0$, they have no impact on the objective.

Let us then suppose that $c_i \in \{1, 2\}$ for every $i \in J$ and denote $J_1 = \{i \in J \mid c_i = 1\}$ and $J_2 = \{i \in J \mid c_i = 2\}$.
Let $\tau_0 < \tau_1 < \tau_2 < \dots < \tau_K$ be such that $\{\tau_0, \tau_1, \dots, \tau_K\} = \{r_i \mid i \in J\} \cup \{d_i \mid i \in J\}$.
For convenience, let also $\kappa_i = \{k \in \{1, \dots, K\}, [\tau_{k-1}, \tau_k] \subseteq [r_i, d_i]\}$.
The following variables are considered:
\begin{itemize}
    \item[-] $x^1_{ik}$ $\forall i \in J_1,\; \forall k \in \kappa_i$: the number of time units of job $i$ processed during interval $[\tau_{k-1}, \tau_k]$ that fit under the resource level;
    \item[-] $x^2_{ik}$ $\forall i \in J_2,\; \forall k \in \kappa_i$: the number of time units of job $i$ processed during interval $[\tau_{k-1}, \tau_k]$ that fit under the resource level.
\end{itemize}

Just like in Section \ref{section:L_ri_di_ci1_pmtn}, the following linear program only gives an implicit solution of the problem from which a proper schedule can then be deduced.
The program writes:
\begin{align}
    \max_{x^1, x^2} \quad &\sum_{i \in J_1} \sum_{k \in \kappa_i} x^1_{ik} + 2 \sum_{i \in J_2} \sum_{k \in \kappa_i} x^2_{ik} \nonumber\\
    s.t. \quad &
        \sum_{k \in \kappa_i} x^\alpha_{ik} \leq p_i
        &\forall \alpha \in \{1, 2\},\; \forall i \in J_\alpha
    \\
    &
        x^1_{ik} \leq (\tau_{k+1} - \tau_k) - \sum_{\substack{j \in J_2\\\kappa_j \ni k}} x^2_{jk}
        &\forall i \in J_1,\; \forall k \in \kappa_i
    \\
    &
        \sum_{\substack{i \in J_1\\\kappa_i \ni k}} x^1_{ik} \leq 2\left((\tau_{k+1} - \tau_k) - \sum_{\substack{j \in J_2\\\kappa_j \ni k}} x^2_{jk}\right)
        &\forall k \in \{1, \dots, K\}
    \\
    &
        \sum_{\substack{i \in J_2\\\kappa_i \ni k}} x^2_{ik} \leq (\tau_{k+1} - \tau_k)
        \forall k \in \{1, \dots, K\}
    \\
    &
        x^\alpha_{ik} \geq 0
        &\forall \alpha \in \{1, 2\},\; \forall i \in J_\alpha,\; \forall k \in \kappa_i
\end{align}

A complete solution in which each job is given a set of intervals on which it is executed can be deduced from optimal values of $x^1_{ik}$ and $x^2_{ik}$ variables.
Recall that variable $x^1_{ik}$ and $x^2_{ik}$ only represent the portions of the jobs that fit under the resource level.
Those portions of jobs are scheduled on each interval $[\tau_{k-1}, \tau_k]$ as shown in Figure \ref{fig:L2_interval_scheduling}: the jobs of $J_2$ are scheduled first, then the jobs of $J_1$ are scheduled in the remaining space with Algorithm \ref{alg:interval_job_scheduling}.
Once this is done, the processing time left for each job $i$ can be scheduled anywhere in the remaining space of its availability interval $[r_i, d_i]$.
The initial assumption that $p_i \leq d_i - r_i$ for every job $i \in J$ ensures that it is possible to do so.

\begin{figure}[h!]
    \centering
    \includegraphics{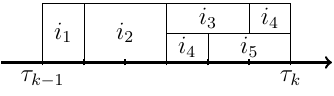}
    \caption{Example of interval job scheduling}
    \label{fig:L2_interval_scheduling}
\end{figure}

\begin{proposition} \label{prop:L2_preemptive_interval_job_scheduling}
    $L2 | r_i, d_i, pmtn | F$ is solvable in polynomial time.
\end{proposition}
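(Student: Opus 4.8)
The plan is to show that the linear program displayed above is an exact reformulation of $L2 | r_i, d_i, pmtn | F$ that can be solved in polynomial time, and that the post-processing already described turns an optimal solution of the program into an optimal schedule. Since the breakpoints satisfy $K \le 2|J|$, the program has $O(|J|^2)$ variables $x^1_{ik}, x^2_{ik}$ and $O(|J|^2)$ constraints with integral data, so one optimal solution is computable in polynomial time by any polynomial-time linear programming algorithm. No integrality argument is needed: as this section allows non-integral preemptive schedules, fractional values of the $x^\alpha_{ik}$ are admissible in a genuine schedule.

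First I would check that, from any feasible point $(x^1,x^2)$, the post-processing builds a feasible schedule $S$ with $F(S) \ge \sum_{i \in J_1}\sum_{k}x^1_{ik} + 2\sum_{i \in J_2}\sum_{k}x^2_{ik}$. On each interval $[\tau_{k-1},\tau_k]$ the $J_2$-portions are laid out consecutively, which is possible without self-overlap because $\sum_{i \in J_2,\,\kappa_i \ni k}x^2_{ik} \le \tau_k - \tau_{k-1}$; during that sub-interval the resource use is at least $2$, contributing $2x^2_{ik}$ per job. In the remaining part of the interval, of length $(\tau_k - \tau_{k-1}) - \sum_{j \in J_2}x^2_{jk}$, the $J_1$-portions are packed by Algorithm \ref{alg:interval_job_scheduling} (see Figure \ref{fig:L2_interval_scheduling}): the constraint $x^1_{ik} \le (\tau_k - \tau_{k-1}) - \sum_{j}x^2_{jk}$ guarantees that each $J_1$-job occupies at most two disjoint sub-intervals, hence no self-overlap, and the constraint $\sum_{i \in J_1}x^1_{ik} \le 2\bigl((\tau_k - \tau_{k-1}) - \sum_{j}x^2_{jk}\bigr)$ guarantees at most two $J_1$-jobs run simultaneously, so all this work lies under level $2$ and contributes $\sum_{i}x^1_{ik}$. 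Finally the residual processing time $p_i - \sum_k x^\alpha_{ik}$ of each job $i$ is placed anywhere inside $[r_i,d_i]$ outside the sub-intervals already used by $i$, which is possible since this residual is at most $(d_i - r_i) - \sum_k x^\alpha_{ik}$ by the standing assumption $p_i \le d_i - r_i$; adding these residual portions can only increase $F$, giving the stated inequality, and all steps run in $O(|J|^2)$ time.

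The main step is the matching upper bound: for every feasible schedule $S$ I would exhibit a feasible point of the program with objective value exactly $F(S)$, so that $F(S)$ is at most the optimum. This is a fractional charging argument carried out interval by interval. Since the $\tau_k$ are exactly the release and due dates, only jobs with $k \in \kappa_i$ run on $(\tau_{k-1},\tau_k)$; at each instant $t$ of this interval at which at least one $J_2$-job runs, designate one such job and let $x^2_{ik}$ be the total length of the set of instants for which $i$ is designated; at each instant at which no $J_2$-job runs, designate up to two of the running $J_1$-jobs and let $x^1_{ik}$ be the total designated length for $i$. Because the designations are disjoint across jobs at each instant, with at most one designated $J_2$-job and at most two designated $J_1$-jobs per instant, the point $(x^1,x^2)$ satisfies the window constraints (designated time is bounded by running time, hence by $p_i$) and the two capacity constraints, and $\sum_i x^1_{ik} + 2\sum_i x^2_{ik}$ equals $\int_{\tau_{k-1}}^{\tau_k}\min(2, r_\tau(S))\,d\tau$; summing over $k$ yields objective value $F(S)$. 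Combined with the previous step, the schedule produced by the post-processing from an optimal LP solution is then optimal (in particular the residual portions added above cannot overshoot). I expect this charging argument to be the only delicate point: one must keep the designations disjoint while simultaneously realising the value $\min(2,\cdot)$ and respecting the way a running $J_2$-job blocks the whole capacity, which is exactly the structure the program encodes; the remaining verifications (feasibility of the constructed schedule, the interval-length bound on $x^1_{ik}$) are the same routine observations about Algorithm \ref{alg:interval_job_scheduling} already made in Section \ref{section:L_ri_di_ci1_pmtn}.
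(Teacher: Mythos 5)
Your proposal is correct and follows essentially the same route as the paper: the same linear program over the interval decomposition induced by the release and due dates, solved in polynomial time, followed by the same per-interval reconstruction (jobs of $J_2$ first, then $J_1$ via Algorithm \ref{alg:interval_job_scheduling}, residual portions placed in the leftover space of $[r_i,d_i]$). The paper states the proposition without a detailed proof, and your two verifications -- that the reconstruction achieves the LP objective and, via the per-instant designation argument, that every feasible schedule yields an LP point of equal value -- are exactly the details left implicit there, correctly carried out.
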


\begin{remark}
    Unlike in the case of Section \ref{section:L_ri_di_ci1_pmtn}, nothing guarantees to find integer solution to the problem.
\end{remark}

\begin{remark}
    The method can be generalized to Problem $L | r_i, d_i, pmtn, c_i \in \{1, L\} | F$, in which jobs either consume one or $L$ units of resource.
\end{remark}

\subsection{A pseudo-polynomial algorithm for $L2 | C_{max} \leq M | F$} \label{section:L2_Cmax_F}

In order to solve $L2 | C_{max} \leq M | F$ in pseudo-polynomial time, some assumptions are made w.l.o.g. to exclude trivial cases.
First, it can be assumed that resource consumptions values are in $\{0, 1, 2\}$.
Indeed, when $L = 2$, replacing resource consumption of each job $i \in J$ by $\widetilde{c}_i = \min(c_i, 2)$ does not change the objective function values.
Jobs with resource consumption $c_i = 0$ have no impact and can be scheduled at any feasible date, they can be ignored.
Only jobs with $c_i \in \{1, 2\}$ are to be considered in the sequel.

Let us denote $J_1 = \{i \in J \mid c_i = 1\}$ and $J_2 = \{i \in J \mid c_i = 2\}$.
Let also $p_{J_1} = \sum_{i \in J_1} p_i$ and $p_{J_2} = \sum_{i \in J_2} p_i$.
If $p_{J_2} \geq M$, the problem is solved simply by scheduling the jobs of $J_2$ one after another while possible and at their latest possible date afterwards.
This uses at least two units of resource constantly on $[0, M]$ thus maximizing $F$.
Let us therefore assume that $p_{J_2} < M$.

The following lemma links the objective value of a schedule $x$ to a subset of $J_1$, allowing for the construction of the former from the latter.

\begin{lemma} \label{lemma:L2_F_geq_subset_sum}
    Let $I = (J, p, c, M)$ be an instance of $L2 | C_{max} \leq M| F$.
    \\
    For any subset of jobs $J' \subseteq J_1$, there exists a schedule $x$ such that:
    \begin{align*}
        F(x) \geq 2 p_{J_2} &+ \min(\sum_{i \in J'} p_i, M - p_{J_2}) \\
        &+ \min(\sum_{i \in J_1 \setminus J'} p_i, M - p_{J_2})
    \end{align*}
    Furthermore, $x$ can be computed in polynomial time.
\end{lemma}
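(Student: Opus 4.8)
The plan is to build, in three successive layers, an explicit schedule $x$ meeting the bound and then to estimate $F(x)$ by splitting the time axis at $p_{J_2}$. The construction uses that there are no precedence constraints (so any job may start at any date $\le M-p_i$, feasibility forcing $p_i\le M$), and that $L=2$, so that exceeding the level is allowed and harmless for a lower bound on $F$.

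\emph{Layer 1.} I would first schedule the jobs of $J_2$ back to back from date $0$; this occupies $[0,p_{J_2}]$ and is feasible since $p_{J_2}<M$. On all of $[0,p_{J_2}]$ the resource use is at least $2$, so this interval contributes exactly $2p_{J_2}$ to $F(x)$. Set $T=M-p_{J_2}>0$.

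\emph{Layers 2 and 3.} Next I would greedily place the unit-consumption jobs of $J'$ so as to cover a longest possible left-justified sub-interval of $[p_{J_2},M]$. Keep a pointer $c$ with the invariant ``$[p_{J_2},c]$ is covered by the $J'$-jobs already placed'', initialized to $c=p_{J_2}$; for each $a\in J'$ in turn, leave $c$ unchanged and place $a$ anywhere feasible if $c=M$, and otherwise start $a$ at date $\min(c,M-p_a)$ and update $c\leftarrow\min(M,c+p_a)$. A short induction shows that $\min(c,M-p_a)\in[0,M-p_a]$, that the invariant is preserved — the only delicate case being $p_a>T$, where $a$ must overlap the $J_2$-block but, started flush against $M$, still covers all of $[p_{J_2},M]$ at level at least $1$ — and that the final value of $c$ is $\min(M,\,p_{J_2}+\sum_{i\in J'}p_i)$. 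Hence the $J'$-jobs produce a resource profile $r_A$ with $r_A(\tau)\ge 1$ on a sub-interval of $[p_{J_2},M]$ of length $\min(\sum_{i\in J'}p_i,\,T)$. Running the identical procedure, independently, on $J_1\setminus J'$ gives a profile $r_B$ with $r_B(\tau)\ge 1$ on a sub-interval of $[p_{J_2},M]$ of length $\min(\sum_{i\in J_1\setminus J'}p_i,\,T)$.

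\emph{Conclusion.} In the combined schedule $x$, every time step of $(p_{J_2},M]$ carries only jobs of $J_1=J'\cup(J_1\setminus J')$, so its resource use is $r_A(\tau)+r_B(\tau)$. Using the pointwise inequality $\min(2,a+b)\ge\min(1,a)+\min(1,b)$, valid for all $a,b\ge 0$, one obtains
\[
\int_{p_{J_2}}^{M}\min(2,\,r_A+r_B)\ \ge\ \int_{p_{J_2}}^{M}\min(1,r_A)\ +\ \int_{p_{J_2}}^{M}\min(1,r_B)\ \ge\ \min\!\Big(\sum_{i\in J'}p_i,\,T\Big)+\min\!\Big(\sum_{i\in J_1\setminus J'}p_i,\,T\Big),
\]
the last two integrals being exactly the measures of the regions where $r_A\ge 1$ and $r_B\ge 1$. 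Adding the contribution $2p_{J_2}$ of $[0,p_{J_2}]$ yields the claimed bound on $F(x)$. All steps — packing $J_2$, then two linear passes of the greedy — run in $O(|J|)$ time, so $x$ is produced in polynomial time. I expect the only real obstacle to be the bookkeeping in Layers 2–3: stating the covering invariant precisely and verifying it survives the case $p_a>T$; the rest is routine.
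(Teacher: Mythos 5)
Your construction is the same as the paper's: pack $J_2$ back to back on $[0,p_{J_2}]$, then schedule $J'$ and $J_1\setminus J'$ each sequentially from $p_{J_2}$, pushing any job that would overrun $M$ to its latest start date, and count $2p_{J_2}$ plus the two covered lengths. Your pointer invariant and the pointwise bound $\min(2,a+b)\ge\min(1,a)+\min(1,b)$ just make explicit the bookkeeping the paper leaves implicit, so the proposal is correct and essentially identical in approach.
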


\begin{proof}
    The schedule $x$ is built as follows and as illustrated in Figure \ref{fig:L2_solution_construction}:
    \begin{itemize}
        \item[-] The jobs of $J_2$ are scheduled one after the other from $0$ to $p_{J_2}$.
        \item[-] The jobs of $J'$ are scheduled one after the other, starting from $p_{J_2}$, while the makespan deadline is not exceeded and at their latest possible date afterwards.
        \item[-] The jobs of $J_1 \setminus J'$ are scheduled as the jobs of $J'$ 
    \end{itemize} 
    \begin{figure*}[h!]
        \centering
        \scalebox{0.9}{\includegraphics{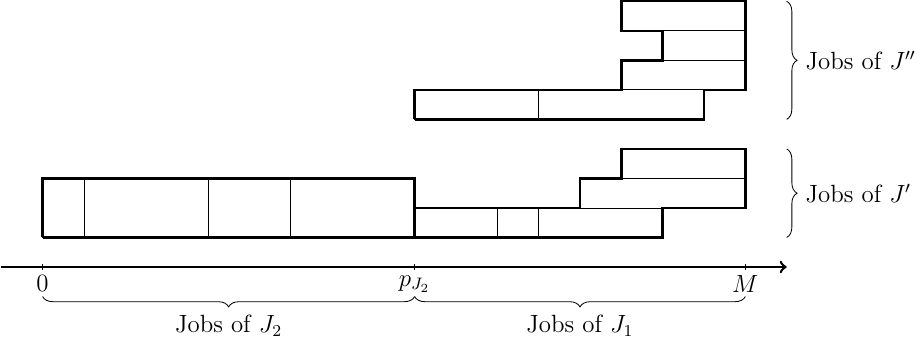}}
        \caption{Schedule construction}
        \label{fig:L2_solution_construction}
    \end{figure*}
    The schedule $x$ is such that:
    \begin{itemize}
        \item[-] The jobs of $J_2$ use exactly two units of resource on $[0, p_{J_2}]$;
        \item[-] The jobs of $J'$ use at least one unit of resource on $[p_{J_2}, p_{J_2} + \min(\sum_{i \in  J'} p_i, M - p_{J_2})]$;
        \item[-] The jobs of $J_1 \setminus J'$ use at least one unit of resource on $[p_{J_2}, p_{J_2} + \min(\sum_{i \in  J_1 \setminus J'} p_i, M - p_{J_2})]$.
    \end{itemize}
    This implies that $F(x) \geq 2 p_{J_2} + \min(\sum_{i \in J'} p_i, M - p_{J_2}) + \min(\sum_{i \in J_1 \setminus J'} p_i, M - p_{J_2})$.
\end{proof}

Having two jobs of $J_2$ that overlap in a schedule is never advantageous in terms of leveling objective.
This idea is formalized in the following lemma:

\begin{lemma} \label{lemma:L2_non_overlap_dominance}
    The schedules in which the jobs of $J_2$ do not overlap are dominant.
\end{lemma}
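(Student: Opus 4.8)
The plan is to show that any feasible schedule can be turned, without decreasing $F$, into one in which the jobs of $J_2$ are pairwise non‑overlapping. I would freeze the jobs of $J_1$ in place and only move the jobs of $J_2$. The key is a rewriting of the objective: letting $n_1(\tau)$ be the number of $J_1$‑jobs running at time step $\tau$ and $T_2$ the set of steps occupied by at least one $J_2$‑job, every step of $T_2$ already reaches the level and contributes exactly $L=2$ to $F$, so
\[
F(x) \;=\; \sum_{\tau=0}^{M-1}\min\bigl(2,n_1(\tau)\bigr)\;+\;\sum_{\tau\in T_2}\bigl(2-\min(2,n_1(\tau))\bigr),
\]
where the first sum depends only on the (fixed) positions of $J_1$ and each summand of the second sum is nonnegative. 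Hence, with $J_1$ frozen, $F$ is non‑decreasing whenever the set $T_2$ grows.

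Next I would enlarge $T_2$ by the classical list right‑shift. Sort the jobs of $J_2$ by non‑decreasing start time in $x$, say $x_{i_1}\le\cdots\le x_{i_k}$, and set $x'_{i_1}=x_{i_1}$ and $x'_{i_j}=\max\bigl(x_{i_j},\,x'_{i_{j-1}}+p_{i_{j-1}}\bigr)$ for $j\ge 2$; this makes the $J_2$‑jobs pairwise non‑overlapping. The main claim is that the occupied set only grows: if $\tau$ is covered by $[x_{i_j},x_{i_j}+p_{i_j})$ in $x$, then either $x'_{i_j}\le\tau$, and since $x'_{i_j}\ge x_{i_j}$ one still has $\tau<x_{i_j}+p_{i_j}\le x'_{i_j}+p_{i_j}$, so $\tau$ is covered by job $i_j$ in $x'$; or $x'_{i_j}>\tau$, which forces $x'_{i_j}=x'_{i_{j-1}}+p_{i_{j-1}}$, so job $i_{j-1}$ ends strictly after $\tau$ in $x'$, and iterating this backwards one reaches job $i_1$, which starts at $x_{i_1}\le x_{i_j}\le\tau$ and ends after $\tau$, hence covers $\tau$. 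Therefore $T_2\subseteq T'_2$, so $F(x')\ge F(x)$ while the $J_2$‑jobs no longer overlap. Since all data are integral, integer‑date schedules are dominant (as recalled in Section~\ref{section:problem_description_and_notations}), so applying this to an optimal schedule yields an optimal schedule with non‑overlapping $J_2$‑jobs.

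The one point that genuinely needs care — and, I expect, the main obstacle — is keeping the shifted layout inside the deadline: a blind right‑shift may push the last $J_2$‑job past $M$. Here I would use the standing assumption $p_{J_2}=\sum_{i\in J_2}p_i<M$, which guarantees that a non‑overlapping layout of $J_2$ in $[0,M]$ exists at all. If the right‑shift overflows, I would either run the symmetric left‑shift (which enjoys the same support‑superset property), or, when both passes hit a boundary, observe that each maximal interval (``component'') of the original $T_2$ can be recovered by laying its own jobs end‑to‑end inside a window of $[0,M]$ around it — the window has length at most $p_{J_2}\le M$ and can be slid to contain the component — and then resolving the few remaining collisions between these windows using the total slack $M-p_{J_2}>0$. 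Establishing this last step cleanly (covering $T_2$ by disjoint intervals of the prescribed lengths within $[0,M]$) is where the real work lies; everything else follows from the objective rewriting above.
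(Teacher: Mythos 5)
Your reduction of the lemma to a support statement is sound as far as it goes: with the $J_1$-jobs frozen, your rewriting of $F$ correctly shows that it suffices to re-place the $J_2$-jobs disjointly without losing time steps of $T_2$ (indeed only those steps of $T_2$ where fewer than two $J_1$-jobs run actually matter), and your verification that the right-shift only enlarges the occupied set is correct. The genuine gap is the one you flag yourself: feasibility with respect to the deadline. The right-shift can push jobs beyond $M$, and your first fallback, the symmetric left-shift, does not repair this in general: take $M=9$ and four $J_2$-jobs of length $2$, two scheduled at $[0,2)$ and two at $[7,9)$, so that $p_{J_2}=8<M$; the right-shift overflows past $M$, the left-shift overflows past $0$, yet a feasible covering placement exists, e.g.\ $[0,2),[2,4),[5,7),[7,9)$. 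Your second fallback --- covering each maximal interval of $T_2$ by a window of its own jobs and resolving the residual collisions using the slack $M-p_{J_2}>0$ --- is precisely the statement that has to be proved, and you explicitly leave it as ``where the real work lies''. Since the observation that overlapping $J_2$-jobs waste contribution is immediate, this rearrangement-within-$[0,M]$ step is essentially the entire content of the lemma, so as written the proof is incomplete.

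For comparison, the paper's own argument is a two-sentence exchange-style justification resting on the same two facts you use (a single $J_2$-job already saturates the level $L=2$, and $p_{J_2}<M$ leaves room to separate the $J_2$-jobs); your objective rewriting and the support-monotonicity of the right-shift are a more careful start than that, but to finish you must actually exhibit a disjoint placement of the $J_2$-jobs inside $[0,M]$ whose occupied set loses no useful step of $T_2$ (or relax the superset requirement to the weighted version noted above), and no such construction is established in your proposal. Note also the slip $p_{J_2}\le M$ where the standing assumption is the strict inequality $p_{J_2}<M$.
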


\begin{proof}
    One job of $J_2$ is sufficient to reach resource level $L = 2$ for its whole duration and overlapping with another job is a missed contribution to function $F$.
    Given the previous assumption that $p_{J_2} < M$, it is always possible to find an optimal schedule in which the jobs of $J_2$ do not overlap.
\end{proof}

The following lemma complements the inequality of Lemma \ref{lemma:L2_F_geq_subset_sum} by another inequality that is valid  when the jobs of $J_2$ do not overlap.

\begin{lemma} \label{lemma:L2_F_leq_subset_sum}
    Let $I = (J, p, c, M)$ be an instance of $L2 | C_{max} \leq M| F$ and let $x \in \mathbb{N}^J$ be a feasible schedule for $I$ such that the jobs of $J_2$ do not overlap.
    \\
    Then there exists a subset of jobs $J' \subseteq J_1$ such that:
    \begin{align*}
        F(x) \leq 2 p_{J_2} &+ \min(\sum_{i \in J'} p_i, M - p_{J_2}) \\
        &+ \min(\sum_{i \in J_1 \setminus J'} p_i, M - p_{J_2})
    \end{align*}
\end{lemma}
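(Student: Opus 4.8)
The plan is to take a feasible schedule $x$ in which the jobs of $J_2$ do not overlap and read off the set $J'$ directly from the geometry of $x$. Intuitively, the jobs of $J_2$ occupy a disjoint collection of time intervals of total length $p_{J_2}$, and on those intervals the resource level $2$ is already reached, so the contribution of those time steps to $F(x)$ is exactly $2p_{J_2}$; the jobs of $J_1$ can only contribute on the remaining part of $[0, M]$, which has length exactly $M - p_{J_2}$. So the first step is to decompose $F(x) = 2p_{J_2} + G$, where $G = \sum_{\tau} \min(2, r^{(1)}_\tau(x))$ is the contribution restricted to the ``free'' time steps (those not covered by a $J_2$-job), and $r^{(1)}_\tau(x)$ is the number of $J_1$-jobs running at step $\tau$. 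On each free time step at most two $J_1$-jobs count, so I want to split the $J_1$-jobs into two ``layers'' and let $J'$ be one layer.

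The key step is therefore to produce, from $x$, a partition $J_1 = J' \cup (J_1 \setminus J')$ such that on every free time step at most one job of $J'$ and at most one job of $J_1 \setminus J'$ is running, \emph{and} such that $\min(2, r^{(1)}_\tau(x)) \le [\text{a $J'$-job runs at }\tau] + [\text{a $(J_1\setminus J')$-job runs at }\tau]$ for every free $\tau$. Summing that inequality over all free time steps gives $G \le (\text{free steps used by }J') + (\text{free steps used by }J_1\setminus J')$, and each of those two quantities is at most $\min\bigl(\sum_{i \in J'} p_i,\, M - p_{J_2}\bigr)$ and $\min\bigl(\sum_{i \in J_1 \setminus J'} p_i,\, M - p_{J_2}\bigr)$ respectively — bounded by the total processing time of the layer because a job occupies at most $p_i$ steps, and bounded by $M - p_{J_2}$ because that is the total number of free steps. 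Combining with the decomposition yields exactly the claimed inequality.

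To build the partition I would process the free region left to right. Since the $J_2$-jobs do not overlap, the free region is a union of maximal intervals $B_1, \dots, B_s$; within each block the $J_1$-jobs present form an interval graph, but the relevant point is simpler: I want a proper $2$-coloring of the $J_1$-jobs that is valid wherever at least two of them overlap. A clean way is to note that if some free time step has three or more $J_1$-jobs, one can first invoke Lemma \ref{lemma:L2_non_overlap_dominance}-style reasoning or simply observe that only two of them count toward $F$ anyway, so it suffices to greedily assign, scanning time, each $J_1$-job to layer $1$ if layer $1$ is currently free and to layer $2$ otherwise, keeping the assignment only as far as it remains consistent; formally, I would argue that the two-layer structure can always be realized because a schedule where more than two $J_1$-jobs overlap can be modified (spreading the excess jobs into free space, which exists since we only need the \emph{inequality} direction, not preservation of $x$) — but cleaner still, one does not need to modify $x$: define $J'$ as the set of $J_1$-jobs $i$ for which, at the step where $i$ \emph{starts}, an even number of already-started-not-yet-finished $J_1$-jobs are present, or use any fixed rule making layers alternate. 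The main obstacle is precisely pinning down this partition rigorously so that the per-step inequality $\min(2, r^{(1)}_\tau) \le \mathbf{1}[J'\text{ active}] + \mathbf{1}[(J_1\setminus J')\text{ active}]$ holds at every free $\tau$ despite possible triple overlaps; I expect the intended argument sidesteps this by first passing (via Lemma \ref{lemma:L2_non_overlap_dominance} and a small additional spreading argument) to a schedule in which at most two $J_1$-jobs overlap at any free step, after which a standard interval-graph $2$-coloring — available exactly because each connected overlap component is a path of intervals — delivers $J'$, and the inequality becomes immediate.
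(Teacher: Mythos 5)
Your overall strategy is the same as the paper's: write $F(x) = 2p_{J_2} + G$ where $G$ is the contribution of the time steps not covered by a $J_2$-job, exhibit two ``layers'' of $J_1$-jobs so that each counted unit at a free step is witnessed by a distinct layer, and bound each layer's witnessed steps by $\min(\sum_i p_i,\, M - p_{J_2})$. But the crucial step --- actually constructing $J'$ --- is exactly where you stop: you list several candidate rules (greedy ``as far as it remains consistent'', a parity-of-active-jobs rule, a proper $2$-coloring after modifying $x$) without proving any of them gives the per-step inequality, and you acknowledge the obstacle of triple overlaps. Worse, your preferred fallback --- first passing to a schedule in which at most two $J_1$-jobs overlap at any free step, then $2$-coloring the interval graph --- is not generally available: when $\sum_{i \in J_1} p_i > 2(M - p_{J_2})$ (e.g.\ seven unit jobs, $M=2$, $J_2 = \emptyset$), every feasible schedule has a free step with three or more $J_1$-jobs, so no such transformed schedule exists, and that route collapses even though the inequality itself still holds.

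The point you are missing is that neither a partition of all of $J_1$ nor a proper coloring is needed; it suffices to select two \emph{disjoint} subsets $J', J'' \subseteq J_1$ with at most one active job from each at every free step. The paper does this with a single left-to-right scan of the time steps: at each free step, if no already-selected job of $J'$ is currently running, add to $J'$ an arbitrary unassigned running job (if any), and likewise for $J''$. Non-preemption (each job occupies one contiguous interval) guarantees the invariant that at most one job of $J'$ and at most one of $J''$ runs at any free step, while the greedy choice guarantees that a step with at least one (resp.\ two) running $J_1$-jobs lies in $T'$ (resp.\ $T' \cap T''$), where $T', T''$ are the free steps touched by $J', J''$. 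Hence $F(x) = 2p_{J_2} + |T'| + |T''| \leq 2p_{J_2} + \min(\sum_{i\in J'}p_i, M-p_{J_2}) + \min(\sum_{i\in J''}p_i, M-p_{J_2})$, and the stated bound follows from $J'' \subseteq J_1 \setminus J'$. Your decomposition and the final counting are right, but without a concrete, verified selection rule the proof is incomplete, and the one concrete repair you propose does not work in general.
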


\begin{proof}
    Let us build two disjoint subsets $J', J'' \subseteq J_1$ from $x$ using the following procedure:
    \begin{algorithmic}[1]
        \State $J' \gets \emptyset$ $J'' \gets \emptyset$
        \For{$\tau = 0, \dots, M-1$} 
            \If{a job of $J_2$ is processed on time step $\tau$ in $x$}
                \State \textbf{Continue}
            \EndIf
            \For{$\widetilde{J} = J', J''$}
                \If{a job of $\widetilde{J}$ is processed on time step $\tau$ in $x$}
                    \State \textbf{Continue}
                \ElsIf{a job of $i \in J \setminus (J' \cup J'')$ is processed on time step $\tau$ in $x$}
                    \State $\widetilde{J} = \widetilde{J} \cup \{i\}$
                \EndIf
            \EndFor
        \EndFor
        \State \textbf{Return} $J', J''$
    \end{algorithmic}
    Let us denote $T'$ (resp. $T''$) the set of time steps where no job of $J_2$ is processed and where at least one job of $J'$ (resp. $J''$) is processed.
    It is clear that $|T'| \leq \min(\sum_{i \in J'} p_i, M - p_{J_2})$ and $|T''| \leq \min(\sum_{i \in J''} p_i, M - p_{J_2})$.
    For any time step $\tau \in \{0, \dots, M-1\}$, three cases are possible:
    \begin{itemize}
        \item[-] $\tau \in T' \cap T''$: at least two jobs of $J_1$ are processed at time step $\tau$ in $x$, one being in $J'$ and another in $J''$;
        \item[-] $\tau \in (T' \setminus T'') \cup (T'' \setminus T')$: exactly one job of $J_1$ is processed at time step $\tau$ in $x$, it is either in $J'$ or in $J''$;
        \item[-] $\tau \notin T' \cup T''$: either a job of $J_2$ is processed at time step $\tau$ in $x$ or no job at all, otherwise it would have been added to either $J'$ or $J''$ in the procedure.
    \end{itemize}
    In other words, $F(x) = 2 p_{J_2} + |T'| + |T''|$ and as a consequence 
    $F(x) \leq 2 p_{J_2} + \min(\sum_{i \in J'} p_i, M - p_{J_2}) + \min(\sum_{i \in J''} p_i, M - p_{J_2})$.
    The final inequality
    $F(x) \leq 2 p_{J_2} + \min(\sum_{i \in J'} p_i, M - p_{J_2}) + \min(\sum_{i \in J_1 \setminus J'} p_i, M - p_{J_2})$
    is derived using that $J'' \subseteq J_1 \setminus J'$.
\end{proof}

Thanks to Lemma \ref{lemma:L2_non_overlap_dominance}, the inequality of Lemma \ref{lemma:L2_F_leq_subset_sum} actually applies to any schedule.
Combining it with the inequality of Lemma \ref{lemma:L2_F_geq_subset_sum} yields the following equality:
\begin{align}
    \max_{x} F(x) = 2 p_{J_2} + \max_{J' \subseteq J_1} &\left(\min(\sum_{i \in J'} p_i, M - p_{J_2}) \right.\nonumber\\
    &\left. + \min(\sum_{i \in J_1 \setminus J'} p_i, M - p_{J_2})\right) \label{eqation:subset_sum_equivalence}
\end{align}
It is then clear that if a subset $J' \subseteq J_1$ is found that maximizes $\min(\sum_{i \in J'} p_i, M - p_{J_2}) + \min(\sum_{i \in J_1 \setminus J'} p_i, M - p_{J_2})$, the maximum value of $F$ over feasible schedules can be deduced directly.
Furthermore, using Lemma \ref{lemma:L2_F_geq_subset_sum}, it is actually possible to build a schedule $x$ from $J'$ such that:
\begin{align*}
    F(x) \geq 2 p_{J_2} &+ \min(\sum_{i \in J'} p_i, M - p_{J_2}) \\
    &+ \min(\sum_{i \in J_1 \setminus J'} p_i, M - p_{J_2})
\end{align*}
Schedule $x$ thus maximizes $F$, it is an optimal solution of the problem.

A possible interpretation of Equation \ref{eqation:subset_sum_equivalence} is that solving $L2 | C_{max} \leq M| F$ is equivalent to solving a two-machine early work maximization problem on the jobs of $J_1$: 
subset $J'$ (resp. $J \setminus J'$) corresponds to the jobs scheduled on the first (resp. second) machine and the portion of jobs executed before date $M - p_{J_2}$ is maximized.
By complementarity of the early and late work criterions, the equivalence with the late work minimization version of the problem also holds.
Late work minimization on two machines with common deadline, namely $P2 | d_j = d | Y$ is studied in \cite{Chen2016} and shown to be solvable in pseudo-polynomial time.
Applying the algorithm proposed in \cite{Chen2016} leads to a solution method for $L2 | C_{max} \leq M| F$ running in $O(|J| + |J_1|p_{j_1}^2)$.
Yet, finding an optimal subset $J'$ can be done with a better time complexity using standard dynamic programming.
A table $T[k, b]_{0 \leq k \leq |J_1|,\; 0 \leq b \leq p_{J_1}}$ can be computed in time $O(|J_1|p_{J_1})$ where $T[k, b] = 1$ if there exists a subset of the first $k$ jobs of $J_1$ whose sum is $b$ and $T[k, b] = 0$ otherwise.
Then, for each $b$ such that $T[|J_1|, b] = 1$,  the values of $\min(b, M - p_{J_2}) + \min(p_{J_1} - b, M - p_{J_2})$ can be computed in order to find the maximum value.
This maximum value is optimal and the corresponding optimal subset $J' \subseteq J_1$ can be recovered with a standard backward technique in time $O(|J_1|)$.

This leads to the following proposition:

\begin{proposition} \label{prop:L2}
    Problem $L2 | C_{max} \leq M| F$ can be solved in $O(|J| + |J_1|p_{J_1})$ time.
\end{proposition}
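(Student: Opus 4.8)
The plan is to assemble the lemmas above into an algorithm and then check that every step fits in the stated time budget. First I would reduce to the structured case: replacing each $c_i$ by $\min(c_i,2)$ and discarding the jobs with $c_i = 0$ changes neither $F$ nor feasibility and costs $O(|J|)$, leaving only $J_1$ and $J_2$. If $p_{J_2} \geq M$, I would return the schedule that places the jobs of $J_2$ consecutively from $0$ as long as the deadline permits and at their latest feasible dates afterwards; this keeps at least two units of resource busy on all of $[0,M]$, so $F = 2M$ is attained, which is optimal, and this case costs $O(|J|)$. So from now on assume $p_{J_2} < M$.

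In that case, Lemma~\ref{lemma:L2_non_overlap_dominance} lets us restrict to schedules in which the jobs of $J_2$ do not overlap, so Lemma~\ref{lemma:L2_F_leq_subset_sum} applies to an optimal schedule; combined with Lemma~\ref{lemma:L2_F_geq_subset_sum} this gives Equation~\ref{eqation:subset_sum_equivalence}. Hence it suffices to compute a subset $J' \subseteq J_1$ maximizing
\[
g(J') = \min\Bigl(\textstyle\sum_{i \in J'} p_i,\, M - p_{J_2}\Bigr) + \min\Bigl(\textstyle\sum_{i \in J_1 \setminus J'} p_i,\, M - p_{J_2}\Bigr);
\]
then $\max_x F(x) = 2p_{J_2} + g(J')$, and feeding $J'$ into the construction of Lemma~\ref{lemma:L2_F_geq_subset_sum} yields in a further $O(|J|)$ time a schedule attaining this value, hence optimal.

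To find $J'$, observe that $g(J')$ depends on $J'$ only through the integer $b = \sum_{i \in J'} p_i \in \{0,\dots,p_{J_1}\}$. I would therefore run the textbook subset-sum dynamic program on the multiset $\{p_i : i \in J_1\}$: compute the boolean table $T[k,b]$ for $0 \le k \le |J_1|$ and $0 \le b \le p_{J_1}$, with $T[k,b] = 1$ iff some subset of the first $k$ jobs of $J_1$ has total processing time $b$, via $T[k,b] = T[k-1,b] \vee T[k-1, b - p_{i_k}]$. This fills $(|J_1|+1)(p_{J_1}+1)$ entries in $O(1)$ each, i.e.\ $O(|J_1| p_{J_1})$ time. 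I would then scan every $b$ with $T[|J_1|,b] = 1$, evaluate $\min(b, M-p_{J_2}) + \min(p_{J_1}-b, M-p_{J_2})$ in $O(1)$ each and keep the maximizer $b^\star$ ($O(p_{J_1})$ total), and recover a witnessing $J'$ by the usual backward walk through the table in $O(|J_1|)$. Adding the $O(|J|)$ preprocessing and the $O(|J|)$ reconstruction gives the announced $O(|J| + |J_1| p_{J_1})$ bound.

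I do not expect a genuine obstacle here: all the correctness content lives in Lemmas~\ref{lemma:L2_F_geq_subset_sum}--\ref{lemma:L2_F_leq_subset_sum} and Equation~\ref{eqation:subset_sum_equivalence}. The two things that need care are (i) verifying that the degenerate reductions ($c_i \mapsto \min(c_i,2)$ and the $p_{J_2} \ge M$ shortcut) truly preserve optimality, and (ii) the complexity accounting, where the binding term is the $O(|J_1| p_{J_1})$ table fill while everything else is linear in $|J|$. It is worth noting in passing that evaluating Equation~\ref{eqation:subset_sum_equivalence} is exactly the two-machine early-work (equivalently late-work) computation discussed after the equation, which is the source of the pseudo-polynomial dependence on $p_{J_1}$.
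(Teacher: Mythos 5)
Your proposal is correct and follows essentially the same route as the paper: it invokes Lemmas~\ref{lemma:L2_F_geq_subset_sum}--\ref{lemma:L2_F_leq_subset_sum} together with Equation~\ref{eqation:subset_sum_equivalence}, finds the optimal subset $J'$ via the standard subset-sum dynamic program in $O(|J_1|p_{J_1})$ time, and rebuilds the schedule from $J'$ in $O(|J|)$ time, which is exactly the paper's argument (with the preprocessing and $p_{J_2}\geq M$ cases the paper dispatches before stating the lemmas). No gaps.
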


\begin{proof}
    A subset $J' \subseteq J_1$ that maximizes $\min(\sum_{i \in J'} p_i, M') + \min(\sum_{i \in J_1 \setminus J'} p_i, M')$ can be built in time $O(|J_1|p_{J_1})$ using dynamic programming.
    \\
    Lemmas \ref{lemma:L2_F_geq_subset_sum} and \ref{lemma:L2_F_leq_subset_sum} guarantee that $J'$ satisfies $\max_{x} F(x)  = 2 p_{J_2} + \min(\sum_{i \in J'} p_i, M') + \min(\sum_{i \in J_1 \setminus J'} p_i, M')$.
    Moreover, lemma \ref{lemma:L2_F_geq_subset_sum} states that a schedule $x$ can be built from $J'$ that maximizes $F$.
    This can be done in $O(|J|)$ time.
    \\
    An optimal schedule for an instance of problem $L2 | C_{max} \leq M, c_i = 1| F$ can therefore be found with a total computation time of $O(|J| + |J_1|p_{J_1})$.
\end{proof}

\section{$NP$-hardness results} \label{section:NP_hardness_results}

Some of the algorithms presented in the previous sections highlight similarities between resource leveling and classical scheduling problems in terms of solution.
This section shows that hardness results easily transfer from machine scheduling problems to their leveling counterparts.
This will follow from the Lemma~\ref{lemma:general_reduction}; it intuitively shows that resource leveling problems can be seen as generalizations of classical machine scheduling problems.

\begin{lemma} \label{lemma:general_reduction}
    Let $I'$ be an instance of $Pk | \beta | C_{max}$ for some $k \in \mathbb{N}$ and some set of constraints $\beta$.
    Let $I'' = (I', M)$ be an instance of $Lk | \beta, C_{max} \leq M, c_i = 1| F$.
    \\
    The two following assertions are equivalent:
    \begin{itemize}
        \item[] (i) $I'$ has a solution of makespan at most $M$
        \item[] (ii) $I''$ has a feasible schedule $x$ with $F(x) = \sum_{i \in J} p_i$
    \end{itemize}
\end{lemma}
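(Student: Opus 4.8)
The plan is to exploit the fact that, since every job consumes exactly one unit of resource and the level equals $k$, the objective $F$ attains its maximum precisely when the schedule never requests more than $k$ units of resource at once — which is exactly the condition of being realisable on $k$ machines. First I would record the elementary bound: for any integer schedule $x$ of the jobs of $I''$ one has $\sum_{\tau} r_\tau(x) = \sum_{i \in J} p_i$, because each job $i$ occupies $p_i$ time steps and contributes $c_i = 1$ to the request at each of them. Hence $F(x) = \sum_{\tau} \min(k, r_\tau(x)) \le \sum_{\tau} r_\tau(x) = \sum_{i\in J} p_i$, with equality if and only if $r_\tau(x) \le k$ for every time step $\tau$.

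For the implication (i)$\Rightarrow$(ii): given a $k$-machine schedule $\sigma$ for $I'$ of makespan at most $M$ respecting $\beta$, I would simply read off its vector of starting times and call it $x$. Since the constraints collected in $\beta$ and the bound $C_{max} \le M$ depend only on the starting times (and, in the preemptive case, on which job each time unit belongs to, which is left unchanged), $x$ is feasible for $I''$. Moreover at each time step at most one job per machine runs, so $r_\tau(x) \le k$ for all $\tau$, and the bound above yields $F(x) = \sum_{i \in J} p_i$.

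For the implication (ii)$\Rightarrow$(i): from a feasible schedule $x$ for $I''$ with $F(x) = \sum_{i\in J} p_i$, the equality case of the bound gives $r_\tau(x) \le k$ at every time step, and it remains to distribute the jobs over $k$ machines without moving their starting times. I would do this by a left-to-right sweep over time steps: when a job starts, at most $k-1$ other jobs are simultaneously in process (the starting job being itself counted in $r_\tau(x) \le k$), so at least one of the $k$ machines is free and can be assigned to it; in the preemptive case the same sweep is applied to individual time units. The resulting $k$-machine schedule keeps the starting times of $x$, hence has makespan at most $M$, and still satisfies $\beta$, since those constraints are insensitive to the machine assignment.

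The only genuine subtlety — and the step I would treat most carefully — is this last allocation argument: one must justify that "at most $k$ jobs in process at any time" guarantees a valid assignment to $k$ machines, which is the standard fact that an interval graph of clique number at most $k$ is $k$-colourable, equivalently that the greedy sweep never exhausts the $k$ machines. Everything else is bookkeeping, in particular the observation that all the constraints appearing in $\beta$ in the applications (precedence, release and due dates, unit processing times, preemption) restrict only the timing of the jobs and not which processor executes them.
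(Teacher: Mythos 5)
Your proposal is correct and follows essentially the same route as the paper: the forward direction transfers the machine schedule's starting times directly, and the reverse direction builds the machine assignment by a greedy sweep over jobs in order of starting time, where the equality $F(x)=\sum_{i\in J}p_i$ (equivalently $r_\tau(x)\le k$ at every time step) guarantees a free machine at each assignment. The only cosmetic difference is that the paper assigns each job to the least-loaded machine and derives the contradiction from $k+1$ overlapping jobs, whereas you first isolate the equality case $r_\tau(x)\le k$ and invoke the standard interval-colouring argument -- the same idea in substance.
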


\begin{proof}
    First note that a solution of $Pk | \beta | C_{max}$ is described by a schedule $x$ and a partition $(J_1, J_2, \dots, J_k)$ of $J$ such that $J_l$ is the subset of non-overlapping jobs processed on the $l$-th machine.
    \\
    (i) $\Rightarrow$ (ii) Let then $(x, (J_1, J_2, \dots, J_k))$ be a feasible solution of $Pk | \beta | C_{max}$ with makespan at most $M$.
    It is clear that the schedule $x$ is a feasible schedule for $Lk | \beta, C_{max} \leq M, c_i = 1 | F$ since it satisfies the constraints of $\beta$ and has makespan at most $M$.
    Furthermore, the jobs can be processed on $k$ machines, which implies that the resource consumption never exceeds $L = k$ and gives that $F(x) = \sum_{i \in J} p_i$.
    As a consequence, $Lk | \beta, C_{max} \leq M, c_i = 1 | F$ has a feasible schedule $x$ with $F(x) = \sum_{i \in J} p_i$.
    \\
    (i) $\Leftarrow$ (ii) Let $x$ be a feasible schedule for $Lk | \beta, C_{max} \leq M, c_i = 1| F$ such that $F(x) = \sum_{i \in J} p_i$.
    Using the following procedure, let us build a partition $J_1, J_2, \dots, J_k$ of $J$ such that each $J_l$ is a subset of non-overlapping jobs.
    \begin{algorithmic}[1]
        \State $J_l \gets \emptyset$ $\forall l \in \{1, \dots, k\}$
        \State Sort the jobs of $J$ in increasing order of starting time in $x$
        \\ Let $i_1, i_2, \dots, i_{n}$ be the resulting list
        \For{$i= i_1, i_2, \dots, i_n$} \label{alg_line:machine_scheduling_procedure_main_loop}
            \State $l^* \gets argmin_{l \in \{1, \dots, k\}} \max_{j \in J_l} (x_j + p_j)$ \label{alg_line:machine_scheduling_procedure_subset_selection}
            \Comment{$l^*$ is the index of the least loaded machine}
            \State $J_{l^*} = J_{l^*} \cup \{i\}$ \label{alg_line:machine_scheduling_procedure_job_addition}
        \EndFor
        \State \textbf{Return} $(J_1, J_2, \dots, J_k)$
    \end{algorithmic}
    The claim is that when $J_{l^*}$ is selected on line \ref{alg_line:machine_scheduling_procedure_subset_selection}, $x_i \geq \max_{j \in J_{l^*}} (x_j + p_j)$.
    Indeed, if $x_i < \max_{j \in J_{l^*}} (x_j + p_j)$, by definition of $l^*$, all machines are loaded and at least $k+1$ jobs are being processed simultaneously on the non-empty interval $[x_i, \max_{j \in J_{l^*}} (x_j + p_j))$, implying that $F(x) < \sum_{i \in J} p_i$, which is a contradiction.
    The inequality $x_i \geq \max_{j \in J_{l^*}} (x_j + p_j)$ being verified at each step of the procedure guarantees that the jobs in each $J_l$ do not overlap.
    A feasible solution for $Pk | \beta | C_{max}$ with makespan at most $M$ is therefore given by $(x, (J_1, J_2, \dots, J_k))$.
\end{proof}

The equivalence of Lemma \ref{lemma:general_reduction} provides a general reduction from a problem of the form $Pk | \beta | C_{max}$ to a problem of the form $Lk | \beta, C_{max} \leq M, c_i = 1| F$.
The $NP$-hardness (resp. strong $NP$-hardness) of the former thus implies the $NP$-hardness of the latter.

\begin{corollary} \label{corollary:grouped_complexity_result}
    The problems listed below are strongly $NP$-hard:
    \begin{itemize}
        \item[-] $L1 | r_i, d_i, c_i = 1 | F$
        \item[-] $L2 | chains, C_{max} \leq M, c_i = 1 | F$
        \item[-] $L | C_{max} \leq M, c_i = 1 | F$ 
        \item[-] $L | prec, C_{max} \leq M, c_i = 1, p_i = 1 | F$ 
    \end{itemize}
    Problem $L2 | C_{max} \leq M, c_i = 1 | F$ is $NP$-hard.
\end{corollary}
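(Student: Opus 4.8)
The plan is to apply Lemma~\ref{lemma:general_reduction}, which turns each item of the list into a classical makespan-minimization problem whose complexity is already settled in the literature. Recall that the reduction underlying Lemma~\ref{lemma:general_reduction} only appends the parameter $M$ and fixes $c_i = 1$: it is a trivial, number-preserving transformation, so it carries \emph{strong} $NP$-hardness (resp.\ ordinary $NP$-hardness) over from $Pk | \beta | C_{max}$ to $Lk | \beta, C_{max} \leq M, c_i = 1 | F$. The decision version of the leveling problem asks whether a feasible schedule with $F(x) = \sum_{i \in J} p_i$ exists, and Lemma~\ref{lemma:general_reduction} identifies this question with the existence of a schedule of makespan at most $M$ for the underlying machine instance; hence any $NP$-hardness proof for the latter yields one for the former.

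I would then go through the list, naming the base problem for each entry. For $L2 | chains, C_{max} \leq M, c_i = 1 | F$ (take $k = 2$, $\beta = \{chains\}$) the base problem is $P2 | chains | C_{max}$, which is strongly $NP$-hard. For $L | C_{max} \leq M, c_i = 1 | F$ (let $k$, i.e.\ the resource level, be part of the input and $\beta = \emptyset$) it is $P || C_{max}$ with the number of machines part of the input, strongly $NP$-hard by the classical reduction from $3$-\textsc{Partition}. For $L | prec, C_{max} \leq M, c_i = 1, p_i = 1 | F$ (again $k$ part of the input, $\beta = \{prec, p_i = 1\}$) it is $P | prec, p_i = 1 | C_{max}$, shown $NP$-complete by Ullman; since all processing times are unit, the only numeric data ($k$ and $M$) are polynomially bounded, so ordinary and strong $NP$-hardness coincide here. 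Finally, the only ordinarily $NP$-hard entry, $L2 | C_{max} \leq M, c_i = 1 | F$ (take $k = 2$, $\beta = \emptyset$), comes from $P2 || C_{max}$, $NP$-hard by the reduction from \textsc{Partition}; obtaining only ordinary hardness is consistent with the pseudo-polynomial algorithm of Proposition~\ref{prop:L2}.

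The entry $L1 | r_i, d_i, c_i = 1 | F$ needs one extra remark, since it carries no explicit makespan bound. I would take $k = 1$, $\beta = \{r_i, d_i\}$ and $M = \max_{i \in J} d_i$: because the $d_i$ are hard deadlines, every feasible schedule already has makespan at most $M$, so $L1 | r_i, d_i, C_{max} \leq M, c_i = 1 | F$ coincides with $L1 | r_i, d_i, c_i = 1 | F$. Lemma~\ref{lemma:general_reduction} then gives a reduction from $1 | r_i, d_i | C_{max}$, i.e.\ from deciding feasibility of a single-machine instance with release dates and deadlines (equivalently $1 | r_i | L_{max}$), which is strongly $NP$-complete.

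The argument is essentially bookkeeping once Lemma~\ref{lemma:general_reduction} is available; the points that require care are (i) citing the correct base result in each case --- in particular confirming that $P2 | chains | C_{max}$ is strongly (and not merely ordinarily) $NP$-hard, and that the $p_i = 1$ restriction really does upgrade Ullman's hardness to the strong sense --- and (ii) the small observation that the $C_{max} \leq M$ constraint is vacuous, hence removable, in the $r_i, d_i$ setting. I expect (i), pinning down the exact literature statements, to be the only genuinely delicate step.
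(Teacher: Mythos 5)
Your proposal is correct and follows essentially the same route as the paper: invoke Lemma~\ref{lemma:general_reduction} and cite the (strong) $NP$-hardness of the corresponding machine scheduling problems $1 | r_i, d_i | C_{max}$, $P2 | chains | C_{max}$, $P \,||\, C_{max}$, $P | prec, p_i = 1 | C_{max}$ and $P2 \,||\, C_{max}$. Your added observations --- that the reduction is pseudo-polynomial and hence preserves strong $NP$-hardness, and that the $C_{max} \leq M$ bound is vacuous for the $r_i, d_i$ item by taking $M = \max_i d_i$ --- are correct details the paper leaves implicit.
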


\begin{proof}
    Problems $1 | r_i, d_i | C_{max}$, $P2 | chains | C_{max}$, $P | | C_{max}$ and  $P | prec, p_i = 1 | C_{max}$ are known to be strongly $NP$-hard (see respectively \citep{Garey1977}, \citep{DuLeung1991}, \citep{Garey1978} and \citep{Ullman1975}).
    Problem $P2 | | C_{max}$ is known to be $NP$-hard \citep{Lenstra1977}.
\end{proof}

\begin{remark}
    While $P2 | prec | C_{max}$ and $P | prec, p_i = 1 | C_{max}$ are known to be strongly $NP$-hard, which can be used to prove the complexity of their leveling counterparts, the complexity of $P3 | prec, p_i = 1 | C_{max}$ remains unknown.
    It is then not possible to get a complexity result for $L3 | prec, C_max \leq M, c_i = 1, p_i = 1 | F$ based on Lemma \ref{lemma:general_reduction}.
\end{remark}

The $NP$-hardness of $L1 | r_i, d_i, c_i = 1 | F$ can actually be extended to the case of $L = 2$ quite naturally: 
an instance of $L1 | r_i, d_i, c_i = 1 | F$ reduces to an instance of $L2 | r_i, d_i, c_i = 1 | F$ with an additional job $i$ verifying $p_i = M$, $r_i = 0$ and $d_i = M$ since job $i$ uses one unit of resource from $0$ to $M$.
Hence the following corollary:

\begin{corollary} \label{corollary:L2_ri_di_unit_ci}
    $L2 | r_i, d_i, c_i = 1 | F$ is strongly $NP$-hard.
\end{corollary}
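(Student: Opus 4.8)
The plan is to give a polynomial reduction from $L1 | r_i, d_i, c_i = 1 | F$, which is strongly $NP$-hard by Corollary~\ref{corollary:grouped_complexity_result}, to $L2 | r_i, d_i, c_i = 1 | F$. Given an instance $I'$ of the former with job set $J$, I would set $M = \max_{i \in J} d_i$ (no job can start before $0$ or end after $M$, so $M$ is a natural horizon) and build an instance $I''$ of $L2 | r_i, d_i, c_i = 1 | F$ keeping every job of $J$ unchanged and adding one new job $i_0$ with $p_{i_0} = M$, $r_{i_0} = 0$ and $d_{i_0} = M$. The key structural point is that $i_0$ has zero slack, $d_{i_0} - r_{i_0} = M = p_{i_0}$, so in any feasible schedule of $I''$ it occupies all of $[0, M]$ and contributes exactly one unit of resource at each time step $\tau \in \{0, \dots, M-1\}$.

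The second step is the objective bookkeeping. For a feasible schedule $x''$ of $I''$, let $x'$ be its restriction to $J$; then $x'$ is feasible for $I'$, and $C_{max}(x'') = M$ because of $i_0$. Writing $r_\tau(x')$ for the resource use of the original jobs, one has $r_\tau(x'') = 1 + r_\tau(x')$ for every $\tau \in \{0, \dots, M-1\}$, hence $\min(2, r_\tau(x'')) = 1 + \min(1, r_\tau(x'))$; summing over $\tau$ yields $F(x'') = M + F(x')$, using that $C_{max}(x') \leq M$ so the trailing empty time steps do not affect $F(x')$. Conversely, any feasible $x'$ for $I'$ extends to a feasible $x''$ for $I''$ by scheduling $i_0$ on $[0, M]$, with the same identity $F(x'') = M + F(x')$. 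Consequently the optimum of $I''$ equals $M$ plus the optimum of $I'$, and the decision question ``is there a schedule with $F \geq v$?'' for $I'$ is equivalent to the same question with threshold $M + v$ for $I''$.

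The hard part — really the only non-mechanical point — is to check that the reduction transfers \emph{strong} $NP$-hardness, not merely $NP$-hardness. Since $L1 | r_i, d_i, c_i = 1 | F$ is strongly $NP$-hard, it remains $NP$-hard on instances whose numerical data are polynomially bounded in the input length; on such instances $M = \max_{i} d_i$, and therefore the single added value $p_{i_0} = M$, is also polynomially bounded, while $|I''| = |I'| + O(1)$. Thus the construction is a polynomial-time reduction that does not inflate the magnitudes of the data, which is exactly what is needed, and the conclusion follows. I would also record explicitly, for cleanliness, the elementary fact used above that $F$ is insensitive to idle time steps beyond the makespan.
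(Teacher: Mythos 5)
Your reduction is exactly the one the paper uses: add a zero-slack job with $p_{i_0} = M$, $r_{i_0} = 0$, $d_{i_0} = M$ that occupies one resource unit throughout $[0, M]$, so the $L2$ instance behaves like the original $L1$ instance shifted by $M$ in objective value. Your extra bookkeeping ($F(x'') = M + F(x')$) and the explicit check that the numerical data remain polynomially bounded, hence that \emph{strong} $NP$-hardness transfers, are correct and simply spell out what the paper leaves implicit.
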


Another easy extension of Corollary \ref{corollary:grouped_complexity_result} can be made to prove that $L1 | chains | F$ is strongly $NP$-hard.
It requires to notice that an instance $I$ of $L1 | r_i, d_i, c_i = 1 | F$ reduces to an instance of $L1 | chains | F$ where each job in $I$ is given a predecessor and a successor that force it to be scheduled in its availability interval.
More precisely, for a job $i$ in $I$, two jobs $i'$ and $i''$ are added with $p_{i'} = r_i$, $p_{i''} = M - d_i$, $c_{i'} = c_{i''} = 0$ and precedence constraints $(i', i)$ and $(i, i'')$.
This gives the following corollary: 

\begin{corollary} \label{corollary:L1_chains}
    $L1 | chains | F$ is strongly $NP$-hard.
\end{corollary}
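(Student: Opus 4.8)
The plan is to reduce from $L1 | r_i, d_i, c_i = 1 | F$, which is strongly $NP$-hard by Corollary~\ref{corollary:grouped_complexity_result}, to $L1 | chains | F$. First I would take an instance $I$ of the former together with a target value $V$; without loss of generality every job $i \in J$ must be processed inside its availability interval $[r_i, d_i]$, the relevant horizon is $[0, M]$ with $M = \max_{i \in J} d_i$, and the question is whether some feasible schedule achieves $F \ge V$. From $I$ I would build an instance $I'$ of $L1 | chains | F$ (on the same horizon $M$) by keeping each job $i$ with $p_i$ and $c_i = 1$ and, for each such $i$, adding a head job $i'$ with $p_{i'} = r_i$, $c_{i'} = 0$ and a tail job $i''$ with $p_{i''} = M - d_i$, $c_{i''} = 0$, together with the precedence arcs $(i', i)$ and $(i, i'')$. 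This produces a disjoint union of length-three chains, so $I'$ is a legitimate $chains$ instance of size polynomial in $|I|$, and the target for $I'$ is the same $V$.

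Next I would establish that $I$ admits a feasible schedule with $F \ge V$ if and only if $I'$ does. For the ``only if'' direction, from a schedule $x$ of $I$ respecting all windows I place $i'$ on $[x_i - r_i,\, x_i]$ (legal since $x_i \ge r_i$), keep $i$ at $x_i$, and place $i''$ on $[x_i + p_i,\, x_i + p_i + (M - d_i)]$; the right endpoint is at most $d_i + (M - d_i) = M$, all arcs hold, and since $c_{i'} = c_{i''} = 0$ the head and tail jobs leave $r_\tau$ unchanged at every time step, so $F$ is preserved. For the ``if'' direction, in any feasible schedule $x'$ of $I'$ the chain $i' \to i \to i''$ forces $x_i \ge x_{i'} + p_{i'} \ge r_i$ and, using the makespan bound, $x_i + p_i \le x_{i''} \le M - p_{i''} = d_i$, so every original job lies in $[r_i, d_i]$; deleting the head and tail jobs then yields a feasible schedule of $I$ with exactly the same busy time steps, hence the same $F$-value. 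Since all numbers introduced ($r_i$, $p_i$, $M - d_i$) are polynomially bounded in the original ones, the reduction preserves strong $NP$-hardness.

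The step that requires the most care — and that I expect to be the real subtlety rather than a routine verification — is the ``if'' direction, specifically the fact that the tail jobs pin each original job from above \emph{only because the horizon $M$ is available}: with zero-consumption tails and no bound on the makespan, an $L1 | chains | F$ instance is optimised trivially by pushing the chains arbitrarily far apart in time, so the horizon $M$ (harmless, as it is already implicit in $I$ through $M = \max_i d_i$) must be carried along, and one must check that the chain gadget together with that horizon reproduces exactly the interval $[r_i, d_i]$ for each $i$. The remaining points — legality of the precedence arcs, invariance of $r_\tau$ under jobs of consumption $0$, and the polynomial blow-up of the instance — are straightforward.
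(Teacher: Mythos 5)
Your reduction is exactly the paper's: the same head job $i'$ with $p_{i'} = r_i$ and tail job $i''$ with $p_{i''} = M - d_i$, both with $c = 0$, linked by arcs $(i', i)$ and $(i, i'')$, starting from the strong $NP$-hardness of $L1 | r_i, d_i, c_i = 1 | F$ given by Corollary \ref{corollary:grouped_complexity_result}, and your verification of both directions is sound. Your observation that the makespan horizon $M$ must be carried along for the tail jobs to pin the jobs from above is correct and consistent with the paper, which lists this result under the $C_{max} \leq M$ column of its summary table even though the corollary's statement omits the deadline.
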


\section{Summary of results} \label{section:table}

Table \ref{tab:complexity} summarizes the complexity results that were obtained in this work.
Column headers are classical scheduling constraints: makespan deadline ($C_{max} \leq M$), precedence constraints ($prec$), possibly in the form of an $in \text{-} tree$ and release and due dates ($r_i, d_i$), which may be considered with or without preemption ($pmtn$).
Row headers are parameters restrictions: the value of $L$ may be restricted to one or two while unit values may be imposed for processing times and resource consumptions.
\\
The entries of the table include the complexity of the corresponding problem as well as the result proving it.
A natural question arising from this table is the complexity of problems for $L = 3$ or larger constant $L$, in particular $L3|prec,c_i=1, p_i=1|F$. 
This latter problem generalizes $P3|prec,p_i=1|C_{max}$, the complexity of which is a famous open problem in scheduling.

\begin{remark} \label{remark:trivial_result}
    Problem $L1 | C_{max} \leq M | F$ is solved to optimality by scheduling jobs with $c_i > 0$ one after another while $M$ is not exceeded and scheduling all remaining jobs at $M - p_i$.
    \\
    In the case of Problem $L | p_i = 1, c_i = 1, C_{max} \leq M | F$, all jobs are independent from each other and equivalent, with unit processing time and resource consumption.
    It is clear that scheduling one job per time step sequentially, going back to $0$ when $M$ is reached, solves the problem to optimality.
\end{remark}

\begin{table*}[h]
    \centering
    \small
    \caption{Complexity of resource leveling problems}
    \scalebox{0.75}{
    \begin{tabular}{|c|c||c|c|c|c|c|}
        \cline{3-7}
        \multicolumn{2}{c|}{}                       & \multirow{2}{*}{$C_{max} \leq M$}                                                                                                 & \multicolumn{2}{c|}{$prec, C_{max} \leq M$}                                                                                                                                           & \multicolumn{2}{c|}{$r_i, d_i$} \\ \cline{4-7}
        \multicolumn{2}{c|}{}                       &                                                                                                                                   & $prec = any$                                                                                              & $prec = in\text{-}tree$                                                          & $\text{no } pmtn$                                                                                 & $pmtn$                                                                                            \\ \cline{3-7} \noalign{\vskip\doublerulesep \vskip-\arrayrulewidth} \hline
        \multirow{2}{*}{$L1$}   & $\emptyset$       & \multirow{2}{*}{\makecell{in $P$\\(Remark \ref{remark:trivial_result})}}                                        & \multicolumn{2}{c|}{\makecell{str. $NP$-hard \\ (Cor. \ref{corollary:L1_chains})}}                                                                                                    & \multirow{2}{*}{\makecell{str. $NP$-hard \\ (Cor. \ref{corollary:grouped_complexity_result})}}    & \multirow{2}{*}{\makecell{in $P$ \\ (Prop. \ref{prop:premptive_interval_job_scheduling})}}        \\ \cline{2-2} \cline{4-5}
                                & $c_i > 0$         &                                                                                                                                   & \multicolumn{2}{c|}{\makecell{in $P$ \\ (Prop. \ref{prop:unit_resource_leveling_correctness})}}                                                                                       &                                                                                                   &                                                                                                   \\ \hline \hline
        \multirow{3}{*}{$L2$}   & $\emptyset$       & \multirow{2}{*}{\makecell{in $P_{pseudo}$ (Prop. \ref{prop:L2})\\$NP$-hard (Cor. \ref{corollary:grouped_complexity_result})}}     & \multicolumn{2}{c|}{\multirow{2}{*}{\makecell{str. $NP$-hard \\ (Cor. \ref{corollary:grouped_complexity_result})}}}                                                                   & \multirow{2}{*}{\makecell{str. $NP$-hard \\ (Cor. \ref{corollary:L2_ri_di_unit_ci})}}             & \multirow{2}{*}{\makecell{in $P$ \\ (Prop. \ref{prop:L2_preemptive_interval_job_scheduling})}}    \\ \cline{2-2}
                                & $c_i = 1$         &                                                                                                                                   & \multicolumn{2}{c|}{}                                                                                                                                                                 &                                                                                                   &                                                                                                   \\ \cline{2-7}
                                & $p_i = c_i = 1$   & \makecell{in $P$\\(Remark \ref{remark:trivial_result})}                                                        & \multicolumn{2}{c|}{\makecell{in $P$ \\ (Th. \ref{theorem:L2_UET_prec})}}                                                                                                             & \multicolumn{2}{c|}{\makecell{in $P$ \\ (Prop. \ref{prop:UET_interval_job_scheduling})}}                                                                                                              \\ \hline \hline
        \multirow{3}{*}{$L$}    & $\emptyset$       & \multicolumn{4}{c|}{\multirow{2}{*}{\makecell{str. $NP$-hard \\ (Cor. \ref{corollary:grouped_complexity_result})}}}                                                                                                                                                                                                                                                                                                           & \makecell{str. $NP$-hard \\ \footnotesize{\citep{Gyorgyi2020}}}                                                 \\ \cline{2-2} \cline{7-7}
                                & $c_i = 1$         & \multicolumn{4}{c|}{}                                                                                                                                                                                                                                                                                                                                                                                                         & \makecell{in $P$ \\ (Prop. \ref{prop:premptive_interval_job_scheduling})}                         \\ \cline{2-7}
                                & $p_i = c_i = 1$   & \makecell{in $P$\\(Remark \ref{remark:trivial_result})}                                                         & \makecell{str. $NP$-hard \\ (Cor. \ref{corollary:grouped_complexity_result})}    ,                       & \makecell{in $P$ \\ (Prop. \ref{prop:optimality_adapted_hu_algorithm})}    & \multicolumn{2}{c|}{\makecell{in $P$ \\ (Prop. \ref{prop:UET_interval_job_scheduling})}}                                                                                                              \\ \hline         
    \end{tabular}
    }
    \label{tab:complexity}
\end{table*}

\section{Conclusion} \label{section:conclusion}

A polynomial-time algorithm solving a resource leveling counterpart of a well-known two-processor scheduling problem is proposed in this paper as a main result.
As complementary results, various related resource leveling problems are studied, some of which are shown to be solvable in polynomial time as well.
Close links are highlighted between classical machine scheduling resource leveling.
Both solving methods and $NP$-hardness results are mirrored from one field to the other.
While the translation of negative complexity results is rather straightforward, the main contribution of this work lies on the design of algorithms.
Besides the core problem, an interesting result in that regard is the adaptation of Hu's algorithm to solve the case of unit processing times with unit resource consumptions and an in-tree precedence graph.

Although polynomial and pseudo-polynomial algorithms are given, the problem variants that they can solve remain special cases.
An interesting perspective would be to consider more advanced constraints and less restrictive parameters, to meet practical requirements.
In particular, some problem variants are polynomial for unit processing times but become strongly $NP$-hard in the general case.
A possible extension of this work would then be to find good approximation algorithms for some of the more realistic resource leveling problems.
\\
The objective function that was chosen is certainly relevant to model resource overload costs, yet it does not prevent peaks in resource use, which are often ill-advised in practice.
Studying the resource investment objective, which minimizes the highest amount of resource consumption, could therefore be another perspective for future research.

\begin{table*}[h]
    \centering
    \small
    \caption{Table of notations}
    \begin{tabular}{l l}
        \multicolumn{2}{l}{General}\\ 
        $J$                                 & Set of jobs\\
        $p_i$                               & Processing time of job $i$\\
        $c_i$                               & Resource consumption of job $i$\\
        $G$                                 & Precedence graph\\
        $\mathcal{A}$                       & Set of arcs in $G$\\
        $x$                                 & Schedule\\
        $x_i$                               & Starting time of job $i$ in schedule $x$\\
        $L$                                 & Resource level\\
        $M$                                 & Deadline on makespan\\
        $F$                                 & Objective function\\
        $P$                                 & Critical path in $G$\\
        \multicolumn{2}{l}{Scheduling constraints}\\ 
        $C_{max}$                           & Makespan\\
        $prec$                              & Precedence constraints\\
        $in \text{-} tree$                  & In-tree precedence graph\\
        $pmtn$                              & Preemption allowed\\
        $r_i$                               & Release date of job $i$\\
        $d_i$                               & Due date of job $i$\\
        \multicolumn{2}{l}{Auxiliary graphs}\\
        $\widetilde{G}$                     & Independence graph\\
        $E$                                 & Set of edges in $\widetilde{G}$\\
        $m^*$                               & Size of a maximum matching in $\widetilde{G}$\\
        $\widetilde{G}_P$                   & Bipartite independence graph for critical path $P$\\
        $E_P$                               & Set of edges in $\widetilde{G}_P$\\
        $m^*_P$                             & Size of a maximum matching in $\widetilde{G}_P$\\
    \end{tabular}
    \label{tab:notations}
\end{table*}

\bibliography{references}

\end{document}